
\documentclass{article}

\usepackage{microtype}
\usepackage{graphicx}
\usepackage{subfig}
\usepackage{booktabs} 
\usepackage{multirow}
\usepackage{enumitem}
\usepackage{tabularray}

\usepackage{hyperref}



\usepackage[accepted]{icml2023}

\usepackage{amsmath}
\usepackage{amssymb}
\usepackage{mathtools}
\usepackage{amsthm}

\usepackage[capitalize,noabbrev]{cleveref}

\theoremstyle{plain}
\newtheorem{theorem}{Theorem}[section]

\newtheorem{lemma}[theorem]{Lemma}
\newtheorem{corollary}[theorem]{Corollary}
\theoremstyle{definition}

\newtheorem{assumption}[theorem]{Assumption}
\theoremstyle{remark}

\allowdisplaybreaks
\usepackage[disable,textsize=tiny]{todonotes}

\icmltitlerunning{Bootstrap in High Dimension with Low Computation}

\begin{document}

\twocolumn[
\icmltitle{Bootstrap in High Dimension with Low Computation}



\icmlsetsymbol{equal}{*}

\begin{icmlauthorlist}
\icmlauthor{Henry Lam}{col}
\icmlauthor{Zhenyuan Liu}{col}
\end{icmlauthorlist}

\icmlaffiliation{col}{Department of Industrial Engineering and Operations Research, Columbia University, New York, NY, USA}

\icmlcorrespondingauthor{Henry Lam}{khl2114@columbia.edu}

\icmlkeywords{bootstrap, high-dimensional, computation effort, finite-sample}

\vskip 0.3in
]



\printAffiliationsAndNotice{}  

\begin{abstract}
The bootstrap is a popular data-driven method to quantify statistical uncertainty, but for modern high-dimensional problems, it could suffer from huge computational costs due to the need to repeatedly generate resamples and refit models. We study the use of bootstraps in high-dimensional environments with a small number of resamples. In particular, we show that with a recent ``cheap" bootstrap perspective, using a number of resamples as small as one could attain valid coverage even when the dimension grows closely with the sample size, thus strongly supporting the implementability of the bootstrap for large-scale problems. We validate our theoretical results and compare the performance of our approach with other benchmarks via a range of experiments.

\end{abstract}

\section{Introduction\label{sec:introduction}}


 
The bootstrap is a widely used method for statistical uncertainty quantification, notably confidence interval construction \cite{efron1994introduction,davison1997bootstrap,shao2012jackknife,hall1988bootstrap}. Its main idea is to resample data and use the distribution of resample estimates to approximate a sampling distribution. Typically, this approximation requires running many Monte Carlo replications to generate the resamples and refit models. This is affordable for classical problems, but for modern large-scale problems, this repeated fitting could impose tremendous computation concerns. This issue motivates an array of recent works to curb the computation effort, mostly through a ``subsampling'' perspective that fits models on smaller data sets in the bootstrap process, e.g., \citet{kleiner2012big,lu2020uncertainty,giordano2019swiss,schulam2019can,alaa2020discriminative}.

In contrast to subsampling, we consider in this paper the reduction in bootstrap computation cost by using a fewer number of Monte Carlo replications or resamples. In particular, we target the following question: \emph{Is it possible to run a valid bootstrap for high-dimensional problems with very little Monte Carlo computation?} While conventional bootstraps rely heavily on adequate resamples, recent work \cite{lam2022cheap,lam2022conference} shows that it is possible to reduce the resampling effort dramatically, even down to one Monte Carlo replication. The rough idea of this  ``cheap'' bootstrap is to exploit the approximate independence among the original and resample estimates,  instead of their distributional closeness utilized in the conventional bootstraps. We will leverage this recent idea in this paper. However, since \citet{lam2022cheap,lam2022conference} is based purely on asymptotic derivation, giving an affirmative answer to the above question requires the study of new finite-sample bounds to draw understanding on bootstrap behaviors jointly in terms of problem dimension $p$, sample size $n$ and number of resamples $B$.

To this end, our main theoretical contribution in this paper is three-fold:

\textbf{General Finite-Sample Bootstrap Bounds:} We derive general finite-sample bounds on the coverage error of confidence intervals aggregated from $B$ resample estimates, where $B$ is small using the ``cheap" bootstrap idea, and $B=\infty$ for traditional quantile-based bootstrap methods including the basic and percentile bootstraps (e.g., \citet{davison1997bootstrap} Section 5.2-5.3). Our bounds reveal that, given the same primitives on the approximate normality of the original and each resample estimate, the cheap bootstrap with fixed small $B$ achieves similar coverage error bounds as conventional bootstraps using infinite resamples. This also simultaneously recovers the main result in \citet{lam2022cheap}, but stronger in terms of the finite-sample guarantee.

\textbf{Bootstrap Bounds on Function-of-Mean Models Explicit in $p$, $n$ and $B$:} We specialize our general bounds above to the function-of-mean model that is customary in the high-dimensional Berry-Esseen and central limit theorem (CLT) literature \cite{pinelis2016optimal,zhilova2020nonclassical}. In particular, our bounds explicit on $p$, $n$ and $B$ conclude vanishing coverage errors for the cheap bootstrap when $p=o(n)$, for any given $B\geq1$. Note that the function-of-mean model does not capture all interesting problems, but it has been commonly used -- and in fact, appears the only model used in deriving finite-sample CLT errors for technicality reasons. Our bounds shed light that, at least for this wide class of models, using a small number of resamples can achieve a good coverage even in a dimension $p$ growing closely with $n$.

\textbf{Bootstrap Bounds on Linear Models Independent of $p$:} We further specialize our bounds to linear functions with weaker tail conditions, which have orders independent of $p$ under certain conditions on the $L_p$ norm or Orlicz norm of the linearly scaled random variable.

In addition to theoretical bounds, we investigate the empirical performances of bootstraps using few resamples on large-scale problems, including high-dimensional linear regression, high-dimensional logistic regression, computational simulation modeling, and a real-world data set RCV1-v2 \cite{lewis2004rcv1}. To give a sense of our comparisons that support using the cheap bootstrap in high dimension, here is a general conclusion observed in our experiments: Figure \ref{figure_linear_vs_B}(a) shows the coverage probabilities of $95\%$-level confidence intervals for three regression coefficients with corresponding true values $0,2,-1$ in a 9000-dimensional linear regression (in Section \ref{sec:num}). The cheap bootstrap coverage probabilities are close to the nominal level $95\%$ even with one resample, but the basic and percentile bootstraps only attain around $80\%$ coverage with ten resamples. In this example, one Monte Carlo replication to obtain each resample estimate takes around 4 minutes in the virtual machine e2-highmem-2 in Google Cloud Platform. Therefore, the cheap bootstrap only requires 4 minutes to obtain a statistically valid interval, but the standard bootstrap methods are still far from the nominal coverage even after more than a 40-minute run. Figure \ref{figure_linear_vs_B}(b) shows the average interval widths. This reveals the price of a wider interval for the cheap bootstrap when the Monte Carlo budget is very small, but considering the low coverages in the other two methods and the fast decay of the cheap bootstrap width for the first few number of resamples, such a price appears secondary. 


\begin{figure*}[htbp]
\centering

\subfloat[Coverage probability]{\includegraphics[width=0.45\textwidth]{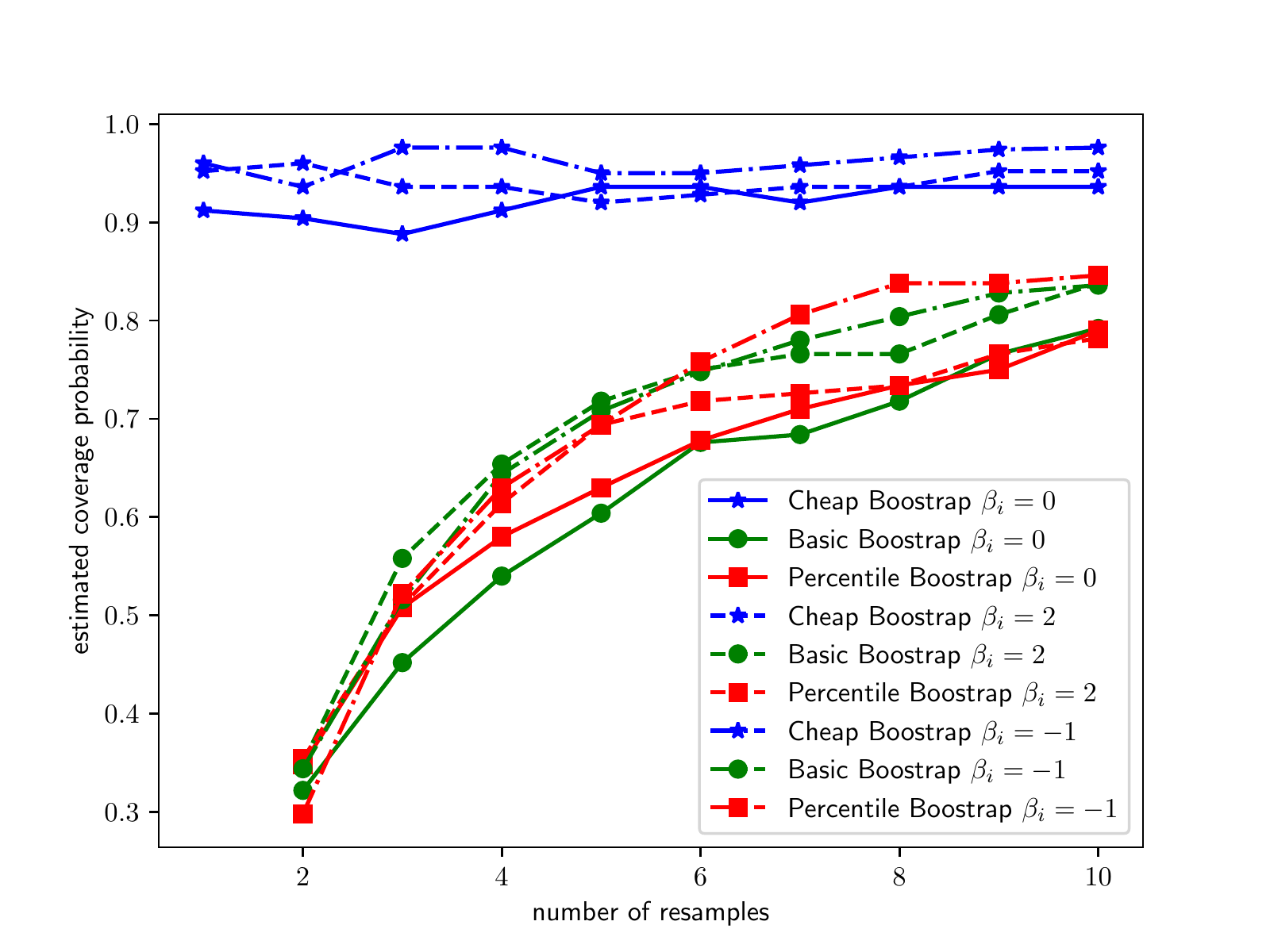}}
\subfloat[Confidence interval width]{\includegraphics[width=0.45\textwidth]{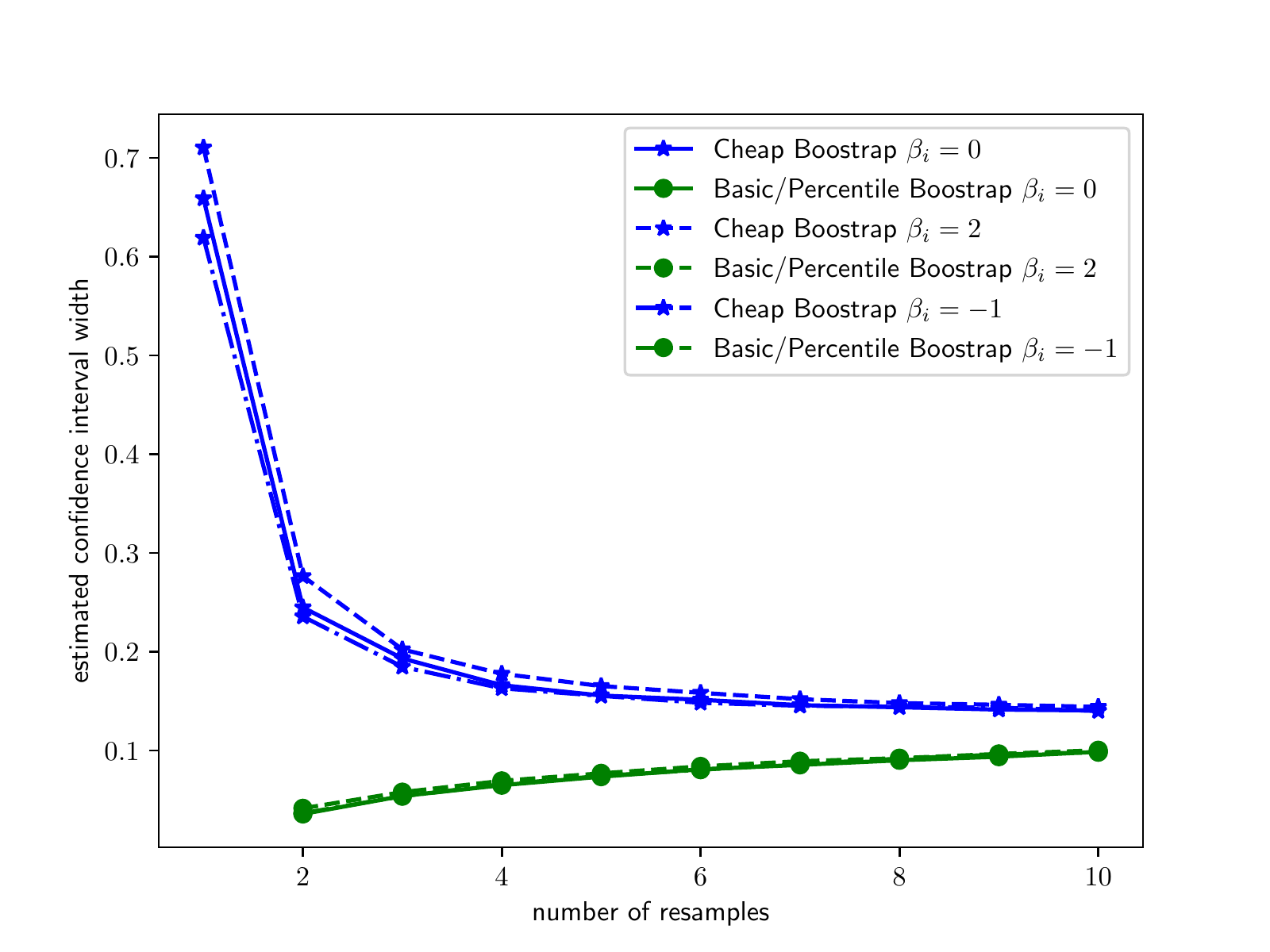}}

\caption{Empirical coverage probabilities and confidence interval widths for different numbers of resamples in a linear regression.}%
\label{figure_linear_vs_B}

\end{figure*}



\textit{Notation.} For a random vector $X$, we write $X^{k}$ as the tensor power $X^{\otimes k}$. The vector norm is taken as the usual Euclidean norm. The matrix and tensor norms are taken as the operator norm. For a square matrix $M$, $tr(M)$ denotes the trace of $M$. $I_{p\times p}$ denotes the identity matrix in $\mathbb{R}^{p\times p}$ and $\mathbf{1}_{p}$ denotes the vector in $\mathbb{R}^{p}$ whose components are all $1$. 
$\Phi$ denotes the cumulative distribution function of the standard normal. $C^2(\mathbb{R}^p)$ denotes the set of twice continuously differentiable functions on $\mathbb{R}^p$. Throughout the whole paper, we use $C>0$ (without subscripts) to denote a universal constant which may vary each time it appears. We use $C_{1},C_{2},\ldots$ to denote constants that could depend on other parameters and we will clarify their dependence when using them.


 

\section{Background on Bootstrap Methods\label{sec:method}}

We briefly review standard bootstrap methods and from there the recent cheap bootstrap. Suppose we are interested in estimating a target statistical quantity $\psi:=\psi(P_{X})$ where $\psi(\cdot):\mathcal{P}\mapsto\mathbb{R}$ is a functional defined on the probability measure space $\mathcal{P}$. Given i.i.d. data $X_{1},\ldots,X_{n}\in\mathbb{R}^{p}$ following the unknown distribution $P_{X}$, we denote the empirical distribution as $\hat{P}_{X,n}(\cdot):=(1/n)\sum_{i=1}^nI(X_i\in\cdot)$. A natural point estimator is $\hat{\psi}_{n}:=\psi(\hat{P}_{X,n})$.

To construct a confidence interval from $\hat\psi_n$, a typical beginning point is the distribution of $\hat{\psi}_{n}-\psi$ from which we can pivotize. As this distribution is unknown in general, the bootstrap idea is to approximate it using the resample counterpart, as if the empirical distribution was the true distribution. More concretely, conditional on $X_{1},\ldots,X_{n}$, we repeatedly, say for $B$ times, resample (i.e., sample with replacement) the data $n$ times to obtain resamples $\{X_{1}^{\ast b},\ldots,X_{n}^{\ast b}\},b=1,\ldots,B$. Denoting $\hat{P}_{X,n}^{\ast b}$ as the resample empirical distributions, we construct $B$ resample estimates $\hat\psi_{n}^{\ast b}:=\psi(\hat{P}_{X,n}^{\ast b})$. Then we use the $\alpha/2$ and $(1-\alpha/2)$-th quantiles of $\hat{\psi}_{n}^{\ast b}-\hat{\psi}_{n}$, called $q_{\alpha/2}$ and $q_{1-\alpha/2}$, to construct $[\hat{\psi}_{n}-q_{1-\alpha/2},\hat{\psi}_{n}-q_{\alpha/2}]$ as a $(1-\alpha)$-level confidence interval, which is known as the basic bootstrap (\citet{davison1997bootstrap} Section 5.2). Alternatively, we could also use the $\alpha/2$ and $(1-\alpha/2)$-th quantiles of $\hat{\psi}_{n}^{\ast b}$, say $q_{\alpha/2}'$ and $q_{1-\alpha/2}'$, to form $[q_{\alpha/2}',q_{1-\alpha/2}']$, which is known as the percentile bootstrap (\citet{davison1997bootstrap} Section 5.3). There are numerous other variants in the literature, such as studentization \cite{hall1988theoretical}, calibration or iterated bootstrap \cite{hall1986bootstrap,beran1987prepivoting}, and bias correction and acceleration \cite{efron1987better,diciccio1996bootstrap,diciccio1987bootstrap}, with the general goal of obtaining more accurate coverage.


All the above methods rely on the principle that $\hat{\psi}_{n}-\psi$ and $\hat{\psi}_{n}^{\ast}-\hat{\psi}_{n}$ (conditional on $X_{1},\ldots,X_{n}$) are close in distribution. Typically, this means that, with a $\sqrt n$-scaling, they both converge to the same normal distribution. In contrast, the cheap bootstrap proposed in \citet{lam2022cheap,lam2022conference} constructs a $(1-\alpha)$-level confidence interval via
\begin{equation}
\left[\hat{\psi}_{n}-t_{B,1-\alpha/2}S_{n,B},\hat{\psi}_{n}+t_{B,1-\alpha/2}S_{n,B}\right],\label{cheap CI}
\end{equation}
where $S_{n,B}^{2}=(1/B)\sum_{b=1}^{B}(\hat\psi_{n}^{\ast b}-\hat{\psi}_{n})^{2}$, and $t_{B,1-\alpha/2}$ is the $(1-\alpha/2)$-th quantile of $t_B$, the $t$-distribution with degree of freedom $B$.
The quantity $S_{n,B}^2$ resembles the sample variance of the resample estimates $\hat\psi_n^{*b}$'s, in the sense that as $B\to\infty$, $S_{n,B}^2$ approaches the bootstrap variance $Var_*(\hat\psi_n^*)$ (where $Var_*(\cdot)$ denotes the variance of a resample conditional on the data). In this way, \eqref{cheap CI} reduces to the normality interval with a ``plug-in'' estimator of the standard error term when $B$ and $n$ are both large. However, intriguingly, \emph{$B$ does not need to be large}, and $S_{n,B}^2$ is not necessarily close to the bootstrap variance. Instead, the idea is to consider the joint distribution of $\hat{\psi}_{n}-\psi,\hat\psi_{n}^{\ast1}-\hat{\psi}_{n},\ldots,\hat\psi_{n}^{\ast B}-\hat{\psi}_{n}$ that is argued to be asymptotically independent as $n\to\infty$ and $B$ fixed, which subsequently allows the construction of a pivotal $t$-statistic and gives rise to \eqref{cheap CI} for fixed $B$. A more detailed explanation on the cheap bootstrap in the low-dimensional case (i.e., $p$ is fixed) is given in Appendix \ref{sec:more_on_CB}.

\section{High-Dimensional Bootstrap Bounds\label{sec:main}}
As explained in the introduction, we aim to study the coverage performance of bootstraps in high-dimensional problems, focusing on the cheap bootstrap approach that allows the use of small computation effort.
We describe our results at three levels, first under general assumptions (Section \ref{sec:general bounds}), then more explicit bounds under the function-of-mean model and sub-Gaussianity of $X$ (Section \ref{sec:nonlinear}), finally bounds for a linear function under weaker tail assumptions on $X$ (Section \ref{sec:linear}). 

\subsection{General Finite-Sample Bounds\label{sec:general bounds}}



We have the following finite-sample bound for the cheap bootstrap: 

\begin{theorem}
\label{general_CB_coverage}Suppose we have the finite-sample accuracy for the estimator $\hat{\psi}_{n}$
\begin{equation}
\sup_{x\in\mathbb{R}}\left\vert P(\sqrt{n}(\hat{\psi}_{n}-\psi)\leq x)-\Phi(x/\sigma)  \right\vert \leq\mathcal{E}_{1},\label{general_Berry_Esseen}%
\end{equation}
and with probability at least $1-\beta$ we have the finite-sample accuracy for the bootstrap estimator $\hat{\psi}_{n}^{\ast}$%
\begin{equation}
\sup_{x\in\mathbb{R}}\left\vert P^{\ast}(\sqrt{n}(\hat{\psi}_{n}^{\ast}-\hat{\psi}_{n})\leq x)-\Phi(x/\sigma) \right\vert\leq\mathcal{E}_{2},\label{general_bootstrap_CLT}%
\end{equation}
where $\sigma>0$, $\mathcal{E}_{1}$ and $\mathcal{E}_{2}$ are deterministic quantities, and $P^*$ denotes the probability on a resample conditional on the data. Then the coverage error of \eqref{cheap CI} satisfies, for any $B\geq1$,%
\begin{align}
&\left\vert P(|\psi-\hat{\psi}_{n}|\leq t_{B,1-\alpha/2}S_{n,B})-(1-\alpha)\right\vert \nonumber\\
&\leq 2\mathcal{E}_{1}+2B\mathcal{E}_{2}+\beta.\label{error CB}
\end{align}
\end{theorem}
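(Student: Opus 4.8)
The plan is to compare the coverage event against the idealized situation in which $\sqrt n(\hat\psi_n-\psi)$ together with the resample errors $\sqrt n(\hat\psi_n^{\ast b}-\hat\psi_n)$ are jointly i.i.d.\ $N(0,\sigma^2)$, in which case the studentized statistic behind \eqref{cheap CI} is \emph{exactly} distributed as $t_B$ and the coverage is exactly $1-\alpha$. Put $U:=\sqrt n(\hat\psi_n-\psi)/\sigma$ and $V_b:=\sqrt n(\hat\psi_n^{\ast b}-\hat\psi_n)/\sigma$; since $\sigma$ cancels, the coverage event is $A:=\{\,U^2\le (t^2/B)\sum_{b=1}^B V_b^2\,\}$ with $t:=t_{B,1-\alpha/2}$. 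I would first record the pivotal identity: if $Z_0,Z_1,\dots,Z_B$ are i.i.d.\ $N(0,1)$, then $Z_0/\sqrt{(1/B)\sum_{b=1}^B Z_b^2}\sim t_B$, so with $h(u):=P((t^2/B)\sum_{b=1}^B Z_b^2\ge u^2)=P(\chi_B^2\ge Bu^2/t^2)$, which is an even function of $u$ taking values in $[0,1]$ and of total variation $2$ on $\mathbb R$, one has $E[h(Z_0)]=P(|t_B|\le t)=1-\alpha$.

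The heart of the argument is conditioning on the data. Given the data, $U$ is a constant while $V_1,\dots,V_B$ are i.i.d.\ under $P^\ast$, so $P(A)=E[\tilde h(U)]$ with $\tilde h(u):=P^\ast((t^2/B)\sum_{b=1}^B V_b^2\ge u^2)$ and the outer expectation over the data. On the event $\mathcal G$ on which the bootstrap accuracy \eqref{general_bootstrap_CLT} holds, which has probability at least $1-\beta$, each $V_b$ has conditional CDF within Kolmogorov distance $\mathcal E_2$ of $\Phi$; since $P(W^2\le x)=F_W(\sqrt x)-F_W((-\sqrt x)^-)$ for $x\ge0$ and $\Phi$ is continuous, this upgrades to $d_K(V_b^2,Z_b^2)\le 2\mathcal E_2$ on $\mathcal G$, where $d_K$ denotes the Kolmogorov distance. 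I would then invoke sub-additivity of $d_K$ under independent convolution --- proved by the standard one-summand-at-a-time swap, conditioning at each step on the remaining, independent summands --- together with scale invariance of $d_K$, to conclude that on $\mathcal G$ one has $\sup_u|\tilde h(u)-h(u)|\le 2B\mathcal E_2$. Evaluating at $u=U$, integrating over the data, and using $h,\tilde h\in[0,1]$ to bound the contribution of $\mathcal G^c$ by a single $\beta$, yields $|P(A)-E[h(U)]|\le 2B\mathcal E_2+\beta$.

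It remains to compare $E[h(U)]$ with $E[h(Z_0)]=1-\alpha$. The assumption \eqref{general_Berry_Esseen} says precisely that $d_K(U,Z_0)\le\mathcal E_1$ (the $\sigma$ cancels, the bound being stated against $\Phi(x/\sigma)$). Decomposing $h=h_1-h_2$ into two monotone pieces of variation $1$ each --- e.g.\ $h_1$ agreeing with $h$ on $(-\infty,0]$ and constantly $1$ thereafter, and $h_2:=h_1-h$ --- the standard Stieltjes bound $|E[h(U)]-E[h(Z_0)]|\le \mathrm{TV}(h)\,d_K(U,Z_0)$ gives $|E[h(U)]-(1-\alpha)|\le 2\mathcal E_1$. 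Adding this to the previous display via the triangle inequality yields $|P(A)-(1-\alpha)|\le 2\mathcal E_1+2B\mathcal E_2+\beta$, which is \eqref{error CB}.

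I expect the main difficulty to be bookkeeping rather than conceptual: making the conditioning watertight (conditionally on the data, $U$ is deterministic and $V_1,\dots,V_B$ i.i.d., so that the convolution bound applies uniformly in $u$ and hence at the random point $u=U$), carefully handling left limits and atoms when passing from the CDF of $V_b$ to those of $V_b^2$ and of $\sum_{b=1}^B V_b^2$, and correctly tracking the two factors of $2$ --- one from squaring a two-sided tail, one from the two-sided total variation of $h$ --- while ensuring the loss from $\mathcal G^c$ is $\beta$ and not $2\beta$, which is exactly what the bounds $0\le h,\tilde h\le1$ secure.
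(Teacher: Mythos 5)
Your proposal is correct and follows essentially the same route as the paper's proof: conditioning on the data, swapping the conditional resample law for $N(0,\sigma^2)$ one resample at a time at a cost of $2\mathcal{E}_2$ each (your Kolmogorov-distance subadditivity under convolution is exactly the paper's iterated-integral hybrid swap), charging $\beta$ once for the bad event via the $[0,1]$ bounds, then swapping the law of $\sqrt{n}(\hat\psi_n-\psi)$ for the normal at a cost of $2\mathcal{E}_1$ using the two-sided (total-variation-$2$) structure, and finishing with the pivotal $t_B$ identity. The bookkeeping of the factors of $2$ and the single $\beta$ matches the paper's accounting exactly.
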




Condition (\ref{general_Berry_Esseen}) is a Berry-Esseen bound \cite{bentkus2003dependence,pinelis2016optimal} that gauges the normal approximation for the original estimate $\hat\psi_n$. Condition (\ref{general_bootstrap_CLT}) manifests a similar normal approximation for the resample estimate $\hat\psi_n^*$, and has been a focus in the high-dimensional CLT literature \cite{zhilova2020nonclassical,lopes2022central,chernozhukov2020nearly}. Both conditions (in their asymptotic form) are commonly used to establish the validity of standard bootstrap methods. Theorem \ref{general_CB_coverage} shows that, under these conditions, the coverage error of the cheap bootstrap interval \eqref{cheap CI} with any $B\geq1$ can be controlled. Note that Theorem \ref{general_CB_coverage} is very general in the sense that there is no direct assumption applied on the form of $\psi(\cdot)$ -- All we assume is approximate normality in the sense of (\ref{general_Berry_Esseen}) and (\ref{general_bootstrap_CLT}). Due to technical delicacies, in the bootstrap literature, finite-sample or higher-order coverage errors are typically obtainable only with specific models \cite{hall2013bootstrap,zhilova2020nonclassical,lopes2022central,chernozhukov2020nearly}, the most popular being the function-of-mean model (see Section \ref{sec:nonlinear}) or even simply the sample mean. In contrast, the bound in Theorem \ref{general_CB_coverage} that concludes the sufficiency in using a very small $B$ is a general statement that does not depend on the delicacies of $\psi(\cdot)$. Moreover, by plugging in suitable bounds for $\mathcal E_1,\mathcal E_2,\beta$ under regularity conditions, Theorem \ref{general_CB_coverage} also recovers the main result (Theorem 1) in \citet{lam2022cheap}. The detailed proof of Theorem \ref{general_CB_coverage} is in Appendix \ref{sec:proofs} (and so are the proofs of all other theorems). Below we give a sketch of the main argument.

\textbf{Proof sketch of Theorem \ref{general_CB_coverage}.} \emph{Step 1: }We write the coverage probability as the expected value (with respect to data) of a multiple integral with respect to the distributions of $\sqrt{n}(\hat{\psi}_{n}^{\ast}-\hat{\psi}_{n})$ (denoted by $Q^{\ast}$, conditional on data), i.e.,%
\begin{align}
&  P(|\psi-\hat{\psi}_{n}|\leq t_{B,1-\alpha/2}S_{n,B})\nonumber\\
&  =P\left(  \left\vert \frac{\sqrt{n}(\hat{\psi}_{n}-\psi)}{\sqrt{\frac{1}{B}\sum_{b=1}^{B}(\sqrt{n}(\hat{\psi}_{n}^{\ast b}-\hat{\psi}_{n}))^{2}}}\right\vert \leq t_{B,1-\alpha/2}\right) \nonumber\\
&  =E\left[  \int_{\frac{|\sqrt{n}(\hat{\psi}_{n}-\psi)|}{\sqrt{\sum_{b=1}^{B}z_{b}^{2}/B}}\leq t_{B,1-\alpha/2}}dQ^{\ast}(z_{B})\cdots dQ^{\ast}(z_{1})\right].\label{multiple_integral}%
\end{align}

\emph{Step 2: }Suppose (\ref{general_bootstrap_CLT}) happens and denote this event by $\mathcal{A}$ which satisfies $P(\mathcal{A}^{c})\leq\beta$. For each $b=1,\ldots,B$, given all other $z_{b^{\prime}},b^{\prime}\neq b$, the integration region is of the form $z_{b}\in(-\infty,-q]\cup\lbrack q,\infty)$ for some $q\geq0$. Then we can replace the distribution $Q^{\ast}$ by the distribution of $N(0,\sigma^{2})$ (denoted by $P_{0}$) with controlled error given in (\ref{general_bootstrap_CLT}) and obtain%
\begin{align}
&  E\left[  \int_{\frac{|\sqrt{n}(\hat{\psi}_{n}-\psi)|}{\sqrt{\sum_{b=1}^{B}z_{b}^{2}/B}}\leq t_{B,1-\alpha/2}}dQ^{\ast}(z_{B})\cdots dQ^{\ast}(z_{1})\right]  \nonumber\\
&  =E\left[  \int_{\frac{|\sqrt{n}(\hat{\psi}_{n}-\psi)|}{\sqrt{\sum_{b=1}^{B}z_{b}^{2}/B}}\leq t_{B,1-1-\alpha/2}}dP_{0}(z_{B})\cdots dP_{0}(z_{1})\right] \nonumber\\
&+R_{1},\label{sketch_expansion1}%
\end{align}
where $|R_{1}|\leq2B\mathcal{E}_{2}+\beta$ accounts for the error from (\ref{general_bootstrap_CLT}) and the small probability event $\mathcal{A}^{c}$.

\emph{Step 3: }Following the same logic in Step 2 and noticing that the integration region for $\sqrt{n}(\hat{\psi}_{n}-\psi)$ is $[-q,q]$ for some $q\geq0$, we can also replace the distribution of $\sqrt{n}(\hat{\psi}_{n}-\psi)$ by the distribution $P_{0}$ with controlled error $|R_{2}|\leq2\mathcal{E}_{1}$ according to (\ref{general_Berry_Esseen}):%
\begin{align}
&  E\left[  \int_{\frac{|\sqrt{n}(\hat{\psi}_{n}-\psi)|}{\sqrt{\sum_{b=1}^{B}z_{b}^{2}/B}}\leq t_{B,1-\alpha/2}}dP_{0}(z_{B})\cdots dP_{0}(z_{1})\right]  \nonumber\\
&  =\int_{\frac{|z_0|}{\sqrt{\sum_{b=1}^{B}z_{b}^{2}/B}}\leq t_{B,1-\alpha/2}}dP_{0}(z_{B})\cdots dP_{0}(z_{1})dP_{0}(z_{0})\nonumber\\
&+R_{2}\nonumber\\
&=1-\alpha+R_{2}.\label{sketch_expansion2}%
\end{align}

\emph{Step 4: }Plugging (\ref{sketch_expansion1}) and (\ref{sketch_expansion2}) back into (\ref{multiple_integral}), we can express the coverage probability as a sum of the nominal level and the remainder term:%
\[
P(|\psi-\hat{\psi}_{n}|\leq t_{B,1-\alpha/2}S_{n,B})=1-\alpha+R_{1}+R_{2}%
\]
with error $|R_{1}+R_{2}|\leq2\mathcal{E}_{1}+2B\mathcal{E}_{2}+\beta$. This gives our conclusion.\hfill\qed

Theorem \ref{general_CB_coverage} is designed to work well for small $B$ (our target scenario), but deteriorates when $B$ grows. However, in the latter case, we can strengthen the bound to cover the large-$B$ regime with additional conditions on the variance estimator (see Appendix \ref{sec:further_CB}).

We compare with standard basic and percentile bootstraps using $B=\infty$. Below is a generalization of \citet{zhilova2020nonclassical} which focuses only on the basic bootstrap under the function-of-mean model. 

\begin{theorem}
\label{general_other_coverages}Suppose the conditions in Theorem \ref{general_CB_coverage} hold. If $q_{\alpha/2}$, $q_{1-\alpha/2}$ are the $\alpha/2$-th and $(1-\alpha/2)$-th quantiles of $\hat{\psi}_{n}^{\ast}-\hat{\psi}_{n}$ respectively given $X_{1},\ldots,X_{n}$, then a finite-sample bound on the basic bootstrap coverage error is%
\begin{align}
&|P(\hat{\psi}_{n}-q_{1-\alpha/2}\leq\psi\leq\hat{\psi}_{n}-q_{\alpha/2})-(1-\alpha)|\nonumber\\
&\leq2\mathcal{E}_{1}+2\mathcal{E}_{2}+2\beta.\label{error basic}
\end{align}
If $q_{\alpha/2}'$, $q_{1-\alpha/2}'$ are the $\alpha/2$-th and $(1-\alpha/2)$-th quantiles of $\hat{\psi}_{n}^{\ast}$ respectively given $X_{1},\ldots,X_{n}$, then a finite-sample bound on the percentile bootstrap coverage error is%
\begin{equation}
|P(q_{\alpha/2}'\leq\psi\leq q_{1-\alpha/2}')-(1-\alpha)|\leq2\mathcal{E}_{1}+2\mathcal{E}_{2}+2\beta.\label{error percentile}
\end{equation}

\end{theorem}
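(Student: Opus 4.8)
The plan is to mirror the four-step argument used for Theorem \ref{general_CB_coverage}, but now tracking how the bootstrap quantiles $q_{\alpha/2}, q_{1-\alpha/2}$ (resp. $q'_{\alpha/2}, q'_{1-\alpha/2}$) relate to the normal quantiles. First I would fix the event $\mathcal{A}$ on which \eqref{general_bootstrap_CLT} holds, with $P(\mathcal{A}^c)\le\beta$. On $\mathcal{A}$, the conditional CDF of $\sqrt n(\hat\psi_n^\ast-\hat\psi_n)$ is uniformly within $\mathcal{E}_2$ of $\Phi(\cdot/\sigma)$, so its $\alpha/2$- and $(1-\alpha/2)$-quantiles must lie close to $\sigma z_{\alpha/2}$ and $\sigma z_{1-\alpha/2}$ — more precisely, $P^\ast(\sqrt n(\hat\psi_n^\ast-\hat\psi_n)\le \sigma z_{1-\alpha/2}) \ge 1-\alpha/2-\mathcal{E}_2$ and $\le 1-\alpha/2+\mathcal{E}_2$, and symmetrically at the lower end. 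This pins the bootstrap quantiles into intervals around the normal quantiles whose ``width in probability'' is controlled by $\mathcal{E}_2$.

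For the basic bootstrap, the event $\{\hat\psi_n - q_{1-\alpha/2}\le\psi\le\hat\psi_n-q_{\alpha/2}\}$ is $\{q_{\alpha/2}\le \hat\psi_n-\psi\le q_{1-\alpha/2}\}$, i.e. $\{\sqrt n\, q_{\alpha/2}\le \sqrt n(\hat\psi_n-\psi)\le \sqrt n\, q_{1-\alpha/2}\}$. I would then write this probability as $E[\,\mathbf{1}_{\mathcal{A}}\cdot(\,\cdots\,)\,] + (\text{term bounded by }\beta)$, and on $\mathcal{A}$ replace the (random) integration endpoints $\sqrt n q_{\alpha/2}, \sqrt n q_{1-\alpha/2}$ by $\sigma z_{\alpha/2}, \sigma z_{1-\alpha/2}$. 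The replacement cost is handled by noting that the true distribution of $\sqrt n(\hat\psi_n-\psi)$ is, by \eqref{general_Berry_Esseen}, within $\mathcal{E}_1$ of $\Phi(\cdot/\sigma)$ uniformly, and that moving an endpoint from $\sigma z_{1-\alpha/2}$ to $q_{1-\alpha/2}\sqrt n$ changes the $\Phi(\cdot/\sigma)$-mass by at most $\mathcal{E}_2$ (since both endpoints map, under $\Phi(\cdot/\sigma)$, to within $\mathcal{E}_2$ of $1-\alpha/2$ by the previous paragraph). Collecting: two endpoint swaps give $2\mathcal{E}_2$, the Berry--Esseen swap of the true law for the normal gives $2\mathcal{E}_1$ (one for the probability at each endpoint, as in Step 3 of the earlier sketch), the base normal computation yields exactly $1-\alpha$, and the failure of $\mathcal{A}$ contributes at most $\beta$ — but since $\mathcal{A}$ enters on both sides of the endpoint-replacement argument, the careful accounting yields $2\beta$, giving \eqref{error basic}.

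The percentile bootstrap is essentially identical in structure: $\{q'_{\alpha/2}\le\psi\le q'_{1-\alpha/2}\}$ rewrites as $\{\sqrt n(q'_{\alpha/2}-\hat\psi_n)\le \sqrt n(\psi-\hat\psi_n)\le \sqrt n(q'_{1-\alpha/2}-\hat\psi_n)\}$, i.e. a two-sided event for $\sqrt n(\hat\psi_n-\psi)$ with endpoints $-\sqrt n(q'_{1-\alpha/2}-\hat\psi_n)$ and $-\sqrt n(q'_{\alpha/2}-\hat\psi_n)$; since $q'_{\gamma}-\hat\psi_n$ is a $\gamma$-quantile of $\hat\psi_n^\ast-\hat\psi_n$, these endpoints are controlled exactly as before, and by the symmetry of the normal the same bookkeeping produces \eqref{error percentile}. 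The main obstacle — and the place needing the most care — is the quantile-replacement step: because the bootstrap CDF need not be continuous or strictly increasing, one must argue via the defining inequalities of the generalized quantile (that the CDF is $\ge\gamma$ just above $q_\gamma$ and $\le\gamma$ just below) rather than by inverting, and then translate the resulting ``$\Phi(\cdot/\sigma)$-mass of the endpoint is within $\mathcal{E}_2$ of $\gamma$'' statement into a clean additive error. Everything else is the same expansion-and-remainder argument already carried out for Theorem \ref{general_CB_coverage}.
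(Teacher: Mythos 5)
Your proposal is correct and follows essentially the same route as the paper: restrict to the event $\mathcal{A}$ where \eqref{general_bootstrap_CLT} holds, show that on $\mathcal{A}$ the (generalized) bootstrap quantiles are pinned to within $\mathcal{E}_{2}$ of the corresponding normal quantiles (the paper formalizes this as a sandwich $\sigma z_{\gamma-\mathcal{E}_{2}}\leq\sqrt{n}q_{\gamma}\leq\sigma z_{\gamma+\mathcal{E}_{2}}$, which is equivalent to your ``$\Phi(\cdot/\sigma)$-mass within $\mathcal{E}_{2}$'' statement), convert via \eqref{general_Berry_Esseen} at each endpoint to get $2\mathcal{E}_{1}+2\mathcal{E}_{2}$, account for $\mathcal{A}^{c}$ twice to get $2\beta$, and handle the percentile case by the shift $q_{\gamma}'-\hat{\psi}_{n}$ together with the symmetry of $N(0,\sigma^{2})$. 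The care you flag about non-invertible CDFs is exactly what the paper's preliminary quantile-comparison inequality addresses.
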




In view of Theorems \ref{general_CB_coverage} and \ref{general_other_coverages}, the cheap bootstrap with any fixed $B$ can achieve the same order of coverage error bound as the basic and percentile bootstraps with $B=\infty$, in the sense that
\begin{equation}
(1/2)\ \text{EB}_\text{Quantile}\leq\text{EB}_\text{Cheap}\leq B\ \text{EB}_\text{Quantile},\label{bound comparison}
\end{equation}
where $\text{EB}_\text{Cheap}$ is the RHS error bound of \eqref{error CB} and $\text{EB}_\text{Quantile}$ is that of \eqref{error basic} or \eqref{error percentile}. This shows that, to attain a good coverage that is on par with standard basic/percentile bootstraps, it suffices to use the cheap bootstrap with a small $B$ which could save computation dramatically. 

Besides coverage, another important quality of confidence interval is its width. To this end, note that for any fixed $B$, (\ref{general_bootstrap_CLT}) ensures that $\sqrt{n}S_{n,B}\Rightarrow\sigma\sqrt{\chi_{B}^{2}/B}$ (unconditionally as $n\rightarrow\infty$ with proper  model assumptions) . Therefore, the half-width of (\ref{cheap CI}) is approximately $t_{B,1-\alpha/2}\sigma\sqrt{\chi_{B}^{2}/(nB)}$ with expected value 
\begin{equation}
E\left[  t_{B,1-\alpha/2}\sigma\sqrt{\frac{\chi_{B}^{2}}{nB}}\right]=t_{B,1-\alpha/2}\sigma\sqrt{\frac{2}{Bn}}\frac{\Gamma((B+1)/2)}{\Gamma(B/2)},\label{half_width}
\end{equation}
where $\Gamma(\cdot)$ is the gamma function. Since the dimensional impact is hidden in $\sigma$ which is a common factor in the expected width as $B$ varies, we can see $p$ does not affect the relative width behavior as $B$ changes. In particular, from \eqref{half_width} the inflation of the expected width relative to the case $B=\infty$ is 417.3\% for $B=1$, and dramatically reduces to 94.6\%, 24.8\% and 10.9\% for $B=2,5,10$, thus giving an interval with both correct coverage and short width using a small computation budget $B$.

In the next sections, we will apply Theorem \ref{general_CB_coverage} to obtain explicit bounds for specific high-dimensional models. Here, in relation to \eqref{bound comparison}, we briefly comment that the order of the coverage error bounds for these models is of order $1/\sqrt n$, both for the cheap bootstrap (which we will derive) and state-of-the-art high-dimensional bootstrap CLT. This is in contrast to the typical $1/n$ coverage error in two-sided bootstrap confidence intervals in low dimension (see \citet{hall2013bootstrap} Section 3.5 for quantile-based bootstraps and \citet{lam2022cheap} Section 3.2 for the cheap bootstrap).

\subsection{Function-of-Mean Models \label{sec:nonlinear}}

We now specialize to the function-of-mean model $\psi=g(\mu)$ for a mean vector $\mu=E[X]\in\mathbb{R}^{p}$ and smooth function $g:\mathbb{R}^{p}\mapsto\mathbb{R}$, which allows us to construct more explicit bounds. The original estimate $\hat{\psi}_{n}$ and resample estimate $\hat\psi_{n}^{\ast b}$ are now given by $g(\bar{X}_{n})$ and $g(\bar{X}_{n}^{\ast b})$ respectively, where $\bar X_n$ denotes the sample mean of data and $\bar X_n^{*b}$ denotes the resample mean of $X_1^{*b},\ldots,X_n^{*b}$. We assume:
\begin{assumption}\label{smoothness_assumption}
The function $g(x)\in C^2(\mathbb{R}^p)$ has Hessian matrix $H_{g}(x)$ with uniformly bounded eigenvalues, that is, $\exists$ a constant $C_{H_{g}}>0$ s.t. $\sup_{x\in\mathbb{R}^{p}}|a^{\top}H_{g}(x)a|\leq C_{H_{g}}||a||^{2},\forall a\in\mathbb{R}^{p}$.
\end{assumption}
\begin{assumption}\label{nonlinear_assumption}
$X$ is sub-Gaussian, i.e., there is a constant $\tau^{2}>0$ s.t. $E[\exp(a^{\top}(X-\mu))]\leq\exp(||a||^{2}\tau^{2}/2),\forall a\in \mathbb{R}^{p}$. %
Furthermore, $X$ has a density bounded by a constant $C_{X}$ and its covariance matrix $\Sigma$ is positive definite with the smallest eigenvalue $\lambda_{\Sigma}>0$.
\end{assumption}

Based on Theorem \ref{general_CB_coverage}, we derive the following explicit bound:

\begin{theorem}
\label{CB_coverage_nonlinear}Suppose the function $g$ satisfies Assumption \ref{smoothness_assumption} and random vector $X$ satisfies Assumption \ref{nonlinear_assumption}. Moreover, assume $||\nabla g(\mu)||>C_{\nabla g}\sqrt{p}$ for some constant $C_{\nabla g}>0$. Then we have
\begin{align*}
& \left\vert P(|g(\mu)-g(\bar{X}_{n})|\leq t_{B,1-\alpha/2}S_{n,B})  -(1-\alpha)\right\vert\\
&  \leq \frac{6}{n}+BC\left(\frac{m_{31}}{\sqrt{n}\sigma^{3}}+\frac{C_{H_{g}}m_{31}^{1/3}tr(\Sigma)}{\sqrt{n}\sigma^{2}}+\frac{C_{H_{g}}m_{32}^{2/3}}{n^{5/6}\sigma}\right. \\
& +\frac{C_{H_{g}}m_{31}^{1/3}m_{32}^{2/3}}{n\sigma^{2}}+\frac{C_{H_{g}}\tau^{2}}{C_{\nabla g}\sqrt{\lambda_{\Sigma}}}\left(1+\frac{\log n}{p}\right)  \sqrt{\frac{p}{n}}\\
&  +\frac{||E[(X-\mu)^{3}]||}{\lambda_{\Sigma}^{3/2}}\frac{1}{\sqrt{n}}+\frac{\tau^{3}}{\lambda_{\Sigma}^{3/2}}\left(  1+\frac{\log n}{p}\right)  ^{3/2}\frac{1}{\sqrt{n}}\\
&  \left.  +\frac{\tau^{4}\sqrt{p}}{\lambda_{\Sigma}^{2}n}\left(  1+\frac{\log n}{p}\right)  ^{1/2}+\frac{\tau^{2}\sqrt{p}}{\lambda_{\Sigma}n}\left(1+\frac{\log n}{p}\right)  ^{1/2}  \right. \\
& \left.+\frac{\tau^{3}\sqrt{p}}{\lambda_{\Sigma}^{3/2}n}\left(  1+\frac{\log n}{p}\right)\right)\\
&+ BC_1\left(\frac{\tau^{4}(\log n)^{3/2}}{\lambda_{\Sigma}^{2}\sqrt{n}}+\frac{\tau^{2}(\log n)^{3/2}}{\lambda_{\Sigma}\sqrt{n}}\right.\\
&  \left.+\frac{\tau^{3}}{\lambda_{\Sigma}^{3/2}\sqrt{n}}\left(  1+\frac{\log n}{p}\right)^{1/2}(\log n+\log p)\sqrt{\log n}\right)  ,
\end{align*}
where $m_{31}=E[|\nabla g(\mu)^{\top}(X-\mu)|^{3}]$, $m_{32}:=E[||X-\mu||^{3}]$, $\sigma^{2}=\nabla g(\mu)^{\top}\Sigma\nabla g(\mu)$, $C$ is a universal constant and $C_1$ is a constant only depending on $C_X$.
\end{theorem}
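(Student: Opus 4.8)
The plan is to obtain the bound by instantiating Theorem~\ref{general_CB_coverage}: it suffices to exhibit $\sigma$, $\mathcal{E}_1$, $\mathcal{E}_2$ and $\beta$ verifying \eqref{general_Berry_Esseen} and \eqref{general_bootstrap_CLT} for $\psi=g(\mu)$, $\hat\psi_n=g(\bar X_n)$, $\hat\psi_n^{\ast}=g(\bar X_n^{\ast})$, and then to bound $2\mathcal{E}_1+2B\mathcal{E}_2+\beta$ by the claimed expression. Throughout I take $\sigma^2=\nabla g(\mu)^{\top}\Sigma\nabla g(\mu)$, and I will use the crude step $2\mathcal E_1\le 2B\mathcal E_1$ (valid since $B\ge1$) so that the ``analytic'' parts of $\mathcal E_1$ and $\mathcal E_2$ can be merged into the two $B$-brackets, leaving the purely probabilistic failure contributions from both sides to form the $6/n$ term. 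The assumption $\|\nabla g(\mu)\|>C_{\nabla g}\sqrt p$ together with $\lambda_\Sigma>0$ yields the key lower bound $\sigma\ge C_{\nabla g}\sqrt{\lambda_\Sigma}\,\sqrt p$, which converts every ``$1/\sigma$'' into a ``$1/\sqrt p$'', letting the error vanish once $p=o(n)$.

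For $\mathcal{E}_1$: a second-order Taylor expansion gives $\sqrt n(g(\bar X_n)-g(\mu))=\sqrt n\,\nabla g(\mu)^{\top}(\bar X_n-\mu)+\tfrac{\sqrt n}{2}(\bar X_n-\mu)^{\top}H_g(\xi_n)(\bar X_n-\mu)$ for some $\xi_n$ on the segment. The leading linear term is an i.i.d.\ scalar average, so the classical Berry--Esseen (Esseen) inequality controls its normal approximation by $Cm_{31}/(\sqrt n\sigma^3)$. The quadratic remainder is bounded via Assumption~\ref{smoothness_assumption} by $\tfrac{\sqrt n}{2}C_{H_g}\|\bar X_n-\mu\|^2$, and sub-Gaussianity (Assumption~\ref{nonlinear_assumption}) gives $\|\bar X_n-\mu\|^2\le C(tr(\Sigma)+\tau^2\log n)/n$ off an event of probability $O(1/n)$; the standard truncation-plus-anti-concentration step (the linear part is within its Berry--Esseen error of $N(0,\sigma^2)$, whose density is $\le C/\sigma$) then converts this into a Kolmogorov-distance cost of order $C_{H_g}(tr(\Sigma)+\tau^2\log n)/(\sqrt n\sigma)$, up to the extra $(m_{31}/\sigma^3)^{1/3}$-type factor produced when the truncation level is optimized against the Berry--Esseen error of the linear part. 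Using $\sigma\gtrsim\sqrt p$ turns the $\tau^2\log n$ piece into the $\frac{C_{H_g}\tau^2}{C_{\nabla g}\sqrt{\lambda_\Sigma}}(1+\frac{\log n}{p})\sqrt{p/n}$ term, while the $tr(\Sigma)$ piece and the $n^{-5/6}$ and $n^{-1}$ remainders come from the same expansion at the orders shown.

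For $\mathcal{E}_2$ and $\beta$: I would run the identical Taylor expansion for $g(\bar X_n^{\ast})$ about $\bar X_n$, \emph{conditionally on the data}, so that the base law is the resampling law, for which $\sqrt n(\bar X_n^{\ast}-\bar X_n)$ is again an i.i.d.(-given-data) scalar average with empirical variance $\hat\sigma^2=\nabla g(\bar X_n)^{\top}\hat\Sigma\nabla g(\bar X_n)$ and empirical third moment $\hat m_{31}$, and the conditional Hessian remainder is controlled by $C_{H_g}\|\bar X_n^{\ast}-\bar X_n\|^2$ with conditional covariance $\hat\Sigma$. Applying Berry--Esseen and the truncation/anti-concentration step in this world produces the same shape of bound in empirical quantities; I would then replace $\hat\Sigma$ by $\Sigma$ (operator-norm matrix concentration for sub-Gaussian $X$), $\hat m_{31}$ by $m_{31}$, $\nabla g(\bar X_n)$ by $\nabla g(\mu)$ (the gradient is $C_{H_g}$-Lipschitz by Assumption~\ref{smoothness_assumption}), and $tr(\hat\Sigma)$ by $tr(\Sigma)$, each replacement valid off an event of probability $O(1/n)$ obtained by taking deviation thresholds of polylogarithmic size in $n$ --- this is exactly where the $(\log n)^{3/2}$ and $(\log n+\log p)\sqrt{\log n}$ inflations in the $BC_1(\cdots)$ bracket enter, and why $C_1$ depends only on the density bound $C_X$ (used to keep the conditional, lattice-type law of the resample mean from degenerating, so that its Kolmogorov-distance normal approximation stays of order $1/\sqrt n$). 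The union of all these good events, together with the remainder-truncation events on both the original and resample sides, has total failure probability at most $6/n$, which is $\beta$ plus the non-analytic part of $\mathcal E_1$.

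The main obstacle is the conditional (resample) normal approximation made explicit in $p$ and $n$: one must simultaneously (i) control the quadratic remainder in the resampling world, where $\|\bar X_n^{\ast}-\bar X_n\|^2$ is handled through $tr(\hat\Sigma)$ and a sub-exponential tail that itself holds only on a high-probability data event; (ii) propagate the empirical-to-population moment replacements without inflating the $p$-dependence, keeping the $tr(\Sigma)$, $\|E[(X-\mu)^3]\|$ and $\tau$-dependent contributions at the displayed rates and multiplied by at most $\sqrt p$, which forces careful use of $\sigma\gtrsim\sqrt p$; and (iii) handle the inherently atomic conditional law of $\bar X_n^{\ast}$ so that a genuine $O(1/\sqrt n)$ Kolmogorov bound survives, which is where the density assumption is needed. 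Everything else is bookkeeping: assembling the Taylor, Berry--Esseen, anti-concentration, and concentration estimates and summing them, then invoking Theorem~\ref{general_CB_coverage}.
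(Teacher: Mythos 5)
Your top-level architecture is the paper's: instantiate Theorem \ref{general_CB_coverage} with $\sigma^{2}=\nabla g(\mu)^{\top}\Sigma\nabla g(\mu)$, and use $\|\nabla g(\mu)\|>C_{\nabla g}\sqrt{p}$ together with $\lambda_{\Sigma}$ to convert each $1/\sigma$ into a $1/\sqrt{p}$. But for $\mathcal{E}_{1}$ the paper does not run your Taylor-plus-truncation argument; it cites Theorem 2.11 of \citet{pinelis2016optimal} (Lemma \ref{berry_esseen_delta_method}), which already delivers the four $m_{31},m_{32}$ terms as a deterministic bound with no exceptional event. Your sketch is a plausible heuristic for that result, but the $m_{31}^{1/3}$ prefactors and the $n^{-5/6}$ term are asserted rather than derived, and deriving them is the content of that cited theorem.

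The genuine gap is $\mathcal{E}_{2}$. The paper's Lemma \ref{highdim_CLT_nonlinear} does not prove a conditional Berry--Esseen theorem in the resampling world followed by empirical-to-population replacements, which is what you propose. It invokes Theorem 4.2 of \citet{zhilova2020nonclassical}, which bounds $\sup_{t}|P(\sqrt{n}(g(\bar{X}_{n})-g(\mu))\leq t)-P^{\ast}(\sqrt{n}(g(\bar{X}_{n}^{\ast})-g(\bar{X}_{n}))\leq t)|$ directly by a Lindeberg/Gaussian-comparison argument with third-order moment matching against a Gaussian surrogate and a generalized Hanson--Wright inequality, and then obtains the normal approximation of the bootstrap law by the triangle inequality with Lemma \ref{berry_esseen_delta_method}. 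Several displayed terms are fingerprints of that method and would not emerge from your route: $\|E[(X-\mu)^{3}]\|\lambda_{\Sigma}^{-3/2}n^{-1/2}$ is the third-cumulant mismatch between $X$ and its Gaussian surrogate in the Lindeberg telescope (the surrogate's third-moment tensor vanishes), not a fluctuation of an empirical third moment, and a conditional-CLT program has no reason to produce the operator norm of the third-moment tensor at all. Your explanation of the density bound is also not the one that operates here: $C_{X}$ enters through the generalized Hanson--Wright inequality in Zhilova's Theorem B.1 (concentration of quadratic forms of a sub-Gaussian vector with dependent coordinates), not through anti-concentration of the atomic conditional law of $\bar{X}_{n}^{\ast}$ --- the classical Berry--Esseen applies to lattice summands with no density assumption, so your obstacle (iii) is not where the difficulty lies. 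Finally, the $6/n$ is exactly the failure probability $\beta$ of the single bootstrap-side event (Zhilova's bound with $x=\log n$ holds with probability at least $1-6e^{-x}=1-6/n$); $\mathcal{E}_{1}$ contributes no exceptional event, so your accounting of truncation events ``on both sides'' does not match. In short, the skeleton is right, but the substance of the theorem --- the explicit $(p,n)$-dependent bootstrap approximation --- is replaced by a sketch whose output would not be the stated bound.
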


Theorem \ref{CB_coverage_nonlinear} is obtained by tracing the implicit quantities in Theorem \ref{general_CB_coverage} for the function-of-mean model, via extracting the dependence on problem parameters in the Berry-Esseen theorems for the multivariate delta method \cite{pinelis2016optimal} and the standard bootstrap \cite{zhilova2020nonclassical}. In particular, the sub-Gaussian assumption is required to derive finite-sample concentration inequalities, in a similar spirit as the state-of-the-art high-dimensional CLTs (e.g., \citet{chernozhukov2017central,lopes2022central}). On the other hand, the third moments such as $||E[(X-\mu)^3]||$ (operation norm of the third order tensor $E[(X-\mu)^3]$), $m_{31}$ and $m_{32}$ are due to the use of the Berry-Esseen theorem and a multivariate higher-order Berry-Esseen inequality in \citet{zhilova2020nonclassical}, which generally requires this order of moments. The bound in Theorem \ref{CB_coverage_nonlinear} can be simplified with reasonable assumptions on the involved quantities:

\begin{corollary}\label{concise_CB_coverage_nonlinear}
Suppose the conditions in Theorem \ref{CB_coverage_nonlinear} hold. Moreover, suppose that $\tau,C_{H_{g}},C_1=O(1)$, $C_{\nabla g},\lambda_{\Sigma}=\Theta(1)$, $||\nabla g(\mu)||^{2}=O(p)$, $\sigma^{2}=\Theta(p)$ and $||E[(X-\mu)^{3}]||=O(1)$. Then as $p,n\rightarrow\infty$,
\begin{align*}
& \left\vert P(|g(\mu)-g(\bar{X}_{n})|\leq t_{B,1-\alpha/2}S_{n,B})  -(1-\alpha)\right\vert\\
&  =B\times O\left(  \left(  1+\frac{\log n}{p}\right)  \sqrt{\frac{p}{n}}\right.\\
&\left.+\frac{1}{\sqrt{n}}\left(  1+\frac{\log n}{p}\right)  ^{1/2}(\log n+\log p)\sqrt{\log n}\right)  .
\end{align*}
Consequently, for any fixed $B\geq1$, the cheap bootstrap confidence interval is asymptotically exact provided $p=o(n)$, i.e.,
\[
\lim_{\substack{p,n\rightarrow\infty\\p=o(n)}}P(|g(\mu)-g(\bar{X}_{n})|\leq t_{B,1-\alpha/2}S_{n,B})=1-\alpha.
\]

\end{corollary}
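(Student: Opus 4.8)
The plan is to derive Corollary \ref{concise_CB_coverage_nonlinear} purely as a bookkeeping exercise on the explicit bound in Theorem \ref{CB_coverage_nonlinear}: substitute the order assumptions on the primitives and identify which of the many terms dominates. First I would record the consequences of the hypotheses for the derived quantities. Since $X$ is sub-Gaussian with parameter $\tau=O(1)$, we have $tr(\Sigma)=O(p)$ (each diagonal entry is $O(\tau^2)$), $m_{32}=E[\|X-\mu\|^3]=O(p^{3/2})$, and $\|E[(X-\mu)^3]\|=O(1)$ by assumption. With $\|\nabla g(\mu)\|^2=O(p)$ and $\lambda_\Sigma=\Theta(1)$ we get $\sigma^2=\nabla g(\mu)^\top\Sigma\nabla g(\mu)=\Theta(p)$ (the lower bound $\sigma^2\ge\lambda_\Sigma\|\nabla g(\mu)\|^2\ge C_{\nabla g}^2\lambda_\Sigma\, p$ uses the extra hypothesis $\|\nabla g(\mu)\|>C_{\nabla g}\sqrt p$), so $\sigma=\Theta(\sqrt p)$. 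Also $m_{31}=E[|\nabla g(\mu)^\top(X-\mu)|^3]=O(\|\nabla g(\mu)\|^3\tau^3)=O(p^{3/2})$ by sub-Gaussianity of the scalar $\nabla g(\mu)^\top(X-\mu)$; hence $m_{31}/\sigma^3=O(1)$ and $m_{31}^{1/3}=O(\sqrt p)$, $m_{32}^{2/3}=O(p)$.

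Next I would plug these into each summand of the Theorem \ref{CB_coverage_nonlinear} bound and read off orders. The $6/n$ term is negligible. Inside the $BC(\cdots)$ block: $m_{31}/(\sqrt n\sigma^3)=O(1/\sqrt n)$; $C_{H_g}m_{31}^{1/3}tr(\Sigma)/(\sqrt n\sigma^2)=O(\sqrt p\cdot p/(\sqrt n\cdot p))=O(\sqrt{p/n})$; $C_{H_g}m_{32}^{2/3}/(n^{5/6}\sigma)=O(p/(n^{5/6}\sqrt p))=O(\sqrt p/n^{5/6})$, which is $o(\sqrt{p/n})$; $C_{H_g}m_{31}^{1/3}m_{32}^{2/3}/(n\sigma^2)=O(\sqrt p\cdot p/(n\cdot p))=O(\sqrt p/n)=o(\sqrt{p/n})$; the term $(C_{H_g}\tau^2/(C_{\nabla g}\sqrt{\lambda_\Sigma}))(1+\log n/p)\sqrt{p/n}=O((1+\log n/p)\sqrt{p/n})$, one of the two eventual leading terms; $\|E[(X-\mu)^3]\|/(\lambda_\Sigma^{3/2}\sqrt n)=O(1/\sqrt n)$; $\tau^3(1+\log n/p)^{3/2}/(\lambda_\Sigma^{3/2}\sqrt n)=O((1+\log n/p)^{3/2}/\sqrt n)$, which is dominated by $(1+\log n/p)\sqrt{p/n}$ when $p\ge 1$ up to the $\sqrt{1+\log n/p}$ factor — I would just fold it into the stated $O$ expression since $(1+\log n/p)^{3/2}/\sqrt n\le (1+\log n/p)\sqrt{p/n}$ iff $(1+\log n/p)^{1/2}\le\sqrt p$, true once $\log n/p\le p-1$, hence asymptotically; the three $n^{-1}$ terms $\tau^4\sqrt p/(\lambda_\Sigma^2 n)(1+\log n/p)^{1/2}$, $\tau^2\sqrt p/(\lambda_\Sigma n)(1+\log n/p)^{1/2}$, $\tau^3\sqrt p/(\lambda_\Sigma^{3/2}n)(1+\log n/p)$ are all $O(\sqrt p/n\cdot(1+\log n/p))=o(\sqrt{p/n})$. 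In the $BC_1(\cdots)$ block: $\tau^4(\log n)^{3/2}/(\lambda_\Sigma^2\sqrt n)=O((\log n)^{3/2}/\sqrt n)$ and likewise for the $\tau^2$ term; the last term $\tau^3/(\lambda_\Sigma^{3/2}\sqrt n)(1+\log n/p)^{1/2}(\log n+\log p)\sqrt{\log n}=O((1+\log n/p)^{1/2}(\log n+\log p)\sqrt{\log n}/\sqrt n)$, the second stated leading term (note $(\log n)^{3/2}/\sqrt n$ is absorbed into it). Collecting the survivors gives exactly
\[
B\times O\!\left(\Bigl(1+\tfrac{\log n}{p}\Bigr)\sqrt{\tfrac{p}{n}}+\tfrac{1}{\sqrt n}\Bigl(1+\tfrac{\log n}{p}\Bigr)^{1/2}(\log n+\log p)\sqrt{\log n}\right),
\]
matching the Corollary.

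For the final limit statement, fix $B\ge1$ and assume $p=o(n)$. Then $\sqrt{p/n}\to0$; and since $\log n/p$ is at most polylogarithmic relative to any polynomial growth, $(1+\log n/p)\sqrt{p/n}\to0$ whenever $p/n\to0$ (even if $p$ is only of order $\log n$, the factor $\log n/p$ stays $O(1)$ against $\sqrt{p/n}\to0$; if $p$ is smaller the product $\sqrt{p}\log n/p=\log n/\sqrt p\cdot\sqrt{p}\cdot p^{-1/2}$... more simply $(\log n/p)\sqrt{p/n}=\log n/(\sqrt{pn})\le\log n/\sqrt n\to0$). Similarly $(1+\log n/p)^{1/2}(\log n+\log p)\sqrt{\log n}/\sqrt n\le(1+\log n)^{1/2}(2\log n)^{3/2}/\sqrt n\to0$. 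Hence the whole bound vanishes, the coverage error tends to $0$, and $P(|g(\mu)-g(\bar X_n)|\le t_{B,1-\alpha/2}S_{n,B})\to1-\alpha$.

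The main obstacle is not conceptual but careful comparison of the numerous error terms: one must be sure that every term other than the two identified leaders is genuinely of smaller order uniformly, in particular handling the borderline cases where $p$ grows only logarithmically (so that $1+\log n/p$ is not $\Theta(1)$) and checking that terms like $\tau^3(1+\log n/p)^{3/2}/\sqrt n$ and $(\log n)^{3/2}/\sqrt n$ are absorbed into the retained expression rather than overlooked. I would organize this as a short table of (term, order) pairs to make the domination transparent, and flag the lower bound $\sigma^2\ge C_{\nabla g}^2\lambda_\Sigma\,p$ as the one place where the auxiliary hypothesis $\|\nabla g(\mu)\|>C_{\nabla g}\sqrt p$ is essential (without it $\sigma$ could be $o(\sqrt p)$ and several denominators would blow up).
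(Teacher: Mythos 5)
Your proposal is correct and follows essentially the same route as the paper: the paper's proof likewise reduces the corollary to showing $tr(\Sigma)=O(p)$, $m_{31}=O(p^{3/2})$ and $m_{32}=O(p^{3/2})$ from sub-Gaussianity (coordinate-wise second/fourth moments for $tr(\Sigma)$ and, via H\"older, for $m_{32}$; sub-Gaussianity of the scalar $\nabla g(\mu)^{\top}(X-\mu)$ for $m_{31}$), and then simply states that the small-order terms are absorbed -- the bookkeeping you carry out explicitly. One small repair in that bookkeeping: your absorption of $\tau^{3}(1+\log n/p)^{3/2}/\sqrt{n}$ into the first leading term via ``$\log n/p\leq p-1$, hence asymptotically'' is false when $p$ grows more slowly than roughly $\sqrt{\log n}$; the term should instead be compared with the second leading term, using
\[
\left(1+\frac{\log n}{p}\right)^{3/2}\frac{1}{\sqrt{n}}\leq\left(1+\frac{\log n}{p}\right)^{1/2}\cdot\frac{2\log n}{\sqrt{n}}\leq 2\left(1+\frac{\log n}{p}\right)^{1/2}\frac{(\log n+\log p)\sqrt{\log n}}{\sqrt{n}}
\]
for $n\geq e$, so the stated bound and the limit conclusion are unaffected.
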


In Corollary \ref{concise_CB_coverage_nonlinear}, the cheap bootstrap coverage error shrinks to $0$ as $n\rightarrow\infty$ if $p=o(n)$, i.e., the problem dimension grows slower than $n$ in any arbitrary fashion. Although there is no theoretical guarantee that the choice of $p=o(n)$ is tight, we offer numerical evidence in Section \ref{sec:num} where the cheap bootstrap works with a small $B$ when $p/n=0.09$ but it fails (i.e., over-covers the target with a quite large interval width) when $p/n=0.25$. Such a difference indicates that $p=o(n)$ can be tight in some cases. Recall that $||E[(X-\mu)^{3}]||$ denotes the operator norm of the third order tensor $E[(X-\mu)^{3}]$, and so the assumption $||E[(X-\mu)^{3}]||=O(1)$ holds if the components of $X$ are independent (or slightly weakly dependent). Other order assumptions in Corollary \ref{concise_CB_coverage_nonlinear} are natural. An example of the function-of-mean model is $g(\mu)=||\mu||^2$, used also in \citet{zhilova2020nonclassical}, whose confidence interval becomes a simultaneous confidence region for the mean vector $\mu$.

\subsection{Linear Functions\label{sec:linear}}


We consider a further specialization to linear $g$ where, instead of sub-Gaussanity of $X$, we are now able to use weaker tail conditions. Assume $g(x)=g_{1}^{\top}x+g_{2}$, where $g_{1}\in\mathbb{R}^{p}$ and $g_{2}\in\mathbb{R}$ are known. Then $g(\bar{X}_{n})$ and $g(\bar{X}_{n}^{\ast b})$ are essentially the sample mean and resample mean of i.i.d. random variables $g_{1}^{\top}X_{i}+g_{2},i=1,\ldots,n$. 

First, we consider the case where $g_{1}^{\top}(X-\mu)$ is sub-exponential, i.e., 
$||g_{1}^{\top}(X-\mu)||_{\psi_{1}}:=\inf\{\lambda>0:E[\psi_{1}(|g_{1}^{\top}(X-\mu)|/\lambda)]\leq1\}<\infty
$, 
where $||\cdot||_{\psi_{1}}$ is the Orlicz norm induced by the function $\psi_{1}(x)=e^{x}-1$. Sub-exponential property is a weaker tail condition than sub-Gaussianity; see e.g. \citet{vershynin2018high} Sections 2.5 and 2.7. Under this condition, we have:
\begin{theorem}
\label{CB_coverage_subexp}Suppose $g$ is a linear function in the form $g(x)=g_{1}^{\top}x+g_{2}$. Assume that $\sigma^{2}=g_{1}^{\top}\Sigma g_{1}>0$ and $||g_{1}^{\top}(X-\mu)||_{\psi_{1}}<\infty$. Then for any $n\geq3$, we have the following finite-sample bound on the cheap bootstrap coverage error%
\begin{align*}
& \left\vert P(|g(\mu)-g(\bar{X}_{n})|\leq t_{B,1-\alpha/2}S_{n,B})  -(1-\alpha)\right\vert\\
&  \leq \frac{C}{n}+BC  \frac{E[|g_{1}^{\top}(X-\mu)|^{3}]}{\sigma^{3}\sqrt{n}}\\
&+BC\frac{||g_{1}^{\top}(X-\mu)||_{\psi_{1}}^{4}\log^{11}(n)}{\sigma^{4}\sqrt{n}},
\end{align*}
where $C$ is a universal constant.
\end{theorem}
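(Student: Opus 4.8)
The plan is to instantiate Theorem~\ref{general_CB_coverage} for the linear functional $g(x)=g_1^\top x + g_2$, which reduces the problem to finding explicit values for $\mathcal{E}_1$, $\mathcal{E}_2$, and $\beta$ in terms of $n$ and the tail parameters of $g_1^\top(X-\mu)$. Write $Y_i := g_1^\top X_i + g_2$, so that $g(\bar X_n)$ is the sample mean $\bar Y_n$ and each $g(\bar X_n^{*b})$ is a resample mean; the target is $g(\mu) = E[Y_i]$, and $\sigma^2 = g_1^\top \Sigma g_1 = \mathrm{Var}(Y_i)$. The first step is the original-estimator bound \eqref{general_Berry_Esseen}: this is simply the classical one-dimensional Berry--Esseen theorem applied to i.i.d.\ scalars $Y_i$, giving $\mathcal{E}_1 \le C\, E[|g_1^\top(X-\mu)|^3]/(\sigma^3\sqrt n)$, which accounts for the second term on the right-hand side after multiplying by the factor $2B$ from \eqref{error CB} (the precise constant gets absorbed into the universal $C$). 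This is the easy part.

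The substantive step is the bootstrap CLT bound \eqref{general_bootstrap_CLT}, i.e.\ establishing that with probability at least $1-\beta$ over the data, the conditional distribution of $\sqrt n(\bar Y_n^* - \bar Y_n)$ is within $\mathcal{E}_2$ (in Kolmogorov distance) of $N(0,\sigma^2)$. Conditional on the data, $\sqrt n(\bar Y_n^* - \bar Y_n)$ is the normalized sum of $n$ i.i.d.\ draws (with replacement) from $\{Y_i - \bar Y_n\}$, so the conditional one-dimensional Berry--Esseen theorem gives a bound of the form $C\,\hat\mu_3 / (\hat\sigma_n^3 \sqrt n)$, where $\hat\sigma_n^2 = n^{-1}\sum_i (Y_i - \bar Y_n)^2$ and $\hat\mu_3 = n^{-1}\sum_i |Y_i - \bar Y_n|^3$ are the empirical second and (absolute) third moments of the centered data. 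The plan is then to use concentration of these empirical moments around their population counterparts: under the sub-exponential assumption $\|g_1^\top(X-\mu)\|_{\psi_1}<\infty$, the random variables $(Y_i - \mu_Y)^2$ and $|Y_i-\mu_Y|^3$ have controlled Orlicz norms (sub-exponential and ``sub-$\psi_{2/3}$'' respectively, since products of sub-exponentials and cubes still admit polynomial-in-$\log n$ deviation tails), so Bernstein-type inequalities yield that, except on an event of probability $\beta$ that we can take polynomially small in $n$, we have $\hat\sigma_n^2 \ge \sigma^2/2$ (say) and $\hat\mu_3 \le E[|g_1^\top(X-\mu)|^3] + (\text{poly-}\log n)\,\|g_1^\top(X-\mu)\|_{\psi_1}^3/\sqrt n$. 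Assembling these gives $\mathcal{E}_2 \le C\big(E[|g_1^\top(X-\mu)|^3]/(\sigma^3\sqrt n) + \|g_1^\top(X-\mu)\|_{\psi_1}^4 \log^{11}(n)/(\sigma^4\sqrt n)\big)$, and choosing $\beta = C/n$ gives the leading $C/n$ term.

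The main obstacle — really the only delicate point — is bookkeeping the logarithmic powers and the sub-exponential moment/Orlicz-norm manipulations so that the final exponent on $\log n$ comes out as stated ($\log^{11}(n)$) and all the tail-failure events can genuinely be absorbed into a $\beta$ of order $1/n$. Concretely, I expect the subtle parts to be: (i) controlling the lower tail of $\hat\sigma_n^2$, since $\mathcal{E}_2$ has $\hat\sigma_n^3$ in the denominator, so the event $\{\hat\sigma_n^2 \ge \sigma^2/2\}$ must be secured with the right failure probability and the sub-exponential Bernstein inequality must be invoked with the correct scaling of $\|(g_1^\top(X-\mu))^2\|_{\psi_1}$; (ii) handling the absolute third moment $\hat\mu_3$, whose summands are heavier-tailed — here one pays extra $\log n$ factors because the maximal summand among $n$ sub-exponential-cubed variables is of order $\log^3 n$, and combining this with a truncation/Bernstein argument is where the power $11$ on the logarithm accumulates; and (iii) the small discrepancy between recentering at $\bar Y_n$ versus $\mu_Y$ inside $\hat\mu_3$, which contributes lower-order terms via the triangle inequality and the already-established concentration of $\bar Y_n$. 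Once $\mathcal{E}_1$, $\mathcal{E}_2$, $\beta$ are in hand, the conclusion is immediate from \eqref{error CB}: the coverage error is at most $2\mathcal{E}_1 + 2B\mathcal{E}_2 + \beta$, which is exactly the claimed bound with the universal constants and the factor $B$ as displayed.
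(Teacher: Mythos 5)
Your overall architecture matches the paper exactly: instantiate Theorem \ref{general_CB_coverage} with explicit $\mathcal{E}_1$, $\mathcal{E}_2$, $\beta$, take $\mathcal{E}_1$ from the classical one-dimensional Berry--Esseen theorem, and set $\beta=C/n$. Where you genuinely diverge is in establishing the bootstrap condition \eqref{general_bootstrap_CLT}. The paper does \emph{not} prove this from scratch: it invokes Theorem 2.5 of \citet{lopes2022central} as a black box (with $\varrho=1$, $\Delta=0$, $\omega_1=\|g_1^\top(X-\mu)/\sigma\|_{\psi_1}$), which bounds the Kolmogorov distance between the conditional law of $\sqrt{n}(\bar Y_n^*-\bar Y_n)$ and the \emph{sampling} law of $\sqrt{n}(\bar Y_n-\mu_Y)$ by $C\|g_1^\top(X-\mu)\|_{\psi_1}^4\log^{11}(n)/(\sigma^4\sqrt n)$ with probability $1-C/n$, and then applies the triangle inequality with the Berry--Esseen bound to reach $\Phi(x/\sigma)$. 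That is where the exponent $11$ comes from -- it is inherited from Lopes's theorem, not something your truncation/Bernstein bookkeeping needs to reproduce; your elementary route (conditional Berry--Esseen plus concentration of $\hat\sigma_n^2$ and $\hat\mu_3$) would in fact yield a smaller power of $\log n$, which still implies the stated bound after noting $\sigma\lesssim\|g_1^\top(X-\mu)\|_{\psi_1}$. Your approach buys self-containedness and a sharper logarithm at the cost of carrying out the $\psi_{1/2}$/$\psi_{1/3}$ concentration estimates yourself; the paper's buys a three-line proof at the cost of the $\log^{11}(n)$ factor.

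One concrete step is missing from your plan as written. The conditional Berry--Esseen theorem compares the bootstrap law to $\Phi(x/\hat\sigma_n)$, whereas condition \eqref{general_bootstrap_CLT} requires closeness to $\Phi(x/\sigma)$ with the \emph{population} $\sigma$ -- and this matching of scales with \eqref{general_Berry_Esseen} is exactly what makes the $t$-statistic pivotal in Theorem \ref{general_CB_coverage}. You use the concentration of $\hat\sigma_n^2$ only to lower-bound the denominator $\hat\sigma_n^3$, but you must also add the term $\sup_x|\Phi(x/\hat\sigma_n)-\Phi(x/\sigma)|\lesssim|\hat\sigma_n^2/\sigma^2-1|$ and absorb it using the same two-sided concentration event. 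This is repairable with tools already in your plan (the resulting extra term is of order $\|g_1^\top(X-\mu)\|_{\psi_1}^2\sqrt{\log n}/(\sigma^2\sqrt n)$, dominated by the stated bound), but without it the hypothesis of Theorem \ref{general_CB_coverage} is not actually verified.
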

Note that the bootstrap in Theorem \ref{CB_coverage_subexp} effectively applies on the univariate $g_{1}^{\top}(X-\mu)$. Nonetheless, proving Theorem \ref{CB_coverage_subexp} requires tools from high-dimensional CLT \cite{lopes2022central,chernozhukov2020nearly}, as this appears the only line of work that investigates finite-sample bootstrap errors (for mean estimation). The order of the bound in terms of $p$ is controlled by $g_{1}^{\top}(X-\mu)$, and so if the latter is well-scaled by its standard deviation $\sigma$ in the sense that $E[|g_{1}^{\top}(X-\mu)/\sigma|^{3}],||g_{1}^{\top}(X-\mu)/\sigma||_{\psi_{1}}=O(1)$ (e.g., $X$ follows a multivariate normal distribution), then the order is independent of $p$, which means the error goes to $0$ for any $p$ as long as $n\rightarrow\infty$. However, if the orders of $E[|g_{1}^{\top}(X-\mu)/\sigma|^{3}]$ and  $||g_{1}^{\top}(X-\mu)/\sigma||_{\psi_{1}}$ depend on $p$, then the growing rate of $p$ must be restricted by $n$ to ensure the error goes to $0$.

Next, we further weaken the tail condition on $g_{1}^{\top}(X-\mu)$. We only assume $E[|g_{1}^{\top}(X-\mu)|^{q}]<\infty$ for some $q\geq4$. In this case, we have the following:

\begin{theorem}
\label{CB_coverage_moments}Suppose $g$ is a linear function in the form of $g(x)=g_{1}^{\top}x+g_{2}$. Assume that $\sigma^{2}=g_{1}^{\top}\Sigma g_{1}>0$ and $E[|g_{1}^{\top}(X-\mu)|^{q}]<\infty$ for some $q\geq4$. Then for any $n\geq3$, we have the following finite-sample bound on the cheap bootstrap coverage accuracy%
\begin{align*}
& \left\vert P(|g(\mu)-g(\bar{X}_{n})|\leq t_{B,1-\alpha/2}S_{n,B})  -(1-\alpha)\right\vert\\
&  \leq \frac{B  C_{1}\sqrt{\log n}}{n^{1/2-3/(2q)}}\max\left\{  E[|g_{1}^{\top}(X-\mu)/\sigma|^{q}]^{1/q},\right.\\
& \left.\sqrt{E[|g_{1}^{\top}(X-\mu)/\sigma|^{4}]}\right\}    +C\frac{E[|g_{1}^{\top}(X-\mu)|^{3}]}{\sigma^{3}\sqrt{n}},
\end{align*}
where $C$ is a universal constant and $C_{1}$ is a constant depending only on $q$.
\end{theorem}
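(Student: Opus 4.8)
\textbf{Proof proposal for Theorem~\ref{CB_coverage_moments}.} Because $g$ is linear, the whole problem collapses to a scalar mean: set $Y_i = g_1^\top X_i + g_2$, so that $g(\bar X_n) = \bar Y_n$ and $g(\bar X_n^{*b}) = \bar Y_n^{*b}$ are the sample mean and resample means of the i.i.d.\ scalars $Y_i$, which have mean $g(\mu)$, variance $\sigma^2 = g_1^\top\Sigma g_1 > 0$, and finite $q$-th central moment $E|Y - EY|^q = E|g_1^\top(X-\mu)|^q < \infty$. The plan is then to apply Theorem~\ref{general_CB_coverage} to this one-dimensional model, so that everything reduces to producing the three primitives: a Berry--Esseen bound $\mathcal{E}_1$ for $\sqrt n(\bar Y_n - g(\mu))$, and a high-probability bootstrap normal-approximation bound $\mathcal{E}_2$ with exceptional probability $\beta$ for $\sqrt n(\bar Y_n^{*} - \bar Y_n)$, each made explicit in $n$, $q$ and the (normalized) moments of $g_1^\top(X-\mu)$.

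For $\mathcal{E}_1$ I would simply invoke the classical Berry--Esseen inequality for sums of i.i.d.\ real random variables, which applies since $q \geq 4 > 3$: this gives $\mathcal{E}_1 \leq C\,E|g_1^\top(X-\mu)|^3/(\sigma^3\sqrt n)$, and after the factor $2$ in \eqref{error CB} this contributes precisely the last term $C\,E[|g_1^\top(X-\mu)|^3]/(\sigma^3\sqrt n)$ of the claimed bound.

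The substantive step is the bootstrap normal approximation, i.e.\ controlling $\sup_x\lvert P^{*}(\sqrt n(\bar Y_n^{*} - \bar Y_n)\leq x) - \Phi(x/\sigma)\rvert$ on a high-probability event. Conditional on the data, $\bar Y_n^{*}$ is itself an i.i.d.\ sample mean drawn from $\hat P_{Y,n}$ with conditional variance $\hat\sigma_n^2 = n^{-1}\sum_i(Y_i - \bar Y_n)^2$, so the conditional Berry--Esseen inequality bounds $\sup_x\lvert P^{*}(\sqrt n(\bar Y_n^{*} - \bar Y_n)\leq x) - \Phi(x/\hat\sigma_n)\rvert$ by $C\,\hat m_{n,3}/(\hat\sigma_n^3\sqrt n)$ with $\hat m_{n,3} = n^{-1}\sum_i\lvert Y_i - \bar Y_n\rvert^3$, while the mismatch $\sup_x\lvert\Phi(x/\hat\sigma_n) - \Phi(x/\sigma)\rvert$ is at most $C\lvert\hat\sigma_n^2 - \sigma^2\rvert/\sigma^2$. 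Under only $q\geq 4$ moments neither $\hat m_{n,3}$ nor $\hat\sigma_n^2$ concentrates exponentially, so — in contrast to the sub-exponential case of Theorem~\ref{CB_coverage_subexp} — I would control them by a truncation-plus-Markov argument at a level of order $n^{1/q}$ applied to the $q$-th moment of $g_1^\top(X-\mu)$, which is exactly the device underlying the finite-sample bootstrap approximations for sample means in \citet{lopes2022central,chernozhukov2020nearly}. Specialized to the scalar case and with the truncation level optimized, this yields, on an event of probability at least $1-\beta$ with $\beta$ of order at most $n^{-(1/2-3/(2q))}$, the bound $\mathcal{E}_2 \leq C_1\, n^{-(1/2-3/(2q))}\sqrt{\log n}\,\max\{E[|g_1^\top(X-\mu)/\sigma|^q]^{1/q},\, \sqrt{E[|g_1^\top(X-\mu)/\sigma|^4]}\}$.

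Finally, plugging $\mathcal{E}_1$, $\mathcal{E}_2$, $\beta$ into \eqref{error CB} gives coverage error at most $2\mathcal{E}_1 + 2B\mathcal{E}_2 + \beta$; since $\sqrt{E[|g_1^\top(X-\mu)/\sigma|^4]}\geq 1$ by Jensen, both $\beta$ and the $C/n$-type remainders are dominated by $B\mathcal{E}_2$ and absorbed into the leading term, producing the stated inequality. I expect the main obstacle to be the third step: extracting the sharp exponent $n^{1/2-3/(2q)}$ together with the single $\sqrt{\log n}$ factor from merely polynomial moments requires a careful truncation analysis of the empirical third moment and empirical variance — equivalently, invoking the moment-based bootstrap CLT of \citet{lopes2022central} or \citet{chernozhukov2020nearly} in dimension one and tracking its dependence on $n$, $q$ and the normalized moments — whereas the remainder is routine bookkeeping around Theorem~\ref{general_CB_coverage}.
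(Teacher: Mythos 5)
Your skeleton is the same as the paper's: reduce to the scalar i.i.d.\ variables $g_1^\top X_i+g_2$, feed Theorem \ref{general_CB_coverage} with the classical Berry--Esseen inequality as $\mathcal{E}_1$ (which indeed produces the final term $CE[|g_1^\top(X-\mu)|^3]/(\sigma^3\sqrt n)$), and obtain the pair $(\mathcal{E}_2,\beta)$ from a moment-based finite-sample bootstrap CLT. The gap is precisely the step you flag as the main obstacle: the bound $\mathcal{E}_2\leq C_1 n^{-(1/2-3/(2q))}\sqrt{\log n}\,\max\{E[|g_1^\top(X-\mu)/\sigma|^q]^{1/q},\sqrt{E[|g_1^\top(X-\mu)/\sigma|^4]}\}$ with $\beta\lesssim n^{-(1/2-3/(2q))}$ is asserted, not derived. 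Your primary route (conditional Berry--Esseen in terms of $\hat m_{n,3}$ and $\hat\sigma_n$, plus truncation-and-Markov control of these empirical moments) is plausible in spirit, but under only $q$-th moments the trade-off between the size of the deviations and the exceptional probability $\beta$ (which enters the coverage error additively through \eqref{error CB}) is exactly where the exponent $1/2-3/(2q)$ and the single $\sqrt{\log n}$ must come from, and none of that bookkeeping is carried out. Your fallback --- invoking the moment-based bootstrap CLT of \citet{chernozhukov2020nearly} or \citet{lopes2022central} ``in dimension one'' --- is what the paper actually does, except that it cannot be invoked in dimension one: Corollary 3.2 of \citet{chernozhukov2020nearly} applies only to random vectors of dimension at least three. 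The paper's workaround is a padding device: from $3n$ independent copies it forms $n$ i.i.d.\ vectors $\tilde X_i=(g_1^\top(X_{i1}-\mu),g_1^\top(X_{i2}-\mu),g_1^\top(X_{i3}-\mu))^\top\in\mathbb{R}^3$, verifies conditions (E.3) and (M) with $B_n=\max\{3E[|g_1^\top(X-\mu)/\sigma|^q]^{1/q},\sqrt{E[|g_1^\top(X-\mu)/\sigma|^4]}\}$ (and $\sigma_{\ast,W}=1$ since the padded coordinates are independent), and then specializes the hyperrectangle bound to $A=(-\infty,x]\times\mathbb{R}\times\mathbb{R}$ to recover the one-dimensional Kolmogorov distance.

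Two smaller discrepancies in the same step: in the paper the exceptional probability is $\beta=1/\sqrt n$, not $n^{-(1/2-3/(2q))}$, and the invoked $\mathcal{E}_2$ carries an additional $(\log n)^{3/2}/\sqrt n$ term. Both are absorbed at the end --- $\beta$ into the Berry--Esseen term using $E[|g_1^\top(X-\mu)|^3]/\sigma^3\geq1$, and $(\log n)^{3/2}/\sqrt n$ into $\sqrt{\log n}/n^{1/2-3/(2q)}$ since the ratio is $\log n\cdot n^{-3/(2q)}\to0$ --- so your final absorption paragraph is in the right spirit. In summary: correct strategy and correct primitives, but the quantitative heart of the theorem (the stated exponent and moment functional for $\mathcal{E}_2$, with a sufficiently small $\beta$) is taken on faith, and the literature shortcut you mention needs the three-dimensional padding trick to be legitimately applicable.
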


The implication of Theorem \ref{CB_coverage_moments} on the choice of $p$ is similar to Theorem \ref{CB_coverage_subexp}. In parallel to the above, explicit finite-sample bounds for standard quantile-based bootstrap methods can also be obtained by means of Theorem \ref{general_other_coverages} under the assumptions in Theorems \ref{CB_coverage_nonlinear}, \ref{CB_coverage_subexp} or \ref{CB_coverage_moments} (see Appendix \ref{sec:explicit_others}). 

\section{Numerical Experiments\label{sec:num}}


We consider various high-dimensional examples:

\textbf{Ellipsoidal estimation:} The estimation target is $g(\mu)=||\mu||^2$, where $\mu$ is the mean of $X\in\mathbb R^p$ with ground-truth distribution $N(0.02\mathbf{1}_p,0.01I_{p\times p})$. Sample size $n=10^{5}$ and dimension $p=2.5\times10^{4}$. 

\textbf{Sinusoidal estimation:} The estimation target is $g(\mu)=\sum_{i=1}^{p}\sin(\mu_{i})$, where $\mu=(\mu_i)_{i=1}^{p}$ is the mean of $X\in\mathbb R^p$ with ground-truth distribution $N(\mathbf{0},0.01I_{p\times p})$. Sample size $n=10^{5}$ and dimension $p=2.5\times10^{4}$. 

\textbf{Linear regression with independent covariates:} Consider the true model $Y=X^{\top}\beta+\varepsilon$, where $X\in\mathbb{R}^{p}$ follows $N(\mathbf{0},0.01I_{p\times p})$ and $\varepsilon\sim N(0,1)$ independent of $X$. The first, second and last $1/3$ components of $\beta=(\beta_{i})_{i=1}^{p}$ are $0,2,-1$ respectively. We estimate $\beta$ given i.i.d. data $(X_{i},Y_{i})_{i=1}^{n}$ with $n=10^{5}$ and $p=9000$.


\textbf{Logistic regression with independent covariates:} Consider the true model $Y\in\{0,1\}$, $X\in\mathbb{R}^{p}$, $P(Y=1|X)=\exp(X^{\top}\beta)/(1+\exp(X^{\top}\beta))$, where $X\sim N(\mathbf{0},0.01I_{p\times p})$. The first $300$ components of $\beta_{i}$'s are $1$, the second $300$ components $-1$ and the rest $0$. As suggested in \citet{sur2019modern}, we choose such values of $\beta_{i}$'s to make sure $Var(X^{\top}\beta)=6$ does not increase with $p$ so that $P(Y=1|X)$ is not trivially equal to $0$ or $1$ in most cases. We estimate $\beta$ given i.i.d. data $(X_{i},Y_{i})_{i=1}^{n}$ with $n=10^{5}$ and $p=9000$.

\textbf{Stochastic simulation model:} Consider a stochastic computer communication model used to calculate the steady-state average message delay (\citet{cheng1997sensitivity,lin2015single,lam2022subsampling}; see Appendix \ref{sec:computer model} for details). This problem can be cast as computing $\psi(P_1,\ldots,P_p)$ where $\psi$ represents this expensive simulation model (due to the need to run very long time in order to reach steady state) and $P_j$'s denote the input distributions, $p=13$ in total. The data sizes for observing these 13 input models range from $3\times10^{4}$ to $6\times10^{4}$. 

\textbf{A real data example:} We run logistic regression on the RCV1-v2 data in \citet{lewis2004rcv1}. This dataset contains $n=804414$ manually categorized newswire stories with a total of $p=47236$ features. ``Economics'' (``ECAT'') is chosen as the $+1$ label. We target coefficient estimation.

\textbf{Linear regression with dependent covariates:} We consider the same linear regression setup as before but with two different distributions of $X$. One is $X\sim N(\mathbf{0},0.01\Sigma)$ where the components of $\Sigma$ are $\Sigma_{ij}=0.8^{|i-j|}$. For this distribution, the $i$-th component and $j$-th component of $X$ are more dependent when $i$ and $j$ are closer to each other. The other distribution is $X\sim N(\mathbf{0},0.01AA^{\top})$, where $A$ is a random matrix whose components are i.i.d. from $U(0,1)$. The two cases are referred to as ``exponential decay'' and ``random covariance matrix'' respectively.

\textbf{Logistic regression with dependent covariates:} We consider the same logistic regression setup as before but change the distribution of $X$ into the exponential decay distribution mentioned above. Here we do not consider the random covariance matrix case because the significant noisiness of $X$ makes $Var(X^{\top}\beta)$ so large that $P(Y=1|X)$ is trivially equal to $0$ or $1$ in most cases, which is also avoided in other work (e.g., \citet{sur2019modern}). 

\textbf{Ridge regression with $p>n$:} Consider the true linear model $Y=X^{\top}\beta+\varepsilon$, where $\varepsilon\sim N(0,1)$ is independent of $X$. The first, second and last $1/3$ components of $\beta=(\beta_{i})_{i=1}^{p}$ are $0,2,-1$ respectively. We consider three distributions of $X$. The first one is $X\sim N(\mathbf{0},0.01I_{p\times p})$ (referred to as ``independent''). The other two are exponential decay distribution and random covariance matrix mentioned above. Given i.i.d. data $(X_{i},Y_{i})_{i=1}^{n}$ with $n=8000$ and $p=9000$, we estimate $\beta$ by means of ridge regression which minimizes $||Y-X\beta||^2+\lambda||\beta||^2$. 

\textbf{(Regularized) logistic regression with varying dimensions $p$:} We use the same setup of logistic regression with independent covariates as before (e.g. $n=10^5$) but use $p\in\{12000,15000,18000,21000,25000\}$ to test the valid boundary of $p$ (i.e., the maximum $p$ that makes cheap bootstrap work) in this problem. Moreover, we also run a regularized version by adding the $\ell_2$ regularization term $||\beta||^2/2$ to the log likelihood function of logistic regression to see the effect of regularization.

\textbf{Setups and comparison benchmarks. }
In each example above, our targets are $95\%$-level confidence intervals for the target parameters. We test four bootstrap confidence intervals: 1) cheap bootstrap \eqref{cheap CI}; 2) basic bootstrap described in Section \ref{sec:method}; 3) percentile bootstrap described in Section \ref{sec:method}; 4) standard error bootstrap that uses standard normal quantile and standard deviation of $\hat\psi_{n}^{\ast b}$'s in lieu of $t_{B,1-\alpha/2}$ and $S_{n,B}$ respectively in \eqref{cheap CI}. For each setup except the real-data example, we run $1000$ experimental repetitions, each time generating a new data set from the ground truth distribution and construct the intervals. We report the empirical coverage and average interval width over these repetitions. For examples with more than one target estimation quantity, we further average the coverages and widths over all these targets. For the high-dimensional linear regression with independent covariates, we additionally show a box plot of the coverage probabilities and confidence interval widths of each individual $\beta_i$. We vary the number of resamples $B$ from 1 to 10 in all examples and report the running time (i.e., model fitting time for one point estimate and $B$ bootstrap estimates; the time for outputting the confidence intervals using these estimates is negligible compared to the model fitting time) in the virtual machine e2-highmem-2 in Google Cloud Platform. Some examples have larger scale and thus are run in the virtual machine e2-highmem-8 with larger memory and better CPU, whose running time will be starred (*).



\textbf{Results and discussions. }Tables \ref{num_table}-\ref{table_different_p_l2} and Figure \ref{figure_linear} describe our results (Tables \ref{dependent_table}-\ref{table_different_p_l2} are delegated to Appendix \ref{sec:tables} due to the limitation of space), where we report $B=1,2,5,10$. ``CB'', ``BB'', ``PB'' and ``SEB'' in Figure \ref{figure_linear} stand for the cheap bootstrap, basic bootstrap, percentile bootstrap and standard error bootstrap respectively.

\emph{Coverage probability: }According to Table \ref{num_table}, the cheap bootstrap performs the best in terms of the coverage probabilities in almost all cases (except the real-data example where we cannot validate and only report the interval widths). In all but three entries, the cheap bootstrap gives the closest coverages to the nominal 95\% among all considered bootstrap methods, and in all but three entries the cheap bootstrap coverages are above 95\%. In contrast, other approaches are substantially below the nominal level except for very few cases with $B=10$. For example, in the ellipsoidal estimation, cheap bootstrap coverage probabilities are above 95\% for all considered $B$'s, while the highest coverage among other bootstrap methods is 82.1\% even for $B=10$. These observations corroborate with theory since unlike standard bootstrap methods, the cheap bootstrap gives small coverage errors even with very small $B$. Note that when $B=1$, the entries of other bootstrap methods are all ``N.A.'' since quantile-based approaches cannot even output two distinct finite numbers using one resample, and standard error bootstrap uses $B-1$ in the denominator of the sample variance. Similar results are also observed from Tables \ref{dependent_table} and \ref{table_ridge}, which confirm that even for distributions with dependent components and ridge regression with $p>n$, the cheap bootstrap continues to work and outperform other bootstrap methods. In terms of the individual plot in Figure \ref{figure_linear}, the cheap bootstrap has coverages close to the nominal level 95\% for almost all $\beta_i$'s for any $B$. On the other hand, standard error bootstrap coverages are above 90\% only when $B=10$ while the quantile-based bootstrap coverages are still below 85\% for most of the $\beta_i$'s even for $B=10$. 

\emph{Interval width: }Cheap bootstrap intervals are wider than other bootstrap intervals. However, these widths appear to decay very fast for the first few $B$'s. In all examples, they decrease by around $2/3$ from $B=1$ to $B=2$ and by around $4/5$ from $B=1$ to $B=10$, hinting a quick enhancement of statistical efficiency as the computation budget increases. Note that while the other bootstrap intervals are shorter, they fall short in attaining the nominal coverage. It is thus reasonable to see the larger widths of the cheap bootstrap intervals which appear to push up the coverages by the right amounts.

\emph{Valid boundary of $p$:} Table \ref{table_different_p} displays the results of logistic regression with varying $p$. We observe that the confidence widths have a dramatic increase when $p$ increases to 21000 from 18000, which hints that the validity boundary of $p$ under this setting should be at most 21000. Moreover, even for $p=15000$ and $p=18000$, the coverage probabilities are almost 100\% which is overly large compared to the nominal level 95\%. Combining with our existing favorable results for $p=9000$, for this problem it appears the validity boundary could be indeed around 9000. As a comparison, Table \ref{table_different_p_l2} displays the results of regularized logistic regression. We can see although the interval widths seem to converge with the regularization, the coverage probabilities are still overly large for large $p$. For this regularized example, the validity boundary could be around 12000 or 15000, which is larger than the previous boundary. This contrast shows that regularization could be helpful to achieve a valid confidence interval when $p$ is large. But no matter whether we add the regularization, it seems the validity boundary is always around the order $p=o(n)$ for logistic regression. The results here, combined with the favorable results for ridge regression with $p>n$ in Table \ref{table_ridge}, provide evidence that the precise validity threshold of $p$ could be model-dependent.



\begin{figure}[ht]
\centering

\subfloat[Coverage probability]{\includegraphics[width=0.45\textwidth]{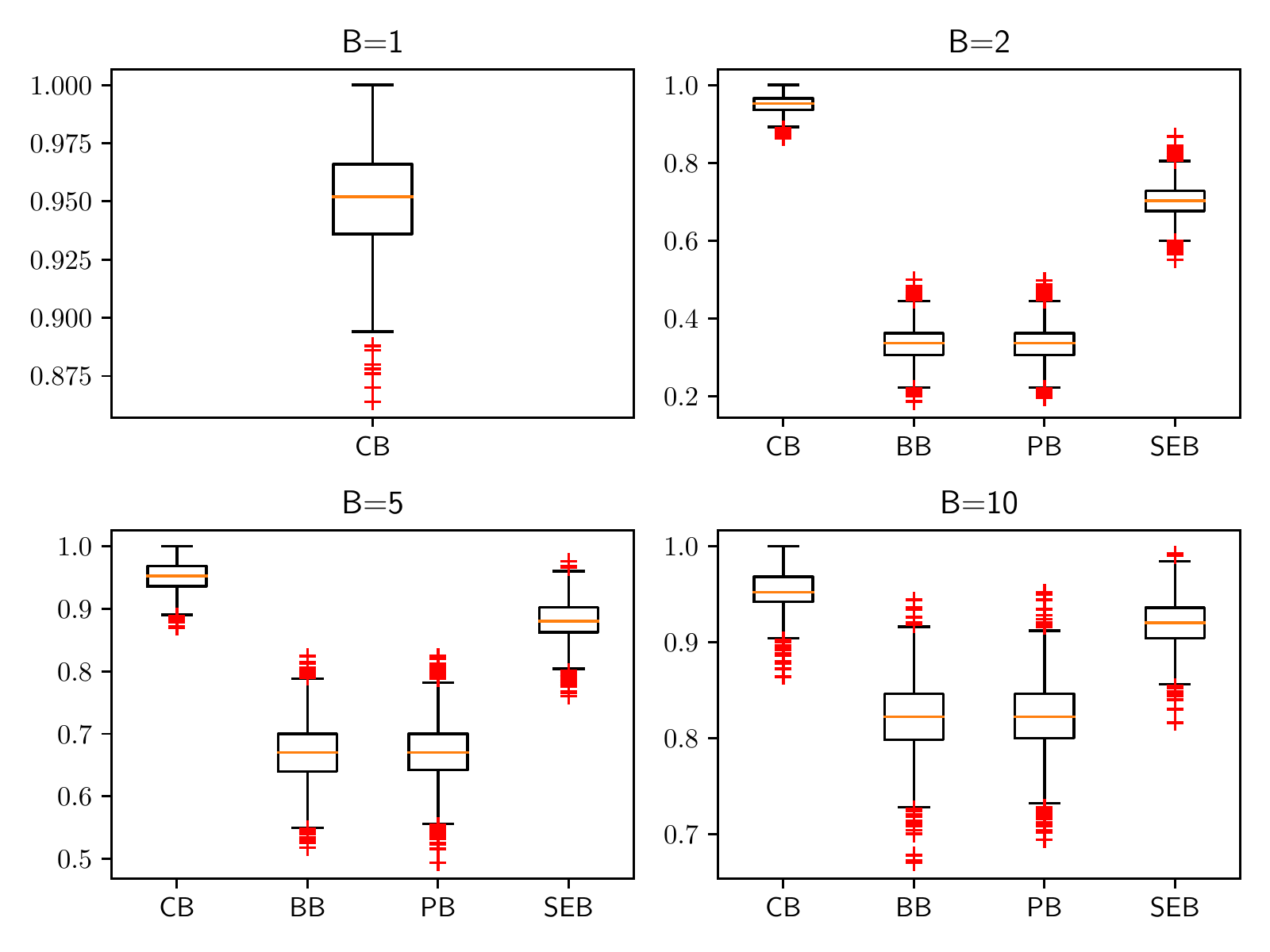}}\\
\subfloat[Confidence interval width]{\includegraphics[width=0.45\textwidth]{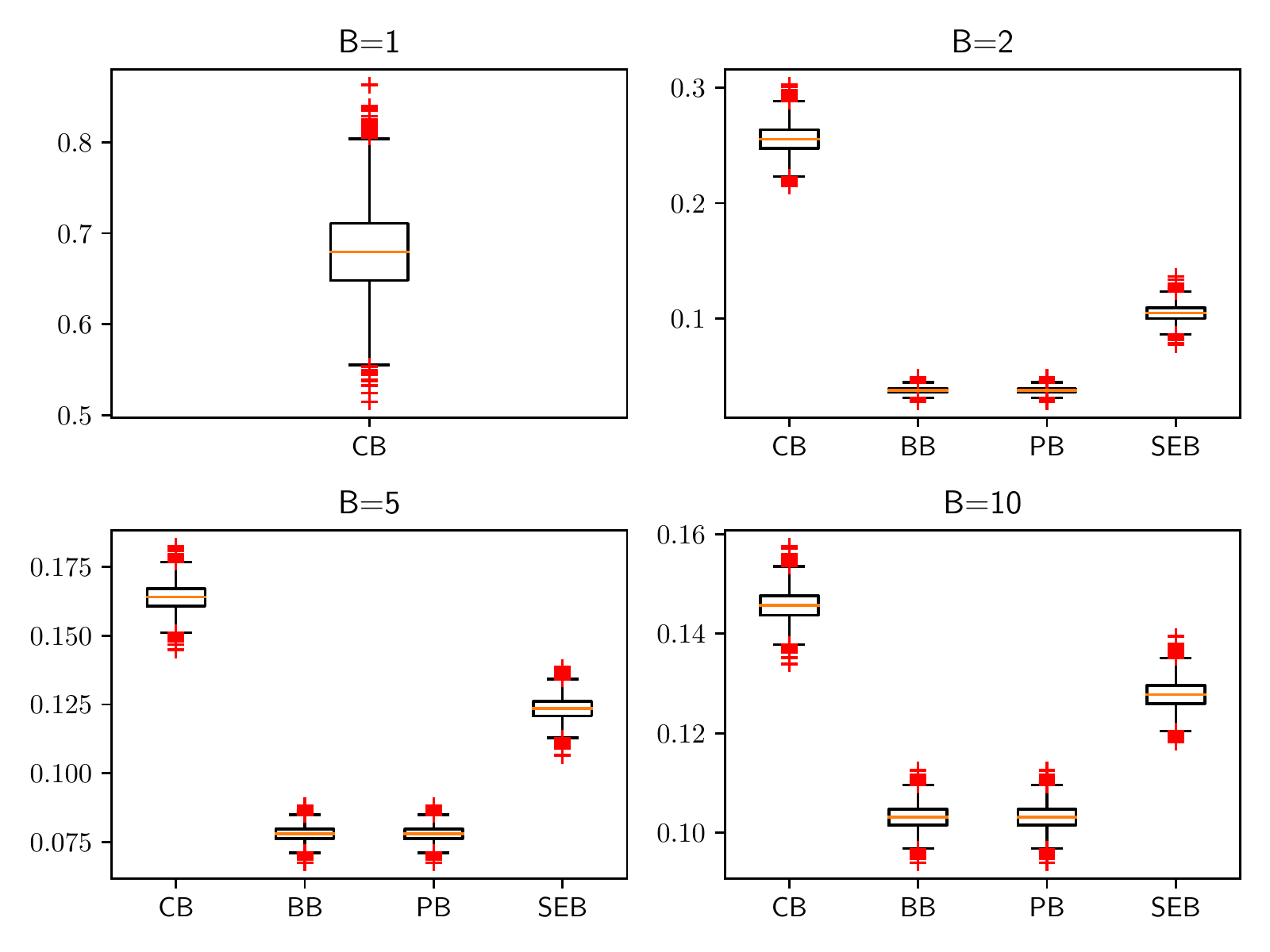}}

\caption{Box plots of empirical coverage probabilities and confidence interval widths of all $\beta_i$'s for different numbers of resamples in a linear regression.}%
\label{figure_linear}

\end{figure}

\begin{table*}[ht]
\centering
\caption{Coverage probabilities (Pro.), confidence interval widths (Wid.) and running time (unit: second) of the first six numerical examples. The closest coverage probability to the nominal 95\% level among all methods in each setting is \textbf{bold}.}
\medskip
\label{num_table}
\resizebox{2\columnwidth}{!}{
\begin{tblr}{
  cells = {c},
  cell{1}{1} = {r=2}{},
  cell{1}{2} = {r=2}{},
  cell{1}{3} = {c=2}{},
  cell{1}{5} = {c=2}{},
  cell{1}{7} = {c=2}{},
  cell{1}{9} = {c=2}{},
  cell{1}{11} = {r=2}{},
  hline{1,27} = {-}{0.08em},
  hline{3,7,11,15,19,23} = {1-10}{0.03em},
  hline{3,7,11,15,19,23} = {11}{},
}
Example      & $B$ & Cheap Bootstrap &                     & Basic Bootstrap &                     & Percentile Bootstrap &                     & Standard Error Bootstrap &                     & Running Time \\
             &     & Pro.            & Wid.                & Pro.            & Wid.                & Pro.                 & Wid.                & Pro.                     & Wid.                &              \\
~            & 1   & \textbf{96.0\%} & 0.069               & N.A.            & N.A.                & N.A.                 & N.A.                & N.A.                     & N.A.                & 9*             \\
Ellipsoidal  & 2   & \textbf{97.3\%} & 0.026               & 32.2\%          & 0.002               & 5.5\%                & 0.002               & 55.1\%                   & 0.006               & 15*             \\
estimation   & 5   & \textbf{97.4\%} & 0.016               & 66.0\%          & 0.005               & 13.6\%               & 0.005               & 70.1\%                   & 0.007               & 36*             \\
             & 10  & \textbf{97.5\%} & 0.014               & 82.1\%          & 0.006               & 20.8\%               & 0.006               & 73.6\%                   & 0.008               & 70*             \\
             & 1   & \textbf{94.4\%} & 0.958               & N.A.            & N.A.                & N.A.                 & N.A.                & N.A.                     & N.A.                & 7*             \\
Sinusoidal   & 2   & \textbf{95.2\%} & 0.384               & 29.6\%          & 0.051               & 35.2\%               & 0.051               & 63.2\%                   & 0.142               & 12*             \\
estimation   & 5   & \textbf{93.6\%} & 0.248               & 71.2\%          & 0.117               & 66.4\%               & 0.117               & 86.4\%                   & 0.187               & 27*             \\
             & 10  & \textbf{94.4\%} & 0.222               & 84.0\%          & 0.156               & 83.2\%               & 0.156               & 89.6\%                   & 0.196               & 52*             \\
Linear       & 1   & \textbf{95.1\%} & 0.68                & N.A.            & N.A.                & N.A.                 & N.A.                & N.A.                     & N.A.                & 443          \\
regression   & 2   & \textbf{95.1\%} & 0.256               & 33.5\%          & 0.038               & 33.5\%               & 0.038               & 70.2\%                   & 0.105               & 666          \\
(independent & 5   & \textbf{95.2\%} & 0.164               & 67.0\%          & 0.078               & 67.0\%               & 0.078               & 88.1\%                   & 0.123               & 1337         \\
covariates)  & 10  & \textbf{95.2\%} & 0.146               & 82.2\%          & 0.103               & 82.2\%               & 0.103               & 92.1\%                   & 0.128               & 2454         \\
Logistic     & 1   & \textbf{96.1\%} & 2.866               & N.A.            & N.A.                & N.A.                 & N.A.                & N.A.                     & N.A.                & 50           \\
regression   & 2   & \textbf{96.9\%} & 1.074               & 39.7\%          & 0.147               & 31.7\%               & 0.147               & 73.4\%                   & 0.407               & 81           \\
(independent & 5   & \textbf{97.9\%} & 0.685               & 77.9\%          & 0.302               & 63.3\%               & 0.302               & 91.0\%                   & 0.479               & 175          \\
covariates)  & 10  & 98.4\%          & 0.609               & 91.9\%          & 0.400               & 77.7\%               & 0.400               & \textbf{94.6\%}          & 0.496               & 331          \\
             & 1   & \textbf{96.9\%} & 1.757$\times 10^{-3}$ & N.A.            & N.A.                & N.A.                 & N.A.                & N.A.                     & N.A.                & 1             \\
Stochastic   & 2   & \textbf{98.8\%} & 6.417$\times 10^{-4}$ & 21.9\%          & 6.962$\times 10^{-5}$ & 47.0\%               & 6.962$\times 10^{-5}$ & 68.7\%                   & 1.930$\times 10^{-4}$ & 2             \\
simulation   & 5   & 99.7\%          & 4.044$\times 10^{-4}$ & 43.2\%          & 1.428$\times 10^{-4}$ & \textbf{90.4\%}      & 1.428$\times 10^{-4}$ & 87.1\%                   & 2.269$\times 10^{-4}$ & 3             \\
             & 10  & 100\%           & 3.591$\times 10^{-4}$ & 55.6\%          & 1.915$\times 10^{-4}$ & 99.8\%               & 1.915$\times 10^{-4}$ & \textbf{92.6\%}          & 2.375$\times 10^{-4}$ & 5             \\
             & 1   & N.A.            & 3.594               & N.A.            & N.A.                & N.A.                 & N.A.                & N.A.                     & N.A.                & 156          \\
Real         & 2   & N.A.            & 1.361               & N.A.            & 0.201               & N.A.                 & 0.201               & N.A.                     & 0.556               & 233          \\
data         & 5   & N.A.            & 0.877               & N.A.            & 0.414               & N.A.                 & 0.414               & N.A.                     & 0.658               & 464          \\
             & 10  & N.A.            & 0.779               & N.A.            & 0.547               & N.A.                 & 0.547               & N.A.                     & 0.682               & 849          
\end{tblr}
}
\end{table*}

\section{Discussions and Other Connections}
In this paper, we show how to run the bootstrap that achieves statistically valid coverage with very low cost, in terms of the number of resamples, even when the problem dimension grows closely with the data size. This is made possible by using sample-resample independence from a recent ``cheap" bootstrap perspective. We provide general finite-sample coverage error bounds to support our validity, and specialize these bounds to explicit models including function-of-mean models with sub-Gaussian tails and linear models with weaker tail conditions. We discuss how our approach can operate with as low as one resample in attaining valid interval coverage in large sample. At the same time, the interval constructed with one resample tends to be wide, but fortunately shrinks quickly as the number of resamples increases from one even slightly. We run a wide set of numerical experiments to validate our theory and show the outperformance of our method compared to other benchmarks. As our numerical experiments hint that the tight growth rate of $p$ in terms of $n$ could be model-dependent, a future investigation is to establish more precise finite-sample bounds that are tight in some appropriate sense.


We close this paper by positioning our results in the broader literature. First, our work is related to bootstrap coverage analysis. The commonest approach is to use the Edgeworth expansion that reveals the asymptotic higher-order terms in the coverage errors; see the comprehensive monograph \cite{hall2013bootstrap}. It is only until recently when finite-sample bounds on bootstrap appear, mostly in the high-dimensional CLT literature where the target is sample mean (\citet{chernozhukov2017central,lopes2022central,chernozhukov2020nearly} and references therein). They aim to prove a uniform finite-sample bound of normal approximation of the (bootstrap) sample mean over all hyperrectangles. An alternative approach is to use Stein's method \cite{fang2021high}.

Second, within the bootstrap framework, various approaches have been proposed to reduce the Monte Carlo sampling effort by, e.g., variance reduction such as importance sampling \cite{booth1993simple}, or analytic approximation especially when applying iterated bootstraps \cite{booth1994monte,lee1995asymptotic}. These methods, however, require additional knowledge such as an explicit way to calculate variance, or focus on tail estimation issue. The closest work to the cheap bootstrap idea we utilize in this paper is \citet{hall1986number} who investigates the number of resamples for one-sided bootstrap intervals. Nonetheless, \citet{hall1986number} suggests a minimum of 19 for $B$ in a 95\%-level interval, obtained via an order-statistics calculation, which is still much larger than the minimal choice $B=1$ in the cheap bootstrap.

\section*{Acknowledgement}
We gratefully acknowledge support from the National Science Foundation under grants CAREER CMMI-1834710 and IIS-1849280.

\bibliography{highdim_cheap.bib}

\begin{thebibliography}{39}
\providecommand{\natexlab}[1]{#1}
\providecommand{\url}[1]{\texttt{#1}}
\expandafter\ifx\csname urlstyle\endcsname\relax
  \providecommand{\doi}[1]{doi: #1}\else
  \providecommand{\doi}{doi: \begingroup \urlstyle{rm}\Url}\fi

\bibitem[Alaa \& Van Der~Schaar(2020)Alaa and Van
  Der~Schaar]{alaa2020discriminative}
Alaa, A. and Van Der~Schaar, M.
\newblock Discriminative jackknife: Quantifying uncertainty in deep learning
  via higher-order influence functions.
\newblock In \emph{Proceedings of the 37th International Conference on Machine
  Learning}, pp.\  165--174. PMLR, 2020.

\bibitem[Balakrishnan \& Madigan(2008)Balakrishnan and
  Madigan]{balakrishnan2008algorithms}
Balakrishnan, S. and Madigan, D.
\newblock Algorithms for sparse linear classifiers in the massive data setting.
\newblock \emph{Journal of Machine Learning Research}, 9\penalty0
  (10):\penalty0 313--337, 2008.

\bibitem[Bentkus(2003)]{bentkus2003dependence}
Bentkus, V.
\newblock On the dependence of the {B}erry--{E}sseen bound on dimension.
\newblock \emph{Journal of Statistical Planning and Inference}, 113\penalty0
  (2):\penalty0 385--402, 2003.

\bibitem[Beran(1987)]{beran1987prepivoting}
Beran, R.
\newblock Prepivoting to reduce level error of confidence sets.
\newblock \emph{Biometrika}, 74\penalty0 (3):\penalty0 457--468, 1987.

\bibitem[Booth \& Do(1993)Booth and Do]{booth1993simple}
Booth, J.~G. and Do, K.-A.
\newblock Simple and efficient methods for constructing bootstrap confidence
  intervals.
\newblock \emph{Computational Statistics}, 8:\penalty0 333--333, 1993.

\bibitem[Booth \& Hall(1994)Booth and Hall]{booth1994monte}
Booth, J.~G. and Hall, P.
\newblock Monte {C}arlo approximation and the iterated bootstrap.
\newblock \emph{Biometrika}, 81\penalty0 (2):\penalty0 331--340, 1994.

\bibitem[Cheng \& Holland(1997)Cheng and Holland]{cheng1997sensitivity}
Cheng, R.~C. and Holland, W.
\newblock Sensitivity of computer simulation experiments to errors in input
  data.
\newblock \emph{Journal of Statistical Computation and Simulation}, 57\penalty0
  (1-4):\penalty0 219--241, 1997.

\bibitem[Chernozhukov et~al.(2017)Chernozhukov, Chetverikov, and
  Kato]{chernozhukov2017central}
Chernozhukov, V., Chetverikov, D., and Kato, K.
\newblock Central limit theorems and bootstrap in high dimensions.
\newblock \emph{Annals of Probability}, 45\penalty0 (4):\penalty0 2309--2352,
  2017.

\bibitem[Chernozhukov et~al.(2020)Chernozhukov, Chetverikov, and
  Koike]{chernozhukov2020nearly}
Chernozhukov, V., Chetverikov, D., and Koike, Y.
\newblock Nearly optimal central limit theorem and bootstrap approximations in
  high dimensions.
\newblock \emph{arXiv preprint arXiv:2012.09513}, 2020.

\bibitem[Davison \& Hinkley(1997)Davison and Hinkley]{davison1997bootstrap}
Davison, A.~C. and Hinkley, D.~V.
\newblock \emph{Bootstrap Methods and Their Application}.
\newblock Cambridge University Press, 1997.

\bibitem[DiCiccio \& Tibshirani(1987)DiCiccio and
  Tibshirani]{diciccio1987bootstrap}
DiCiccio, T. and Tibshirani, R.
\newblock Bootstrap confidence intervals and bootstrap approximations.
\newblock \emph{Journal of the American Statistical Association}, 82\penalty0
  (397):\penalty0 163--170, 1987.

\bibitem[DiCiccio et~al.(1996)DiCiccio, Efron, et~al.]{diciccio1996bootstrap}
DiCiccio, T.~J., Efron, B., et~al.
\newblock Bootstrap confidence intervals.
\newblock \emph{Statistical Science}, 11\penalty0 (3):\penalty0 189--228, 1996.

\bibitem[Efron(1987)]{efron1987better}
Efron, B.
\newblock Better bootstrap confidence intervals.
\newblock \emph{Journal of the American statistical Association}, 82\penalty0
  (397):\penalty0 171--185, 1987.

\bibitem[Efron \& Tibshirani(1994)Efron and Tibshirani]{efron1994introduction}
Efron, B. and Tibshirani, R.~J.
\newblock \emph{An Introduction to the Bootstrap}.
\newblock Chapman \& Hall/CRC, 1994.

\bibitem[Fang \& Koike(2021)Fang and Koike]{fang2021high}
Fang, X. and Koike, Y.
\newblock High-dimensional central limit theorems by {S}tein’s method.
\newblock \emph{Annals of Applied Probability}, 31\penalty0 (4):\penalty0
  1660--1686, 2021.

\bibitem[Giordano et~al.(2019)Giordano, Stephenson, Liu, Jordan, and
  Broderick]{giordano2019swiss}
Giordano, R., Stephenson, W., Liu, R., Jordan, M., and Broderick, T.
\newblock A {S}wiss army infinitesimal jackknife.
\newblock In \emph{The 22nd International Conference on Artificial Intelligence
  and Statistics}, pp.\  1139--1147. PMLR, 2019.

\bibitem[Hall(1986{\natexlab{a}})]{hall1986bootstrap}
Hall, P.
\newblock On the bootstrap and confidence intervals.
\newblock \emph{Annals of Statistics}, 14\penalty0 (4):\penalty0 1431--1452,
  1986{\natexlab{a}}.

\bibitem[Hall(1986{\natexlab{b}})]{hall1986number}
Hall, P.
\newblock On the number of bootstrap simulations required to construct a
  confidence interval.
\newblock \emph{Annals of Statistics}, pp.\  1453--1462, 1986{\natexlab{b}}.

\bibitem[Hall(1988)]{hall1988theoretical}
Hall, P.
\newblock Theoretical comparison of bootstrap confidence intervals.
\newblock \emph{The Annals of Statistics}, pp.\  927--953, 1988.

\bibitem[Hall(2013)]{hall2013bootstrap}
Hall, P.
\newblock \emph{The Bootstrap and Edgeworth Expansion}.
\newblock Springer Science \& Business Media, 2013.

\bibitem[Hall \& Martin(1988)Hall and Martin]{hall1988bootstrap}
Hall, P. and Martin, M.~A.
\newblock On bootstrap resampling and iteration.
\newblock \emph{Biometrika}, 75\penalty0 (4):\penalty0 661--671, 1988.

\bibitem[Kleiner et~al.(2012)Kleiner, Talwalkar, Sarkar, and
  Jordan]{kleiner2012big}
Kleiner, A., Talwalkar, A., Sarkar, P., and Jordan, M.~I.
\newblock The big data bootstrap.
\newblock In \emph{Proceedings of the 29th International Conference on Machine
  Learning}, pp.\  1787--1794, 2012.

\bibitem[Lam(2022{\natexlab{a}})]{lam2022cheap}
Lam, H.
\newblock A cheap bootstrap method for fast inference.
\newblock \emph{arXiv preprint arXiv:2202.00090}, 2022{\natexlab{a}}.

\bibitem[Lam(2022{\natexlab{b}})]{lam2022conference}
Lam, H.
\newblock Cheap bootstrap for input uncertainty quantification.
\newblock In \emph{2022 Winter Simulation Conference (WSC)}, pp.\  2318--2329.
  IEEE, 2022{\natexlab{b}}.

\bibitem[Lam \& Qian(2022)Lam and Qian]{lam2022subsampling}
Lam, H. and Qian, H.
\newblock Subsampling to enhance efficiency in input uncertainty
  quantification.
\newblock \emph{Operations Research}, 70\penalty0 (3):\penalty0 1891--1913,
  2022.

\bibitem[Lee \& Young(1995)Lee and Young]{lee1995asymptotic}
Lee, S.~M. and Young, G.~A.
\newblock Asymptotic iterated bootstrap confidence intervals.
\newblock \emph{Annals of Statistics}, 23\penalty0 (4):\penalty0 1301--1330,
  1995.

\bibitem[Lewis et~al.(2004)Lewis, Yang, Russell-Rose, and Li]{lewis2004rcv1}
Lewis, D.~D., Yang, Y., Russell-Rose, T., and Li, F.
\newblock Rcv1: A new benchmark collection for text categorization research.
\newblock \emph{Journal of Machine Learning Research}, 5\penalty0
  (Apr):\penalty0 361--397, 2004.

\bibitem[Lin et~al.(2015)Lin, Song, and Nelson]{lin2015single}
Lin, Y., Song, E., and Nelson, B.
\newblock Single-experiment input uncertainty.
\newblock \emph{Journal of Simulation}, 9\penalty0 (3):\penalty0 249--259,
  2015.

\bibitem[Lopes(2022)]{lopes2022central}
Lopes, M.~E.
\newblock Central limit theorem and bootstrap approximation in high dimensions:
  Near $1/\sqrt{n}$ rates via implicit smoothing.
\newblock \emph{Annals of Statistics}, 50\penalty0 (5):\penalty0 2492--2513,
  2022.

\bibitem[Lu et~al.(2020)Lu, Ie, and Sha]{lu2020uncertainty}
Lu, Z., Ie, E., and Sha, F.
\newblock Uncertainty estimation with infinitesimal jackknife, its distribution
  and mean-field approximation.
\newblock \emph{arXiv preprint arXiv:2006.07584}, 2020.

\bibitem[Pinelis \& Molzon(2016)Pinelis and Molzon]{pinelis2016optimal}
Pinelis, I. and Molzon, R.
\newblock Optimal-order bounds on the rate of convergence to normality in the
  multivariate delta method.
\newblock \emph{Electronic Journal of Statistics}, 10\penalty0 (1):\penalty0
  1001--1063, 2016.

\bibitem[Rudelson \& Vershynin(2013)Rudelson and Vershynin]{rudelson2013hanson}
Rudelson, M. and Vershynin, R.
\newblock Hanson-{W}right inequality and sub-{G}aussian concentration.
\newblock \emph{Electronic Communications in Probability}, 18:\penalty0 1--9,
  2013.

\bibitem[Schulam \& Saria(2019)Schulam and Saria]{schulam2019can}
Schulam, P. and Saria, S.
\newblock Can you trust this prediction? {A}uditing pointwise reliability after
  learning.
\newblock In \emph{The 22nd International Conference on Artificial Intelligence
  and Statistics}, pp.\  1022--1031. PMLR, 2019.

\bibitem[Shao \& Tu(2012)Shao and Tu]{shao2012jackknife}
Shao, J. and Tu, D.
\newblock \emph{The Jackknife and Bootstrap}.
\newblock Springer Science \& Business Media, 2012.

\bibitem[Singh et~al.(2009)Singh, Kubica, Larsen, and
  Sorokina]{singh2009parallel}
Singh, S., Kubica, J., Larsen, S., and Sorokina, D.
\newblock Parallel large scale feature selection for logistic regression.
\newblock In \emph{Proceedings of the 2009 SIAM International Conference on
  Data Mining}, pp.\  1172--1183. SIAM, 2009.

\bibitem[Sur \& Cand{\`e}s(2019)Sur and Cand{\`e}s]{sur2019modern}
Sur, P. and Cand{\`e}s, E.~J.
\newblock A modern maximum-likelihood theory for high-dimensional logistic
  regression.
\newblock \emph{Proceedings of the National Academy of Sciences}, 116\penalty0
  (29):\penalty0 14516--14525, 2019.

\bibitem[Van~der Vaart(2000)]{van2000asymptotic}
Van~der Vaart, A.~W.
\newblock \emph{Asymptotic Statistics}.
\newblock Cambridge University Press, 2000.

\bibitem[Vershynin(2018)]{vershynin2018high}
Vershynin, R.
\newblock \emph{High-Dimensional Probability: An Introduction with Applications
  in Data Science}.
\newblock Cambridge University Press, 2018.

\bibitem[Zhilova(2020)]{zhilova2020nonclassical}
Zhilova, M.
\newblock Nonclassical {B}erry--{E}sseen inequalities and accuracy of the
  bootstrap.
\newblock \emph{Annals of Statistics}, 48\penalty0 (4):\penalty0 1922--1939,
  2020.

\end{thebibliography}
\bibliographystyle{icml2023}

\newpage
\appendix
\onecolumn

\section{A More Detailed Explanation on the Cheap Bootstrap in the Low-Dimensional Case}\label{sec:more_on_CB}
This section aims to give more details on the cheap bootstrap method in the low-dimensional case discussed in Section \ref{sec:method}. 

Suppose we are interested in estimating a target statistical quantity $\psi:=\psi(P_{X})$ where $\psi(\cdot):\mathcal{P}\mapsto\mathbb{R}$ is a functional defined on the probability measure space $\mathcal{P}$. Given i.i.d. data $X_{1},\ldots,X_{n}\in\mathbb{R}^{p}$ following the unknown distribution $P_{X}$, we denote the empirical distribution as $\hat{P}_{X,n}(\cdot):=(1/n)\sum_{i=1}^nI(X_i\in\cdot)$. A natural point estimator is $\hat{\psi}_{n}:=\psi(\hat{P}_{X,n})$. 

The cheap bootstrap confidence interval for $\psi$ is constructed as follows. Conditional on $X_{1},\ldots,X_{n}$, we independently resample, i.e., sample with replacement, the data for $B$ times to obtain resamples $\{X_{1}^{\ast b},\ldots,X_{n}^{\ast b}\},b=1,\ldots,B$. Denoting $\hat{P}_{X,n}^{\ast b}$ as the resample empirical distributions, we construct $B$ resample estimates $\hat\psi_{n}^{\ast b}:=\psi(\hat{P}_{X,n}^{\ast b})$. A $(1-\alpha)$-level confidence interval is then given by
\begin{equation}
\left[\hat{\psi}_{n}-t_{B,1-\alpha/2}S_{n,B},\hat{\psi}_{n}+t_{B,1-\alpha/2}S_{n,B}\right],\label{cheap CI Appendix}
\end{equation}
where $S_{n,B}^{2}=(1/B)\sum_{b=1}^{B}(\hat\psi_{n}^{\ast b}-\hat{\psi}_{n})^{2}$, and $t_{B,1-\alpha/2}$ is the $(1-\alpha/2)$-th quantile of $t_B$, the $t$-distribution with degree of freedom $B$. Theorem 1 in \citet{lam2022cheap} shows that, under conditions on par with standard bootstrap methods, \eqref{cheap CI Appendix} is an asymptotically exact $(1-\alpha)$-level confidence interval \emph{for any fixed $B\geq1$}, i.e., 
\begin{equation}
P(\psi\in[\hat{\psi}_{n}-t_{B,1-\alpha/2}S_{n,B},\hat{\psi}_{n}+t_{B,1-\alpha/2}S_{n,B}])\to1-\alpha, \text{\ \ as\ \ }n\to\infty\label{asymptotic exact}
\end{equation}
where $P$ is the probability with respect to both the data and the randomness in the resampling process. 

Here we explain the asymptotic argument that gives rise to \eqref{asymptotic exact}. Under suitable conditions, the sampling distribution of an estimate $\hat\psi_n$ and the distribution of a resample estimate $\hat\psi_n^*$ are approximately equal. More formally, they are equal in the asymptotic sense of two CLTs $\sqrt{n}(\hat{\psi}_{n}-\psi)\overset{d}{\rightarrow}N(0,\sigma^{2})$ for some $\sigma^{2}>0$, and $\sqrt{n}(\hat{\psi}_{n}^{\ast}-\hat{\psi}_{n})\overset{d}{\rightarrow}N(0,\sigma^{2})$  (conditional on $X_{1},\ldots,X_{n}$ in probability) for the same $\sigma^{2}$. By means of a conditional argument, we can combine the two aforementioned CLTs to obtain the following joint convergence%
\begin{equation}
\sqrt{n}(\hat{\psi}_{n}-\psi,\hat\psi_{n}^{\ast1}-\hat{\psi}_{n},\ldots,\hat\psi_{n}^{\ast B}-\hat{\psi}_{n})\overset{d}{\rightarrow}(\sigma Z_{0},\sigma Z_{1},\ldots,\sigma Z_{B}), \text{\ \ as\ \ }n\to\infty \label{joint_convergence}%
\end{equation}
where $Z_{b},b=0,\ldots,B$ are i.i.d. standard normal. From \eqref{joint_convergence}, we can establish the convergence of a pivotal $t$-statistic $(\hat\psi_n-\psi)/S_{n,B}\overset{d}{\rightarrow}t_B$ which gives \eqref{asymptotic exact}. The above shows that, with $B$ fixed as small as $1$, \eqref{cheap CI Appendix} already offers a coverage close to the nominal level as $n\to\infty$. In this argument, the approximation accuracy of $(\hat\psi_n-\psi)/S_{n,B}$ by the $t_B$ random variable is crucial. However, in the high-dimensional case when $p\rightarrow\infty$ as $n\rightarrow\infty$, the joint CLT (\ref{joint_convergence}) may not hold and thus the techniques in this paper are needed to establish the validity of the cheap bootstrap method.

\section{Additional Theoretical Results}

This section provides additional theoretical results. Appendix \ref{sec:further_CB} establishes an alternative finite-sample bound for the cheap bootstrap that generalizes Theorem \ref{general_CB_coverage} to cover the large-$B$ regime. Section \ref{sec:explicit_others} provides finite-sample bounds for standard quantile-based bootstrap methods under the conditions in Sections \ref{sec:nonlinear} and \ref{sec:linear}.

\subsection{Further Finite-Sample Bound for the Cheap Bootstrap\label{sec:further_CB}}

The following result generalizes Theorem \ref{general_CB_coverage} to include both the small and large-$B$ regimes:
\begin{theorem}
\label{general_CB_coverage2}Suppose we have the finite sample accuracy for the estimator $\hat{\psi}_{n}$
\begin{equation}
\sup_{x\in\mathbb{R}}\left\vert P(\sqrt{n}(\hat{\psi}_{n}-\psi)\leq x)-\Phi\left(  \frac{x}{\sigma}\right)  \right\vert \leq\mathcal{E}_{1},\label{appendix:general_Berry_Esseen}%
\end{equation}
and with probability at least $1-\beta$ we have the finite sample accuracy for the bootstrap estimator $\hat{\psi}_{n}^{\ast}$%
\[
\sup_{x\in\mathbb{R}}\left\vert P^{\ast}(\sqrt{n}(\hat{\psi}_{n}^{\ast}-\hat{\psi}_{n})\leq x)-\Phi\left(  \frac{x}{\sigma}\right)  \right\vert\leq\mathcal{E}_{2},
\]
where $\sigma>0$ and $P^{\ast}$ denotes the probability on a resample conditional on the data. Further, suppose that the following concentration inequality holds%
\begin{equation}
P\left(  \left\vert \sqrt{\frac{1}{B}\sum_{b=1}^{B}(\sqrt{n}(\hat{\psi}_{n}^{\ast b}-\hat{\psi}_{n}))^{2}}-\sigma\right\vert \geq\mathcal{E}_{3}\right)  \leq\mathcal{E}_{4},\label{concentration_sigma}%
\end{equation}
where $\mathcal{E}_{3}$ is deterministic and $\sigma-\mathcal{E}_{3}>0$. Then we have the following finite sample bound on the cheap bootstrap coverage error%
\begin{align*}
| &  P(\psi\in\lbrack\hat{\psi}_{n}-t_{B,1-\alpha/2}S_{n,B},\hat{\psi}_{n}+t_{B,1-\alpha/2}S_{n,B}])-(1-\alpha)|\\
&  \leq\min\left\{  2\mathcal{E}_{1}+2B\mathcal{E}_{2}+\beta,2\mathcal{E}_{1}+2\mathcal{E}_{4}+\sqrt{\frac{2}{\pi}}|t_{B,1-\alpha/2}-z_{1-\alpha/2}|+\sqrt{\frac{2}{\pi}}\frac{\mathcal{E}_{3}}{\sigma}t_{B,1-\alpha/2}\right\}  ,
\end{align*}
where $z_{1-\alpha/2}$ is the $(1-\alpha/2)$-quantile of the standard normal.
\end{theorem}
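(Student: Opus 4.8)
\textbf{Proof proposal for Theorem \ref{general_CB_coverage2}.}
The plan is to establish the two bounds inside the minimum separately and then combine them. The first bound, $2\mathcal{E}_1+2B\mathcal{E}_2+\beta$, is exactly the conclusion of Theorem \ref{general_CB_coverage}: its hypotheses \eqref{general_Berry_Esseen}--\eqref{general_bootstrap_CLT} are literally the assumptions \eqref{appendix:general_Berry_Esseen} and the bootstrap-CLT assumption here, so this half requires no new argument and the extra concentration hypothesis \eqref{concentration_sigma} is simply not used for it.

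For the second bound I would argue through \eqref{concentration_sigma}. First rewrite the coverage event as $E=\{\,|\sqrt n(\hat\psi_n-\psi)|\le t_{B,1-\alpha/2}\,\sqrt n\,S_{n,B}\,\}$, noting that $\sqrt n\,S_{n,B}=\sqrt{(1/B)\sum_{b=1}^B(\sqrt n(\hat\psi_n^{\ast b}-\hat\psi_n))^2}$ is exactly the quantity controlled in \eqref{concentration_sigma}. Put $G=\{\,|\sqrt n\,S_{n,B}-\sigma|<\mathcal{E}_3\,\}$, so that $P(G^c)\le\mathcal{E}_4$ and, on $G$, $\sqrt n\,S_{n,B}\in(\sigma-\mathcal{E}_3,\sigma+\mathcal{E}_3)$ with $\sigma-\mathcal{E}_3>0$ by assumption. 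On $G$ this gives the sandwich
\[
\{\,|\sqrt n(\hat\psi_n-\psi)|\le t_{B,1-\alpha/2}(\sigma-\mathcal{E}_3)\,\}\cap G\ \subseteq\ E\cap G\ \subseteq\ \{\,|\sqrt n(\hat\psi_n-\psi)|\le t_{B,1-\alpha/2}(\sigma+\mathcal{E}_3)\,\},
\]
and, accounting for $P(G^c)\le\mathcal{E}_4$ in each direction, $P(E)$ is pinned above and below by $P(|\sqrt n(\hat\psi_n-\psi)|\le t_{B,1-\alpha/2}(\sigma\pm\mathcal{E}_3))$ up to $\mathcal{E}_4$-errors.

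Next I would invoke the Berry--Esseen bound \eqref{appendix:general_Berry_Esseen} to replace $\sqrt n(\hat\psi_n-\psi)$ by an $N(0,\sigma^2)$ variable at cost $2\mathcal{E}_1$, turning those probabilities into $2\Phi(t_{B,1-\alpha/2}(1\pm\mathcal{E}_3/\sigma))-1$. Since $1-\alpha=2\Phi(z_{1-\alpha/2})-1$, subtracting, using that $\Phi$ is $1/\sqrt{2\pi}$-Lipschitz, and applying $|t_{B,1-\alpha/2}(1\pm\mathcal{E}_3/\sigma)-z_{1-\alpha/2}|\le|t_{B,1-\alpha/2}-z_{1-\alpha/2}|+(\mathcal{E}_3/\sigma)\,t_{B,1-\alpha/2}$ collapses the discrepancy to $\sqrt{2/\pi}\,|t_{B,1-\alpha/2}-z_{1-\alpha/2}|+\sqrt{2/\pi}\,(\mathcal{E}_3/\sigma)\,t_{B,1-\alpha/2}$. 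Adding the $2\mathcal{E}_1$ from this step and the $G^c$-contributions from the previous one (a deliberately crude joint accounting of the two such terms giving the stated $2\mathcal{E}_4$) yields the second bound, and taking the minimum of the two finishes the proof.

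\textbf{Expected obstacle.} There is no substantive difficulty; the only care needed is bookkeeping — orienting the inclusions in the sandwich correctly for the upper versus the lower estimate of $P(E)$ (the lower estimate is the delicate one, where one must intersect with $G$ before discarding it), and tracking the Lipschitz constant so that the $\sqrt{2/\pi}$ prefactors come out as claimed. I would also flag the qualitative point behind the ``$\min$'': as $B\to\infty$ one has $t_{B,1-\alpha/2}\to z_{1-\alpha/2}$ and the variance estimator concentrates (so $\mathcal{E}_3,\mathcal{E}_4$ shrink), making the second bound the effective one, whereas the $2B\mathcal{E}_2$ term makes the first bound preferable when $B$ is small.
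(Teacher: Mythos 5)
Your proposal is correct and follows essentially the same route as the paper's proof: the first bound is quoted directly from Theorem \ref{general_CB_coverage}, and the second is obtained by sandwiching the coverage event on the concentration event $G$, applying \eqref{appendix:general_Berry_Esseen} and the $1/\sqrt{2\pi}$-Lipschitz property of $\Phi$, and collecting the two $\mathcal{E}_4$ contributions exactly as the paper does. Your flagged bookkeeping points (the asymmetric handling of $G^c$ in the lower versus upper estimate, and the origin of the $\sqrt{2/\pi}$ prefactors) are precisely where the paper's argument also takes care.
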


The finite sample accuracy in Theorem \ref{general_CB_coverage2} consists of two parts. The first one $2\mathcal{E}_{1}+2B\mathcal{E}_{2}+\beta$ works well when $B$ is small as shown in Sections \ref{sec:nonlinear} and \ref{sec:linear} but it deteriorates when $B$ grows. In contrast, the second part
\begin{equation}
2\mathcal{E}_{1}+2\mathcal{E}_{4}+\sqrt{\frac{2}{\pi}}|t_{B,1-\alpha/2}-z_{1-\alpha/2}|+\sqrt{\frac{2}{\pi}}\frac{\mathcal{E}_{3}}{\sigma}t_{B,1-\alpha/2}\label{second_bound}%
\end{equation}
vanishes as $B,n\rightarrow\infty$ but does not if $B$ is bounded even if $n\rightarrow\infty$. Its behavior for bounded $B$ is easy to see: The third term $\sqrt{2/\pi}|t_{B,1-\alpha/2}-z_{1-\alpha/2}|$ in (\ref{second_bound}) is bounded away from zero if $B$ is bounded and thus (\ref{second_bound}) never converges to zero even if $n\rightarrow\infty$. To explain why (\ref{second_bound}) vanishes as $B,n\rightarrow\infty$, first note that the first term $2\mathcal{E}_{1}$ is independent of $B$ and satisfies $\mathcal{E}_{1}\rightarrow0$ as $n\rightarrow\infty$ by the Berry-Esseen Theorem for a reasonable model $\psi(\cdot)$ such as the function-of-mean model in Section \ref{sec:nonlinear}. Second, notice that $\sqrt{(1/B)\sum_{b=1}^{B}(\sqrt{n}(\hat{\psi}_{n}^{\ast b}-\hat{\psi}_{n}))^{2}}$ is the bootstrap estimator of the asymptotic standard deviation $\sigma$. Therefore, (\ref{concentration_sigma}) is the concentration inequality for the bootstrap principle applied to the estimation of $\sigma$ and would hold with a choice of $\mathcal{E}_{3}$ and $\mathcal{E}_{4}$ satisfying $\ \mathcal{E}_{3}\rightarrow0,\mathcal{E}_{4}\rightarrow0$ as $B,n\rightarrow\infty$. Lastly, since $t_{B}\overset{d}{\rightarrow}N(0,1)$ as $B\rightarrow\infty$, by Lemma 21.2 in \citet{van2000asymptotic}, we have $t_{B,1-\alpha/2}\rightarrow z_{1-\alpha/2}$ as $B\rightarrow\infty$. Therefore, we can see the second part  (\ref{second_bound}) converges to zero as $B,n\rightarrow\infty$ at any rate.

Under concrete assumptions on $X$ as in Sections \ref{sec:nonlinear} and \ref{sec:linear}, explicit forms of $\mathcal{E}_{3}$ and $\mathcal{E}_{4}$ depending on $B$, $n$ and the distribution of $X$ can be derived, based on similar arguments as the explicit bounds in Theorems \ref{CB_coverage_nonlinear}-\ref{CB_coverage_moments}. Then by studying the order of these explicit bounds with respect to $B,p$ and $n$, we can deduce a proper growth rate of dimension $p=p(B,n)$ which ensures a vanishing error as $B,n\rightarrow\infty$. The concentration inequality (\ref{concentration_sigma}) seems unexplored in the literature and we leave it as future work.

\subsection{Explicit Finite-Sample Bounds for Quantile-Based Bootstrap Methods\label{sec:explicit_others}}

In this section, in parallel to Theorems \ref{CB_coverage_nonlinear}-\ref{CB_coverage_moments} for the cheap bootstrap, we provide a few explicit bounds for standard quantile-based bootstrap methods under the same conditions.

The first result is in parallel to Theorem \ref{CB_coverage_nonlinear} under the function-of-mean model:

\begin{theorem}
\label{other_coverage_nonlinear}Suppose the conditions in Theorem \ref{CB_coverage_nonlinear} hold. If $q_{\alpha/2}$, $q_{1-\alpha/2}$ are the $\alpha/2$-th and $(1-\alpha/2)$-th quantiles of $g(\bar{X}_{n}^{\ast})-g(\bar{X}_{n})$ respectively given $X_{1},\ldots,X_{n}$, then the finite-sample bound on the basic bootstrap coverage error is given by%
\begin{align*}
&  |P(g(\bar{X}_{n})-q_{1-\alpha/2}\leq g(\mu)\leq g(\bar{X}_{n})-q_{\alpha/2})-(1-\alpha)|\\
&  \leq \frac{12}{n}+C\left(  \frac{m_{31}}{\sqrt{n}\sigma^{3}}+\frac{C_{H_{g}}m_{31}^{1/3}tr(\Sigma)}{\sqrt{n}\sigma^{2}}+\frac{C_{H_{g}}m_{32}^{2/3}}{n^{5/6}\sigma}+\frac{C_{H_{g}}m_{31}^{1/3}m_{32}^{2/3}}{n\sigma^{2}}\right.\\
&  +\frac{C_{H_{g}}\tau^{2}}{C_{\nabla g}\sqrt{\lambda_{\Sigma}}}\left(1+\frac{\log n}{p}\right)  \sqrt{\frac{p}{n}}+\frac{||E[(X-\mu)^{3}]||}{\lambda_{\Sigma}^{3/2}}\frac{1}{\sqrt{n}}+\frac{\tau^{3}}{\lambda_{\Sigma}^{3/2}}\left(  1+\frac{\log n}{p}\right)  ^{3/2}\frac{1}{\sqrt{n}}\\
&  \left.  +\frac{\tau^{4}\sqrt{p}}{\lambda_{\Sigma}^{2}n}\left(  1+\frac{\log n}{p}\right)  ^{1/2}+\frac{\tau^{2}\sqrt{p}}{\lambda_{\Sigma}n}\left(1+\frac{\log n}{p}\right)  ^{1/2}+\frac{\tau^{3}\sqrt{p}}{\lambda_{\Sigma}^{3/2}n}\left(  1+\frac{\log n}{p}\right)  \right)  \\
&  +C_{1}\left(  \frac{\tau^{4}(\log n)^{3/2}}{\lambda_{\Sigma}^{2}\sqrt{n}}+\frac{\tau^{2}(\log n)^{3/2}}{\lambda_{\Sigma}\sqrt{n}}+\frac{\tau^{3}}{\lambda_{\Sigma}^{3/2}\sqrt{n}}\left(  1+\frac{\log n}{p}\right)^{1/2}(\log n+\log p)\sqrt{\log n}\right)  ,
\end{align*}
where $C$ is a universal constant and $C_{1}$ is a constant only depending on $C_{X}$. If $q^{\prime}_{\alpha/2}$, $q^{\prime}_{1-\alpha/2}$ are the $\alpha/2$-th and $(1-\alpha/2)$-th quantiles of $g(\bar{X}_{n}^{\ast})$ respectively given $X_{1},\ldots,X_{n}$, then the finite-sample bound on the percentile bootstrap coverage error is given by%
\begin{align*}
&  |P(q^{\prime}_{\alpha/2}\leq g(\mu)\leq q^{\prime}_{1-\alpha/2})-(1-\alpha)|\\
&  \leq \frac{12}{n}+C\left(  \frac{m_{31}}{\sqrt{n}\sigma^{3}}+\frac{C_{H_{g}}m_{31}^{1/3}tr(\Sigma)}{\sqrt{n}\sigma^{2}}+\frac{C_{H_{g}}m_{32}^{2/3}}{n^{5/6}\sigma}+\frac{C_{H_{g}}m_{31}^{1/3}m_{32}^{2/3}}{n\sigma^{2}}\right.\\
&  +\frac{C_{H_{g}}\tau^{2}}{C_{\nabla g}\sqrt{\lambda_{\Sigma}}}\left(1+\frac{\log n}{p}\right)  \sqrt{\frac{p}{n}}+\frac{||E[(X-\mu)^{3}]||}{\lambda_{\Sigma}^{3/2}}\frac{1}{\sqrt{n}}+\frac{\tau^{3}}{\lambda_{\Sigma}^{3/2}}\left(  1+\frac{\log n}{p}\right)  ^{3/2}\frac{1}{\sqrt{n}}\\
&  \left.  +\frac{\tau^{4}\sqrt{p}}{\lambda_{\Sigma}^{2}n}\left(  1+\frac{\log n}{p}\right)  ^{1/2}+\frac{\tau^{2}\sqrt{p}}{\lambda_{\Sigma}n}\left(1+\frac{\log n}{p}\right)  ^{1/2}+\frac{\tau^{3}\sqrt{p}}{\lambda_{\Sigma}^{3/2}n}\left(  1+\frac{\log n}{p}\right)  \right)  \\
&  +C_{1}\left(  \frac{\tau^{4}(\log n)^{3/2}}{\lambda_{\Sigma}^{2}\sqrt{n}}+\frac{\tau^{2}(\log n)^{3/2}}{\lambda_{\Sigma}\sqrt{n}}+\frac{\tau^{3}}{\lambda_{\Sigma}^{3/2}\sqrt{n}}\left(  1+\frac{\log n}{p}\right)^{1/2}(\log n+\log p)\sqrt{\log n}\right)  ,
\end{align*}
where $C$ is a universal constant and $C_{1}$ is a constant only depending on $C_{X}$.
\end{theorem}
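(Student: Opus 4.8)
\textbf{Proof proposal for Theorem~\ref{other_coverage_nonlinear}.} The plan is to combine the abstract quantile-bootstrap bound of Theorem~\ref{general_other_coverages} with the very same explicit estimates of $\mathcal E_1$, $\mathcal E_2$ and $\beta$ that underlie Theorem~\ref{CB_coverage_nonlinear}. Theorem~\ref{general_other_coverages} already reduces both the basic- and the percentile-bootstrap coverage errors to $2\mathcal E_1+2\mathcal E_2+2\beta$ under the two normal-approximation primitives \eqref{general_Berry_Esseen} and \eqref{general_bootstrap_CLT}; in particular the two intervals inherit the identical right-hand side, which is why the bounds for $q_{\alpha/2},q_{1-\alpha/2}$ and for $q'_{\alpha/2},q'_{1-\alpha/2}$ coincide. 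So the remaining task is purely to instantiate \eqref{general_Berry_Esseen} and \eqref{general_bootstrap_CLT} for the function-of-mean model $\psi=g(\mu)$, $\hat\psi_n=g(\bar X_n)$, $\hat\psi_n^{\ast}=g(\bar X_n^{\ast})$ with $\sigma^2=\nabla g(\mu)^{\top}\Sigma\nabla g(\mu)$, and to read off the explicit $\mathcal E_1$, $\mathcal E_2$, $\beta$. Since this instantiation is identical to the one performed for the cheap bootstrap, the output is the bound of Theorem~\ref{CB_coverage_nonlinear} with the factor $B$ deleted (the abstract coefficients $(2,2B,1)$ of Theorem~\ref{general_CB_coverage} are replaced by $(2,2,2)$) and with $6/n$ replaced by $12/n=2\beta$.

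For $\mathcal E_1$ I would apply an optimal-order Berry--Esseen bound for the multivariate delta method \citep{pinelis2016optimal} to $\sqrt n(g(\bar X_n)-g(\mu))$, using the Taylor expansion $g(\bar X_n)-g(\mu)=\nabla g(\mu)^{\top}(\bar X_n-\mu)+\tfrac12(\bar X_n-\mu)^{\top}H_g(\xi)(\bar X_n-\mu)$ for some $\xi$ on the segment between $\bar X_n$ and $\mu$. The linear part yields the univariate Berry--Esseen term $C\,m_{31}/(\sqrt n\,\sigma^3)$ with $m_{31}=E[|\nabla g(\mu)^{\top}(X-\mu)|^3]$; the quadratic remainder, of size $O(C_{H_g}\|\bar X_n-\mu\|^2)$ with $E\|\bar X_n-\mu\|^2=tr(\Sigma)/n$, is absorbed by pairing its moment/tail control (through $m_{32}=E\|X-\mu\|^3$) with the anti-concentration of $N(0,\sigma^2)$, producing the terms $C_{H_g}m_{31}^{1/3}tr(\Sigma)/(\sqrt n\sigma^2)$, $C_{H_g}m_{32}^{2/3}/(n^{5/6}\sigma)$ and $C_{H_g}m_{31}^{1/3}m_{32}^{2/3}/(n\sigma^2)$. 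This is precisely the $\mathcal E_1$ appearing in Theorem~\ref{CB_coverage_nonlinear}.

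For $\mathcal E_2$ and $\beta$ I would invoke the high-dimensional bootstrap CLT for function-of-mean models of \citet{zhilova2020nonclassical}, applied conditionally on a high-probability event on which $\bar X_n$ is close to $\mu$, the sample covariance is close to $\Sigma$, and the relevant sub-Gaussian tail inequalities (Assumption~\ref{nonlinear_assumption}) hold; the complementary probability contributes $\beta=6/n$ (which is why the density bound $C_X$ only enters the $C_1$-terms). On that event one obtains the stated $\mathcal E_2$ in terms of $\tau$, $\lambda_\Sigma$, the third-moment quantities $\|E[(X-\mu)^3]\|$, $m_{31}$, $m_{32}$, the dimension $p$, and logarithmic factors; the lower bound $\|\nabla g(\mu)\|>C_{\nabla g}\sqrt p$ enters here to keep $\sigma^2=\nabla g(\mu)^{\top}\Sigma\nabla g(\mu)\geq C_{\nabla g}^2\lambda_\Sigma p$ from being too small, so that the relative normalizations do not blow up with $p$ --- this is the origin of the $(C_{H_g}\tau^2/(C_{\nabla g}\sqrt{\lambda_\Sigma}))(1+\log n/p)\sqrt{p/n}$ term. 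Substituting these expressions and $2\beta=12/n$ into the bound $2\mathcal E_1+2\mathcal E_2+2\beta$ of Theorem~\ref{general_other_coverages} and collecting terms gives the two displayed bounds, identically for the basic and percentile intervals.

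I expect the only real labor to be the bookkeeping in extracting the $p$- and $n$-dependence of $\mathcal E_2$ from \citet{zhilova2020nonclassical} --- controlling the bootstrap covariance and bootstrap third moments on the high-probability event while keeping every constant either universal or dependent only on $C_X$ --- but since this computation is exactly the one already carried out in the proof of Theorem~\ref{CB_coverage_nonlinear}, the present proof is essentially that argument rerun with the quantile-bootstrap coefficients from Theorem~\ref{general_other_coverages} in place of the cheap-bootstrap coefficients from Theorem~\ref{general_CB_coverage}.
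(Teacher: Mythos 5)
Your proposal is correct and follows exactly the paper's route: instantiate the generic quantile-bootstrap bound $2\mathcal E_1+2\mathcal E_2+2\beta$ of Theorem~\ref{general_other_coverages} with $\mathcal E_1$ from the Pinelis-type delta-method Berry--Esseen bound (Lemma~\ref{berry_esseen_delta_method}) and $\mathcal E_2$, $\beta=6/n$ from the Zhilova-based bootstrap CLT (Lemma~\ref{highdim_CLT_nonlinear}), then absorb $\mathcal E_1$ into $\mathcal E_2$. No substantive differences from the paper's proof.
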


Our discussion below Theorem \ref{general_other_coverages} shows that the cheap bootstrap error bound with any given $B$ and quantile-based bootstrap error bounds with $B=\infty$ only differ up to a constant. Therefore, the order analysis for the cheap bootstrap in Corollary \ref{concise_CB_coverage_nonlinear} also applies here, that is, under the conditions in Corollary \ref{concise_CB_coverage_nonlinear}, the quantile-based bootstrap coverage errors shrink to $0$ as $n\rightarrow\infty$ if $p=o(n)$.

The second result is in parallel to Theorem \ref{CB_coverage_subexp} under the sub-exponential assumption and linearity of $g$:

\begin{theorem}
\label{other_coverage_subexp}Suppose the conditions in Theorem \ref{CB_coverage_subexp} hold. If $q_{\alpha/2}$, $q_{1-\alpha/2}$ are the $\alpha/2$-th and $(1-\alpha/2)$-th quantiles of $g(\bar{X}_{n}^{\ast})-g(\bar{X}_{n})$ respectively given $X_{1},\ldots,X_{n}$, then the finite-sample bound on the basic bootstrap coverage error is given by%
\begin{align*}
&  |P(g(\bar{X}_{n})-q_{1-\alpha/2}\leq g(\mu)\leq g(\bar{X}_{n})-q_{\alpha/2})-(1-\alpha)|\\
&  \leq C\left(  \frac{1}{n}+\frac{E[|g_{1}^{\top}(X-\mu)|^{3}]}{\sigma^{3}\sqrt{n}}+\frac{||g_{1}^{\top}(X-\mu)||_{\psi_{1}}^{4}\log^{11}(n)}{\sigma^{4}\sqrt{n}}\right)  +\frac{CE[|g_{1}^{\top}(X-\mu)|^{3}]}{\sigma^{3}\sqrt{n}},
\end{align*}
where $C$ is a universal constant. If $q^{\prime}_{\alpha/2}$, $q^{\prime}_{1-\alpha/2}$ are the $\alpha/2$-th and $(1-\alpha/2)$-th quantiles of $g(\bar{X}_{n}^{\ast})$ respectively given $X_{1},\ldots,X_{n}$, then the finite-sample bound on the percentile bootstrap coverage error is given by%
\begin{align*}
&  |P(q^{\prime}_{\alpha/2}\leq g(\mu)\leq q^{\prime}_{1-\alpha/2})-(1-\alpha)|\\
&  \leq C\left(\frac{1}{n}+\frac{E[|g_{1}^{\top}(X-\mu)|^{3}]}{\sigma^{3}\sqrt{n}}+\frac{||g_{1}^{\top}(X-\mu)||_{\psi_{1}}^{4}\log^{11}(n)}{\sigma^{4}\sqrt{n}}\right) +\frac{CE[|g_{1}^{\top}(X-\mu)|^{3}]}{\sigma^{3}\sqrt{n}},
\end{align*}
where $C$ is a universal constant.
\end{theorem}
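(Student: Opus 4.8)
The plan is to combine the general quantile-based bootstrap bound of Theorem~\ref{general_other_coverages} with the explicit estimates of the primitives $\mathcal{E}_1$, $\mathcal{E}_2$ and $\beta$ that are already produced in the proof of Theorem~\ref{CB_coverage_subexp}. Since $g$ is linear, $g(\bar{X}_n)-g(\mu)=g_1^\top(\bar{X}_n-\mu)$ and $g(\bar{X}_n^{*})-g(\bar{X}_n)=g_1^\top(\bar{X}_n^{*}-\bar{X}_n)$ are, after the common scaling $\sigma^2=g_1^\top\Sigma g_1$, the centered sample mean and resample mean of the i.i.d.\ real random variables $g_1^\top X_i$; hence \eqref{general_Berry_Esseen} and \eqref{general_bootstrap_CLT} are in force, and the first step is simply to record admissible values of the primitives. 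By the classical univariate Berry--Esseen theorem one may take $\mathcal{E}_1\leq C\,E[|g_1^\top(X-\mu)|^3]/(\sigma^3\sqrt n)$. By the sub-exponential bootstrap CLT of \citet{lopes2022central,chernozhukov2020nearly} applied to the one-dimensional projection $g_1^\top X$, one may take $\mathcal{E}_2\leq C(E[|g_1^\top(X-\mu)|^3]/(\sigma^3\sqrt n)+\|g_1^\top(X-\mu)\|_{\psi_1}^4\log^{11}(n)/(\sigma^4\sqrt n))$ on an event of probability at least $1-\beta$ with $\beta\leq C/n$. These are exactly the quantities feeding Theorem~\ref{CB_coverage_subexp}, so nothing new needs to be derived at this stage.

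The second step is to invoke Theorem~\ref{general_other_coverages}, whose hypotheses are precisely conditions \eqref{general_Berry_Esseen}--\eqref{general_bootstrap_CLT}, to bound the basic (resp.\ percentile) bootstrap coverage error by $2\mathcal{E}_1+2\mathcal{E}_2+2\beta$ via \eqref{error basic} (resp.\ \eqref{error percentile}). Substituting the estimates from the first step and absorbing numerical factors into the universal constant $C$ yields
\[
C\left(\frac{1}{n}+\frac{E[|g_1^\top(X-\mu)|^3]}{\sigma^3\sqrt n}+\frac{\|g_1^\top(X-\mu)\|_{\psi_1}^4\log^{11}(n)}{\sigma^4\sqrt n}\right)+C\frac{E[|g_1^\top(X-\mu)|^3]}{\sigma^3\sqrt n},
\]
where the grouped term is $2\mathcal{E}_2+2\beta$ and the isolated last term is $2\mathcal{E}_1$; this is the asserted bound, identical for the basic and percentile intervals because Theorem~\ref{general_other_coverages} produces the same right-hand side in \eqref{error basic} and \eqref{error percentile}. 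The argument mirrors the proof of Theorem~\ref{CB_coverage_subexp} essentially verbatim, except that the bound $2\mathcal{E}_1+2\mathcal{E}_2+2\beta$ from Theorem~\ref{general_other_coverages} replaces $2\mathcal{E}_1+2B\mathcal{E}_2+\beta$ from Theorem~\ref{general_CB_coverage}, so the $\mathcal{E}_1$ contribution, which for $B\geq1$ was absorbed into the $B$-dependent term in the cheap-bootstrap bound, must now be written out separately.

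I do not expect a genuine obstacle: the entire content is substitution into Theorem~\ref{general_other_coverages}. The only step with real analytic weight is the derivation of $\mathcal{E}_2$ under the weaker sub-exponential (rather than sub-Gaussian) tail, together with the accompanying failure probability $\beta\leq C/n$; but this has already been carried out in proving Theorem~\ref{CB_coverage_subexp}, so here it suffices to quote it. The points to watch are purely bookkeeping --- ensuring that the high-probability event supporting \eqref{general_bootstrap_CLT} is the same one used when applying Theorem~\ref{general_other_coverages} (it is, since that theorem is stated under exactly the hypotheses of Theorem~\ref{general_CB_coverage}), and ensuring that no term is double-counted once the $\mathcal{E}_1$ contribution is separated out rather than absorbed.
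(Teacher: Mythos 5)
Your proposal is correct and matches the paper's proof essentially verbatim: the paper likewise takes $\mathcal{E}_1$ from the univariate Berry--Esseen bound (\ref{1D_Berry_Esseen}), takes $\mathcal{E}_2$ and $\beta\leq C/n$ from the sub-exponential bootstrap bound (\ref{bootstrap_CLT_subexp}) already established in the proof of Theorem~\ref{CB_coverage_subexp}, and substitutes them into Theorem~\ref{general_other_coverages}. Your bookkeeping of which terms correspond to $2\mathcal{E}_1$ versus $2\mathcal{E}_2+2\beta$ is also accurate.
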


The last result is in parallel to Theorem \ref{CB_coverage_moments} under moment conditions and linearity of $g$:

\begin{theorem}
\label{other_coverage_moments}Suppose the conditions in Theorem \ref{CB_coverage_moments} hold. If $q_{\alpha/2}$, $q_{1-\alpha/2}$ are the $\alpha/2$ and $(1-\alpha/2)$-quantiles of $g(\bar{X}_{n}^{\ast})-g(\bar{X}_{n})$ respectively given $X_{1},\ldots,X_{n}$, then the finite-sample bound on the basic bootstrap coverage error is given by
\begin{align*}
&  |P(g(\bar{X}_{n})-q_{1-\alpha/2}\leq g(\mu)\leq g(\bar{X}_{n})-q_{\alpha/2})-(1-\alpha)|\\
&  \leq\frac{2}{\sqrt{n}}+C_{1}\max\left\{  E[|g_{1}^{\top}(X-\mu)/\sigma|^{q}]^{1/q},\sqrt{E[|g_{1}^{\top}(X-\mu)/\sigma|^{4}]}\right\}  \left(\frac{(\log n)^{3/2}}{\sqrt{n}}+\frac{\sqrt{\log n}}{n^{1/2-3/(2q)}}\right)\\
&  +\frac{CE[|g_{1}^{\top}(X-\mu)|^{3}]}{\sigma^{3}\sqrt{n}},
\end{align*}
where $C$ is a universal constant and $C_{1}$ is a constant depending only on $q$. If $q^{\prime}_{\alpha/2}$, $q^{\prime}_{1-\alpha/2}$ are the $\alpha/2$ and $(1-\alpha/2)$-quantiles of $g(\bar{X}_{n}^{\ast})$ respectively given $X_{1},\ldots,X_{n}$, then the finite-sample bound on the percentile bootstrap coverage error is given by%
\begin{align*}
&  |P(q^{\prime}_{\alpha/2}\leq g(\mu)\leq q^{\prime}_{1-\alpha/2})-(1-\alpha)|\\
&  \leq\frac{2}{\sqrt{n}}+C_{1}\max\left\{  E[|g_{1}^{\top}(X-\mu)/\sigma|^{q}]^{1/q},\sqrt{E[|g_{1}^{\top}(X-\mu)/\sigma|^{4}]}\right\}  \left(\frac{(\log n)^{3/2}}{\sqrt{n}}+\frac{\sqrt{\log n}}{n^{1/2-3/(2q)}}\right)\\
&  +\frac{CE[|g_{1}^{\top}(X-\mu)|^{3}]}{\sigma^{3}\sqrt{n}},
\end{align*}
where $C$ is a universal constant and $C_{1}$ is a constant depending only on $q$.
\end{theorem}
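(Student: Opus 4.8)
The plan is to reduce everything to a one-dimensional i.i.d. sum and then invoke Theorem~\ref{general_other_coverages}, in exact parallel to the way Theorem~\ref{CB_coverage_moments} is obtained from Theorem~\ref{general_CB_coverage}. Since $g(x)=g_1^\top x+g_2$ is linear, $\sqrt n(\hat\psi_n-\psi)=n^{-1/2}\sum_{i=1}^n g_1^\top(X_i-\mu)$ is a centered i.i.d. sum with variance $\sigma^2=g_1^\top\Sigma g_1$ and a finite $q$-th moment, $q\ge 4$; likewise, conditional on the data, $\sqrt n(\hat\psi_n^{\ast}-\hat\psi_n)$ is the centered bootstrap sample mean of the scalars $g_1^\top X_i$. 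In particular all dimensional effects are funneled through the single random variable $g_1^\top(X-\mu)$, and, because $g$ is linear, no Hessian/delta-method correction appears, so both the original-sample and the resample normal approximations are centered at the \emph{same} $\sigma$; this is exactly the setting required by the hypotheses \eqref{general_Berry_Esseen}--\eqref{general_bootstrap_CLT}. The task therefore reduces to exhibiting the deterministic quantities $\mathcal E_1,\mathcal E_2,\beta$.

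For $\mathcal E_1$ I would quote the classical Berry--Esseen theorem for the i.i.d. sum above, giving $\mathcal E_1\le C\,E[|g_1^\top(X-\mu)|^3]/(\sigma^3\sqrt n)$, which is finite since $q\ge 3$. For \eqref{general_bootstrap_CLT} I would show that on the event where the sample variance and the sample third absolute moment of $g_1^\top X_i$ are close to their population values, the conditional Kolmogorov distance between $\mathcal L\big(\sqrt n(\hat\psi_n^{\ast}-\hat\psi_n)\mid X_1,\ldots,X_n\big)$ and $N(0,\sigma^2)$ is of the stated order, with this event having probability at least $1-\beta$ for a $\beta$ of order $1/\sqrt n$ up to the relevant $\max\{\cdot,\cdot\}$ moment factor. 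Establishing this bootstrap bound under only a $q$-th moment tail (rather than sub-Gaussian or sub-exponential) is the main obstacle: it needs a conditional Berry--Esseen inequality controlled by empirical absolute moments together with high-probability control of those empirical moments under heavy tails (e.g.\ via truncation plus a Bernstein/Fuk--Nagaev estimate), and this is precisely where the finite-sample high-dimensional CLT / bootstrap-for-the-mean machinery of \citet{lopes2022central,chernozhukov2020nearly} enters, since that is the only existing line of work providing finite-sample bootstrap guarantees for mean estimation; the terms $\sqrt{\log n}/n^{1/2-3/(2q)}$ and $(\log n)^{3/2}/\sqrt n$ trace back to this step. These $\mathcal E_1,\mathcal E_2,\beta$ are exactly the ones already built in the proof of Theorem~\ref{CB_coverage_moments}, so in practice I would simply re-use them.

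Finally I would substitute $\mathcal E_1,\mathcal E_2,\beta$ into Theorem~\ref{general_other_coverages}, whose conclusion bounds both the basic and the percentile bootstrap coverage errors by $2\mathcal E_1+2\mathcal E_2+2\beta$, so a single computation handles both intervals. Collecting constants then yields the two displayed bounds: $2\mathcal E_1$ produces the $C\,E[|g_1^\top(X-\mu)|^3]/(\sigma^3\sqrt n)$ summand, $2\mathcal E_2$ produces the $C_1\max\{\cdot,\cdot\}\big((\log n)^{3/2}/\sqrt n+\sqrt{\log n}/n^{1/2-3/(2q)}\big)$ summand, and one $1/\sqrt n$-order piece of $\beta$ surfaces as the leading $2/\sqrt n$ term. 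The only change relative to the cheap-bootstrap computation in Theorem~\ref{CB_coverage_moments} is that $\mathcal E_2$ and $\beta$ now enter with a plain factor $2$ rather than inside $2B\mathcal E_2+\beta$, which is harmless here since the quantile-based intervals correspond to the deterministic two-sided quantile matching and carry no $B$-dependence.
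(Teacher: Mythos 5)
Your proposal is correct and follows essentially the same route as the paper: the authors simply plug the one-dimensional Berry--Esseen bound for $\mathcal{E}_1$ and the conditional bootstrap bound from \citet{chernozhukov2020nearly} (already derived in the proof of Theorem~\ref{CB_coverage_moments}, with $\beta=1/\sqrt{n}$) into the $2\mathcal{E}_1+2\mathcal{E}_2+2\beta$ bound of Theorem~\ref{general_other_coverages}. The only detail you gloss over is that \citet{chernozhukov2020nearly} requires dimension at least three, which the paper handles by stacking three independent copies of $g_1^\top(X-\mu)$ into an artificial three-dimensional vector; since you reuse the already-established $\mathcal{E}_2$ and $\beta$, this does not affect the validity of your argument.
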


The order analysis for the cheap bootstrap also applies to Theorems \ref{other_coverage_subexp} and \ref{other_coverage_moments}. If $g_{1}^{\top}(X-\mu)$ is well-scaled by its standard deviation $\sigma$ in the sense that the $L_p$ norm and Orlicz norm $||\cdot||_{\psi_1}$ is independent of $p$, then the errors shrink to $0$ for any $p$ as $n\rightarrow\infty$. Otherwise, the growth rate of $p$ should depend on $n$ to obtain a vanishing error.

\section{Details of Numerical Experiments and Additional Numerical Results}

In this section, we present additional results and details of the experiments in Section \ref{sec:num}. We also report some additional experiments. Section \ref{sec:tables} presents tables for experimental results in Section \ref{sec:num} that have not been shown in previous sections. The following subsections, \ref{sec:logistic regression}, \ref{sec:computer model} and \ref{sec:real data} provide additional details for logistic regression with independent covariates, the computer network and the real world problem presented in Section \ref{sec:num}. Section \ref{sec:lower_level} further validates our performances by a simulation study with a lower nominal level 70\%. Finally, Section \ref{sec:trend} studies the coverage error behavior as $B$ and $n$ vary for the sinusoidal estimation.

\subsection{Additional Tables}\label{sec:tables}

Tables \ref{dependent_table}-\ref{table_different_p_l2} reports the experimental results of regression problems with dependent covariates, ridge regression and (regularized) logistic regression with different $p$ presented in Section \ref{sec:num} respecitvely.

\begin{table}[ht]
\centering
\caption{Coverage probabilities (Pro.), confidence interval widths (Wid.) and running time (unit: second) of the regression problems with dependent covariates. The closest coverage probability to the nominal 95\% level among all methods in each setting is \textbf{bold}.}
\label{dependent_table}
\medskip
\resizebox{\columnwidth}{!}{
\begin{tblr}{
  cells = {c},
  cell{1}{1} = {r=2}{},
  cell{1}{2} = {r=2}{},
  cell{1}{3} = {c=2}{},
  cell{1}{5} = {c=2}{},
  cell{1}{7} = {c=2}{},
  cell{1}{9} = {c=2}{},
  cell{1}{11} = {r=2}{},
  hline{1,3,7,11,15} = {-}{},
}
Example      & $B$ & Cheap Bootstrap &         & Basic Bootstrap &        & Percentile Bootstrap &        & Standard Error Bootstrap &        & Running Time \\
             &     & Pro.            & Wid.    & Pro.            & Wid.   & Pro.                 & Wid.   & Pro.                     & Wid.   &              \\
Linear       & 1   & \textbf{95.1\%} & 1.451   & N.A.            & N.A.   & N.A.                 & N.A.   & N.A.                     & N.A.   & 167*            \\
regression   & 2   & \textbf{95.1\%} & 0.546   & 33.6\%          & 0.081  & 33.5\%               & 0.081  & 70.3\%                   & 0.224  & 258*             \\
(exponential & 5   & \textbf{95.2\%} & 0.350   & 67.1\%          & 0.166  & 67.1\%               & 0.166  & 88.2\%                   & 0.264  & 531*             \\
decay)       & 10  & \textbf{95.2\%} & 0.311   & 82.2\%          & 0.220  & 82.2\%               & 0.220  & 92.1\%                   & 0.273  & 987*             \\
Linear       & 1   & \textbf{94.7\%} & 129.255 & N.A.            & N.A.   & N.A.                 & N.A.   & N.A.                     & N.A.   & 511          \\
regression   & 2   & \textbf{94.9\%} & 52.791  & 33.9\%          & 7.575  & 34.5\%               & 7.575  & 69.9\%                   & 20.995 & 798          \\
(random      & 5   & \textbf{94.8\%} & 30.733  & 67.8\%          & 14.875 & 66.8\%               & 14.875 & 87.5\%                   & 23.555 & 1659         \\
cov matrix)  & 10  & \textbf{95.0\%} & 27.048  & 82.1\%          & 19.398 & 82.1\%               & 19.398 & 91.9\%                   & 23.787 & 3095         \\
Logistic     & 1   & \textbf{95.3\%} & 7.411   & N.A.            & N.A.   & N.A.                 & N.A.   & N.A.                     & N.A.   & 114*             \\
regression   & 2   & \textbf{95.5\%} & 2.787   & 34.9\%          & 0.404  & 32.3\%               & 0.404  & 70.7\%                   & 1.119  & 175*             \\
(exponential & 5   & \textbf{95.8\%} & 1.787   & 69.5\%          & 0.832  & 64.8\%               & 0.832  & 88.6\%                   & 1.318  & 360*             \\
decay)       & 10  & \textbf{96.0\%} & 1.588   & 84.6\%          & 1.101  & 79.8\%               & 1.101  & 92.6\%                   & 1.364  & 667*             
\end{tblr}
}
\end{table}

\begin{table}[ht]
\centering
\caption{Coverage probabilities (Pro.), confidence interval widths (Wid.) and running time (unit: second) of the ridge regression with $p=9000$ and $n=8000$. The closest coverage probabilities to the nominal 95\% level among all methods are \textbf{bold}.}
\label{table_ridge}
\medskip
\resizebox{\columnwidth}{!}{
\begin{tblr}{
  cells = {c},
  cell{1}{1} = {r=2}{},
  cell{1}{2} = {r=2}{},
  cell{1}{3} = {c=2}{},
  cell{1}{5} = {c=2}{},
  cell{1}{7} = {c=2}{},
  cell{1}{9} = {c=2}{},
  cell{1}{11} = {r=2}{},
  hline{1,3,7,11,15,19,23,27} = {-}{},
}
Example                & $B $ & Cheap Bootstrap &        & Basic Bootstrap &       & Percentile Bootstrap &       & Standard Error Bootstrap &       & Running Time \\
                       &      & Pro.            & Wid.   & Pro.            & Wid.  & Pro.                 & Wid.  & Pro.                     & Wid.  &              \\
Ridge                  & 1    & \textbf{96.7\%} & 15.572 & N.A.            & N.A.  & N.A.                 & N.A.  & N.A.                     & N.A.  & 48           \\
regression             & 2    & \textbf{97.2\%} & 5.719  & 23.1\%          & 0.553 & 25.5\%               & 0.553 & 68.8\%                   & 1.534 & 72           \\
(independent;          & 5    & \textbf{97.5\%} & 3.595  & 47.0\%          & 1.141 & 51.8\%               & 1.141 & 86.6\%                   & 1.807 & 144          \\
$\lambda=0.1)$         & 10   & \textbf{97.6\%} & 3.171  & 60.2\%          & 1.510 & 66.0\%               & 1.510 & 90.7\%                   & 1.870 & 264          \\
Ridge                  & 1    & \textbf{96.3\%} & 13.734 & N.A.            & N.A.  & N.A.                 & N.A.  & N.A.                     & N.A.  & 48           \\
regression             & 2    & \textbf{96.7\%} & 5.082  & 27.1\%          & 0.539 & 24.8\%               & 0.539 & 68.7\%                   & 1.495 & 72           \\
(independent;          & 5    & \textbf{97.0\%} & 3.212  & 54.8\%          & 1.111 & 50.3\%               & 1.111 & 86.5\%                   & 1.761 & 144          \\
$\lambda=1)$           & 10   & \textbf{97.1\%} & 2.839  & 69.2\%          & 1.471 & 64.3\%               & 1.471 & 90.6\%                   & 1.822 & 264          \\
Ridge                  & 1    & \textbf{96.9\%} & 9.588  & N.A.            & N.A.  & N.A.                 & N.A.  & N.A.                     & N.A.  & 47           \\
regression             & 2    & \textbf{97.7\%} & 3.453  & 11.6\%          & 0.263 & 37.3\%               & 0.263 & 48.0\%                   & 0.728 & 71           \\
(exponential           & 5    & \textbf{98.4\%} & 2.143  & 23.8\%          & 0.541 & 73.9\%               & 0.541 & 59.9\%                   & 0.857 & 142          \\
decay; $\lambda=0.1)$   & 10   & \textbf{98.7\%} & 1.882  & 31.2\%          & 0.716 & 88.5\%               & 0.716 & 62.8\%                   & 0.887 & 260          \\
Ridge                  & 1    & \textbf{95.7\%} & 5.776  & N.A.            & N.A.  & N.A.                 & N.A.  & N.A.                     & N.A.  & 48           \\
regression             & 2    & \textbf{96.3\%} & 2.145  & 19.5\%          & 0.240 & 36.2\%               & 0.240 & 61.6\%                   & 0.665 & 71           \\
(exponential           & 5    & \textbf{96.9\%} & 1.360  & 39.8\%          & 0.495 & 71.8\%               & 0.495 & 78.0\%                   & 0.784 & 142          \\
decay; $\lambda=1)$     & 10   & \textbf{97.2\%} & 1.203  & 51.5\%          & 0.654 & 86.7\%               & 0.654 & 81.9\%                   & 0.811 & 261          \\
Ridge                  & 1    & \textbf{96.7\%} & 15.232 & N.A.            & N.A.  & N.A.                 & N.A.  & N.A.                     & N.A.  & 49           \\
regression             & 2    & \textbf{96.8\%} & 5.527  & 19.4\%          & 0.462 & 21.2\%               & 0.462 & 64.4\%                   & 1.281 & 74           \\
(random cov            & 5    & \textbf{96.5\%} & 3.448  & 39.6\%          & 0.953 & 43.4\%               & 0.953 & 81.5\%                   & 1.510 & 151          \\
matrix;~$\lambda=0.1)$ & 10   & \textbf{96.3\%} & 3.032  & 51.2\%          & 1.261 & 56.4\%               & 1.261 & 85.6\%                   & 1.563 & 279          \\
Ridge                  & 1    & \textbf{96.3\%} & 13.302 & N.A.            & N.A.  & N.A.                 & N.A.  & N.A.                     & N.A.  & 50           \\
regression             & 2    & \textbf{96.4\%} & 4.873  & 23.5\%          & 0.457 & 20.9\%               & 0.457 & 65.2\%                   & 1.267 & 76           \\
(random cov            & 5    & \textbf{96.2\%} & 3.060  & 47.9\%          & 0.943 & 42.8\%               & 0.943 & 82.5\%                   & 1.493 & 155          \\
matrix;~$\lambda=1)$   & 10   & \textbf{95.9\%} & 2.696  & 61.2\%          & 1.247 & 55.7\%               & 1.247 & 86.6\%                   & 1.545 & 286          
\end{tblr}
}
\end{table}

\begin{table}[ht]
\centering
\caption{Coverage probabilities (Pro.), confidence interval widths (Wid.) and running time (unit: second) of the logistic regression with $n=10^5$ and different $p$. The closest coverage probability to the nominal 95\% level among all methods in each setting is \textbf{bold}.}
\label{table_different_p}
\medskip
\resizebox{\columnwidth}{!}{
\begin{tblr}{
  cells = {c},
  cell{1}{1} = {r=2}{},
  cell{1}{2} = {r=2}{},
  cell{1}{3} = {c=2}{},
  cell{1}{5} = {c=2}{},
  cell{1}{7} = {c=2}{},
  cell{1}{9} = {c=2}{},
  cell{1}{11} = {r=2}{},
  cell{3}{1} = {r=4}{},
  cell{7}{1} = {r=4}{},
  cell{11}{1} = {r=4}{},
  cell{15}{1} = {r=4}{},
  cell{19}{1} = {r=4}{},
  hline{1,3,7,11,15,19,23} = {-}{},
}
$p $  & $B $ & Cheap Bootstrap &          & Basic Bootstrap &         & Percentile Bootstrap &         & Standard Error Bootstrap &         & Running Time \\
      &      & Pro.            & Wid.     & Pro.            & Wid.    & Pro.                 & Wid.    & Pro.                     & Wid.    &              \\
12000 & 1    & \textbf{96.7\%} & 3.697    & N.A.            & N.A.    & N.A.                 & N.A.    & N.A.                     & N.A.    & 46*          \\
      & 2    & \textbf{97.7\%} & 1.378    & 42.8\%          & 0.182   & 32.3\%               & 0.182   & 75.9\%                   & 0.504   & 76*          \\
      & 5    & 99.0\%          & 0.878    & 83.1\%          & 0.375   & 64.6\%               & 0.375   & \textbf{93.1\%}          & 0.594   & 165*         \\
      & 10   & 99.5\%          & 0.778    & \textbf{95.3\%} & 0.496   & 79.0\%               & 0.496   & 96.2\%                   & 0.615   & 314*         \\
15000 & 1    & \textbf{97.5\%} & 5.368    & N.A.            & N.A.    & N.A.                 & N.A.    & N.A.                     & N.A.    & 84*          \\
      & 2    & \textbf{98.7\%} & 1.993    & 46.3\%          & 0.251   & 33.3\%               & 0.251   & 80.2\%                   & 0.697   & 142*         \\
      & 5    & 99.8\%          & 1.268    & 88.0\%          & 0.518   & 66.4\%               & 0.518   & \textbf{95.9\%}          & 0.821   & 316*         \\
      & 10   & 100\%           & 1.123    & \textbf{96.2\%} & 0.686   & 80.9\%               & 0.686   & 98.1\%                   & 0.849   & 606*         \\
18000 & 1    & \textbf{98.7\%} & 11.473   & N.A.            & N.A.    & N.A.                 & N.A.    & N.A.                     & N.A.    & 160*         \\
      & 2    & \textbf{99.7\%} & 4.249    & 47.2\%          & 0.495   & 35.0\%               & 0.495   & 88.3\%                   & 1.372   & 285*         \\
      & 5    & 100\%           & 2.700    & 89.1\%          & 1.022   & 69.6\%               & 1.022   & \textbf{99.2\%}          & 1.619   & 659*         \\
      & 10   & 100\%           & 2.389    & \textbf{96.2\%} & 1.354   & 84.0\%               & 1.354   & 99.9\%                   & 1.676   & 1283*        \\
21000 & 1    & \textbf{100\%}  & 1272.497 & N.A.            & N.A.    & N.A.                 & N.A.    & N.A.                     & N.A.    & 143*         \\
      & 2    & 100\%           & 468.656  & 36.6\%          & 48.271  & 36.5\%               & 48.271  & \textbf{99.9\%}          & 133.799 & 242*         \\
      & 5    & \textbf{100\%}  & 296.354  & 72.5\%          & 99.900  & 72.2\%               & 99.900  & \textbf{100\%}           & 158.067 & 537*         \\
      & 10   & \textbf{100\%}  & 261.811  & 86.6\%          & 132.870 & 86.3\%               & 132.870 & \textbf{100\%}           & 163.747 & 1029*        \\
25000 & 1    & \textbf{99.9\%} & 201.611  & N.A.            & N.A.    & N.A.                 & N.A.    & N.A.                     & N.A.    & 142*         \\
      & 2    & 100\%           & 74.595   & 36.6\%          & 7.597   & 35.3\%               & 7.597   & \textbf{98.8\%}          & 21.057  & 215*         \\
      & 5    & \textbf{100\%}  & 47.205   & 72.5\%          & 15.737  & 70.1\%               & 15.737  & \textbf{100\% }          & 24.922  & 433*         \\
      & 10   & \textbf{100\%}  & 41.583   & 86.8\%          & 20.819  & 84.7\%               & 20.819  & \textbf{100\%}           & 25.750  & 797*         
\end{tblr}
}
\end{table}

\begin{table}[ht]
\centering
\caption{Coverage probabilities (Pro.), confidence interval widths (Wid.) and running time (unit: second) of the $\ell_2$-regularized logistic regression with $n=10^5$ and different $p$. The closest coverage probability to the nominal 95\% level among all methods in each setting is \textbf{bold}.}
\label{table_different_p_l2}
\medskip
\resizebox{\columnwidth}{!}{
\begin{tblr}{
  cells = {c},
  cell{1}{1} = {r=2}{},
  cell{1}{2} = {r=2}{},
  cell{1}{3} = {c=2}{},
  cell{1}{5} = {c=2}{},
  cell{1}{7} = {c=2}{},
  cell{1}{9} = {c=2}{},
  cell{1}{11} = {r=2}{},
  cell{3}{1} = {r=4}{},
  cell{7}{1} = {r=4}{},
  cell{11}{1} = {r=4}{},
  cell{15}{1} = {r=4}{},
  cell{19}{1} = {r=4}{},
  hline{1,3,7,11,15,19,23} = {-}{},
}
$p $  & $B $ & Cheap Bootstrap &       & Basic Bootstrap &       & Percentile Bootstrap &       & Standard Error Bootstrap &       & Running Time \\
      &      & Pro.            & Wid.  & Pro.            & Wid.  & Pro.                 & Wid.  & Pro.                     & Wid.  &              \\
12000 & 1    & \textbf{96.3\%} & 3.162 & N.A.            & N.A.  & N.A.                 & N.A.  & N.A.                     & N.A.  & 42*          \\
      & 2    & \textbf{97.3\%} & 1.182 & 41.0\%          & 0.161 & 32.2\%               & 0.161 & 74.7\%                   & 0.446 & 69*          \\
      & 5    & 98.4\%          & 0.755 & 80.3\%          & 0.331 & 64.4\%               & 0.331 & \textbf{92.2\%}          & 0.525 & 150*         \\
      & 10   & 98.9\%          & 0.669 & 93.5\%          & 0.438 & 78.9\%               & 0.438 & \textbf{95.6\%}          & 0.543 & 284*         \\
15000 & 1    & \textbf{96.8\%} & 3.943 & N.A.            & N.A.  & N.A.                 & N.A.  & N.A.                     & N.A.  & 83*          \\
      & 2    & \textbf{97.9\%} & 1.471 & 43.5\%          & 0.196 & 32.9\%               & 0.196 & 76.9\%                   & 0.543 & 141*         \\
      & 5    & 99.1\%          & 0.938 & 84.3\%          & 0.404 & 65.6\%               & 0.404 & \textbf{93.9\%}          & 0.640 & 315*         \\
      & 10   & 99.6\%          & 0.832 & \textbf{95.9\%} & 0.534 & 80.1\%               & 0.534 & 96.8\%                   & 0.662 & 606*         \\
18000 & 1    & \textbf{97.3\%} & 5.056 & N.A.            & N.A.  & N.A.                 & N.A.  & N.A.                     & N.A.  & 156*         \\
      & 2    & \textbf{98.5\%} & 1.884 & 45.7\%          & 0.246 & 33.6\%               & 0.246 & 79.3\%                   & 0.683 & 241*         \\
      & 5    & 99.6\%          & 1.202 & 87.6\%          & 0.508 & 67.1\%               & 0.508 & \textbf{95.5\%}          & 0.805 & 498*         \\
      & 10   & 99.9\%          & 1.065 & \textbf{97.1\%} & 0.673 & 81.6\%               & 0.673 & 97.8\%                   & 0.833 & 925*         \\
21000 & 1    & \textbf{97.6\%} & 6.400 & N.A.            & N.A.  & N.A.                 & N.A.  & N.A.                     & N.A.  & 191*         \\
      & 2    & \textbf{98.8\%} & 2.385 & 47.1\%          & 0.310 & 34.5\%               & 0.310 & 81.4\%                   & 0.859 & 319*         \\
      & 5    & 99.8\%          & 1.520 & 89.4\%          & 0.639 & 68.6\%               & 0.639 & \textbf{96.5\%}          & 1.012 & 703*         \\
      & 10   & 100\%           & 1.348 & \textbf{97.5\%} & 0.845 & 83.1\%               & 0.845 & 98.3\%                   & 1.047 & 1343*        \\
25000 & 1    & \textbf{97.4\%} & 7.259 & N.A.            & N.A.  & N.A.                 & N.A.  & N.A.                     & N.A.  & 208*         \\
      & 2    & \textbf{98.7\%} & 2.712 & 46.5\%          & 0.362 & 35.2\%               & 0.362 & 80.8\%                   & 1.004 & 350*         \\
      & 5    & 99.7\%          & 1.731 & 88.8\%          & 0.746 & 69.9\%               & 0.746 & \textbf{96.0\%}          & 1.183 & 777*         \\
      & 10   & 99.9\%          & 1.536 & 98.3\%          & 0.988 & 84.5\%               & 0.988 & \textbf{97.9\%}          & 1.224 & 1487*        
\end{tblr}
}
\end{table}

\subsection{Logistic Regression with Independent Covariates}\label{sec:logistic regression}

Figure \ref{figure_logistic_vs_B} presents the coverage probabilities and confidence interval widths of 95\%-level confidence intervals for three typical choices of parameters: $\beta_1=1$, $\beta_{301}=-1$ and $\beta_{601}=0$. We observe that all cheap bootstrap coverage probabilities are close to or larger than the nominal level 95\% while other bootstrap method coverages are below 90\% except for the standard error bootstrap for $\beta_{601}=0$ and $B\ge5$. Besides, cheap bootstrap interval widths are larger than others but decay very fast for the first few $B$'s, in line with our observation in the previous linear regression example. In fact, it is already quite close to other bootstrap widths for $\beta_{601}=0$ and $B=10$. Figure \ref{figure_logistic} reports the box plot of the coverage probabilities and confidence interval widths of all $\beta_i$'s with $B=1,2,5,10$. We distinguish between $\beta_i\neq0$ and $\beta_i=0$ since the former has wider widths than the latter. For $\beta_{i}\neq0$, the cheap bootstrap widths shrink more slowly so that almost all cheap bootstrap coverage probabilities are 100\% but other bootstrap method coverages are still below 90\% in almost all cases. For $\beta_{i}=0$, the cheap bootstrap with any $B$, standard error bootstrap with $B=5,10$ and basic bootstrap with $B=10$ have coverage probabilities close to the nominal level 95\%. In other cases, most of the coverage probabilities are below 85\%. A similar decay rate for the cheap bootstrap interval width is also observed here: it decreases by around $2/3$ from $B=1$ to $B=2$ and by around $4/5$ from $B=1$ to $B=10$.

\begin{figure}[htbp]
\centering

\subfloat[Coverage probability for $\beta_1=1$]{\includegraphics[width=0.45\textwidth]{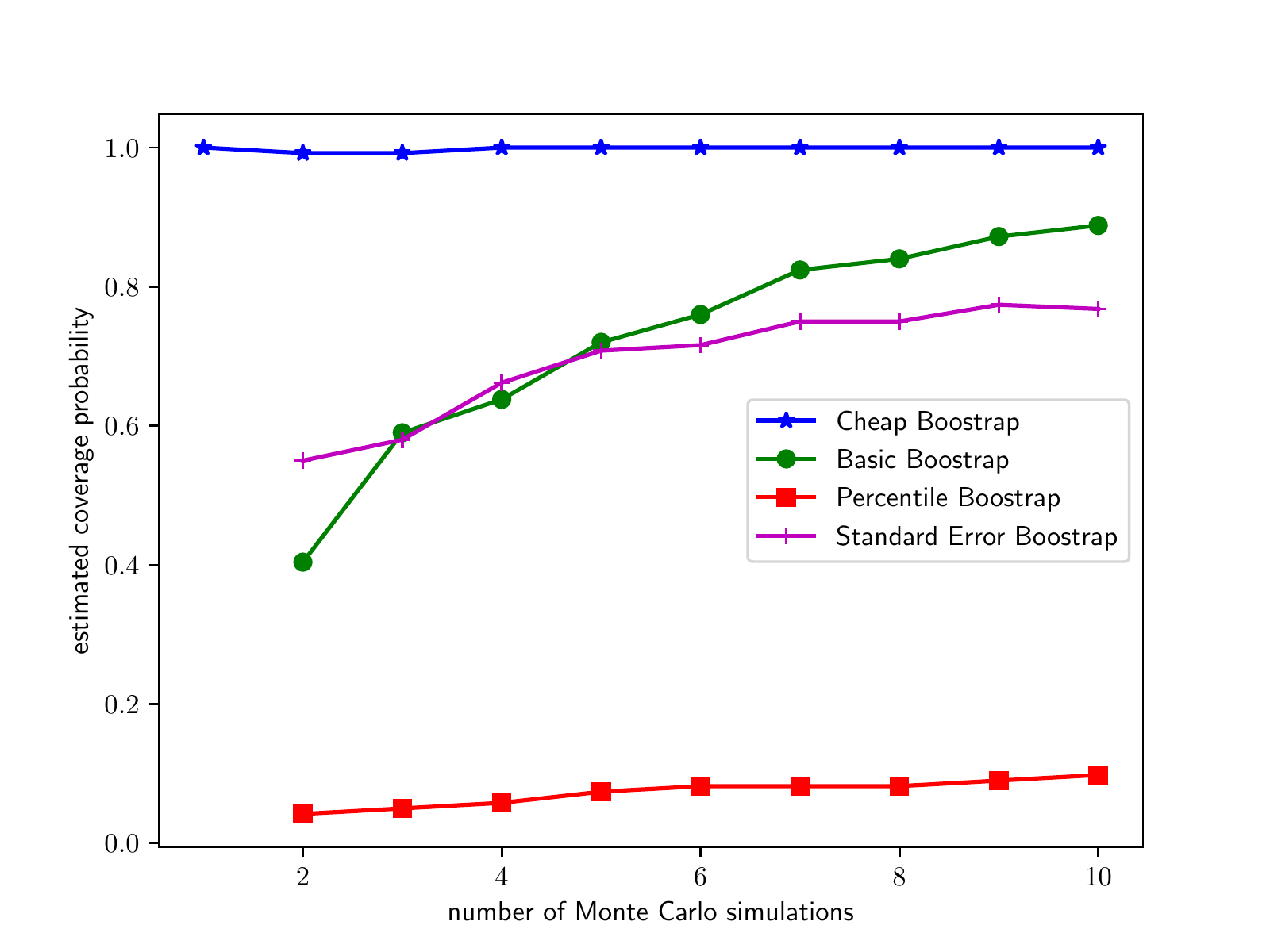}}
\subfloat[Confidence interval width for $\beta_1=1$]{\includegraphics[width=0.45\textwidth]{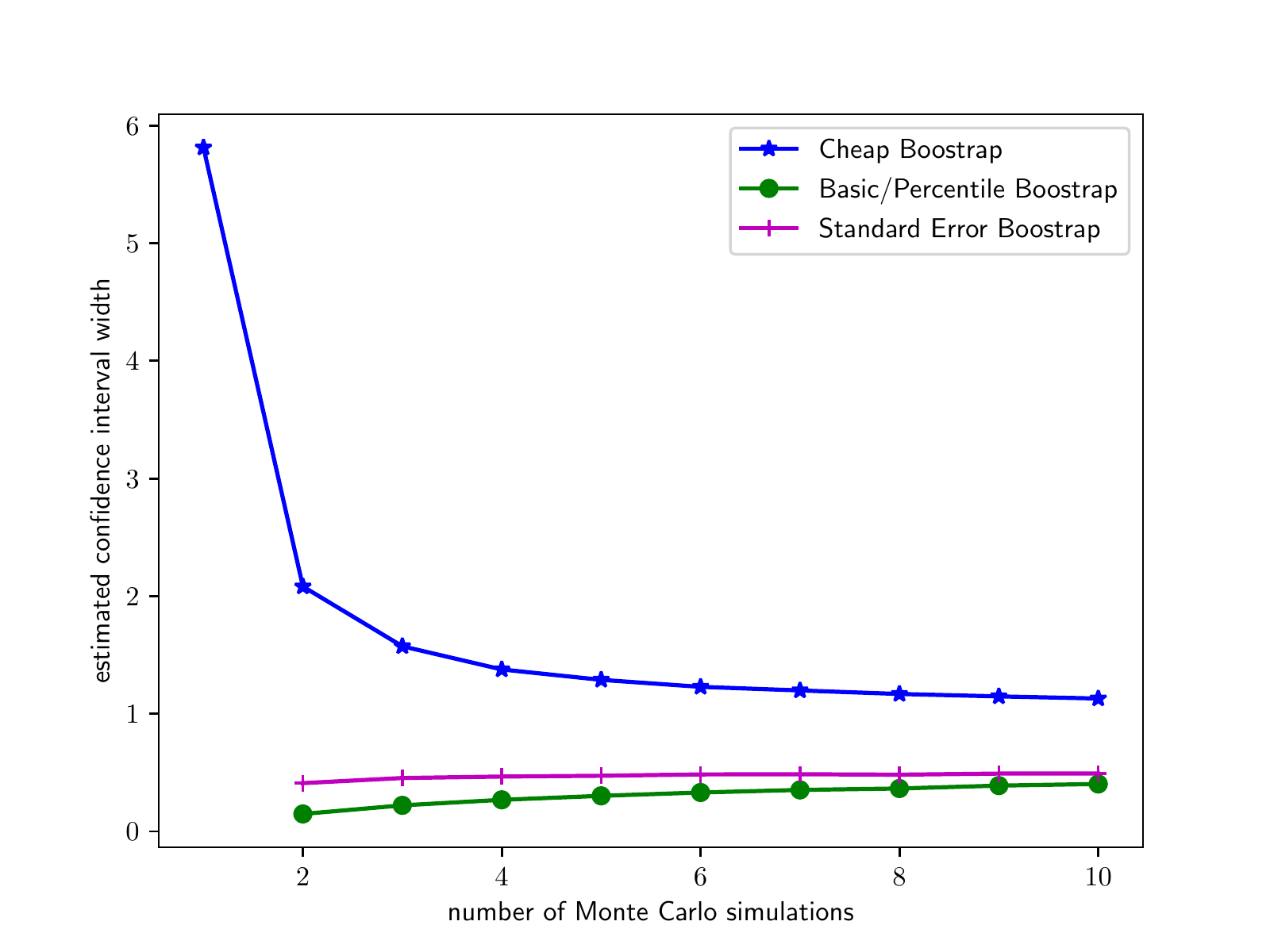}}\\
\subfloat[Coverage probability for $\beta_{301}=-1$]{\includegraphics[width=0.45\textwidth]{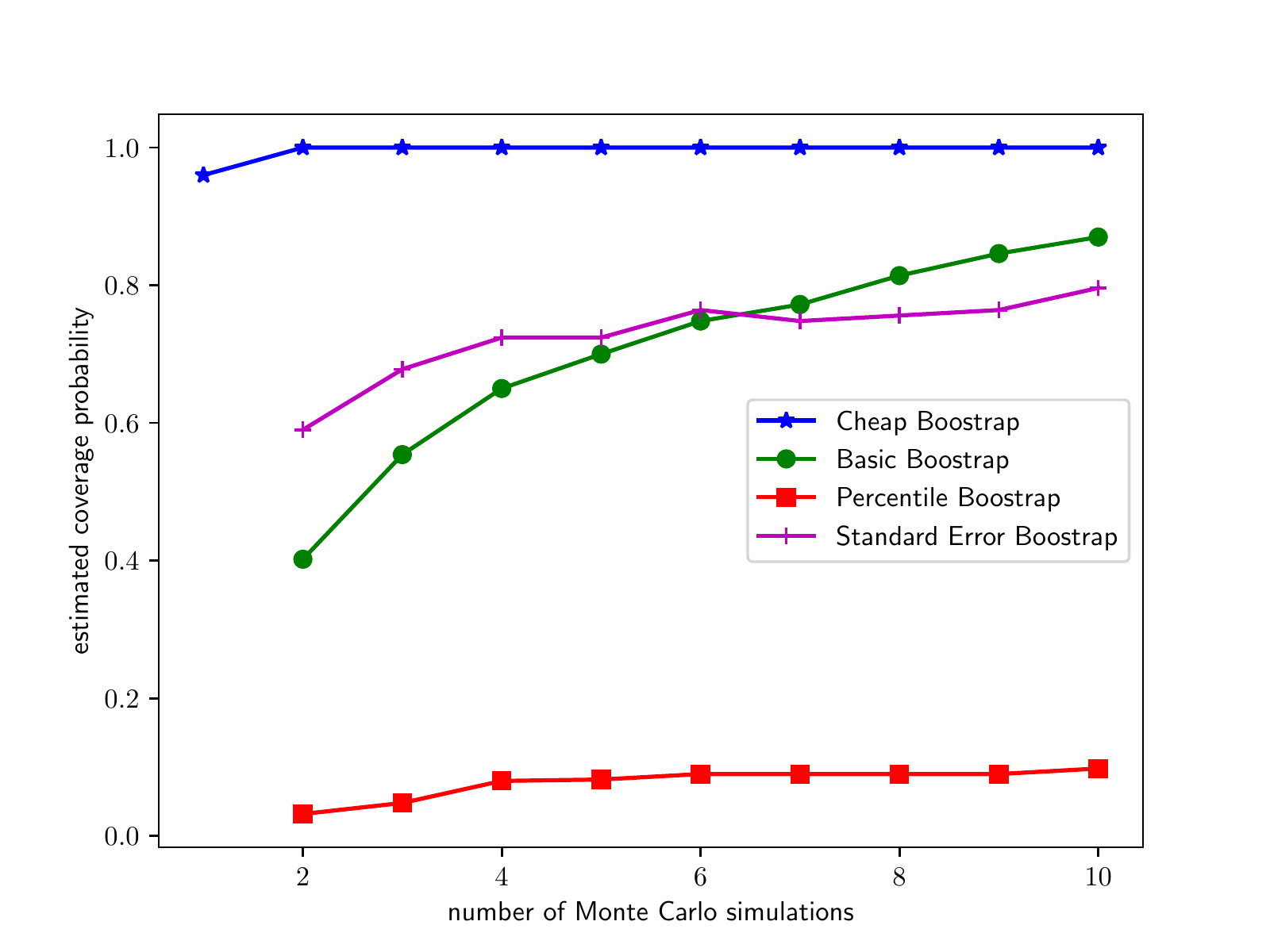}}
\subfloat[Confidence interval width for $\beta_{301}=-1$]{\includegraphics[width=0.45\textwidth]{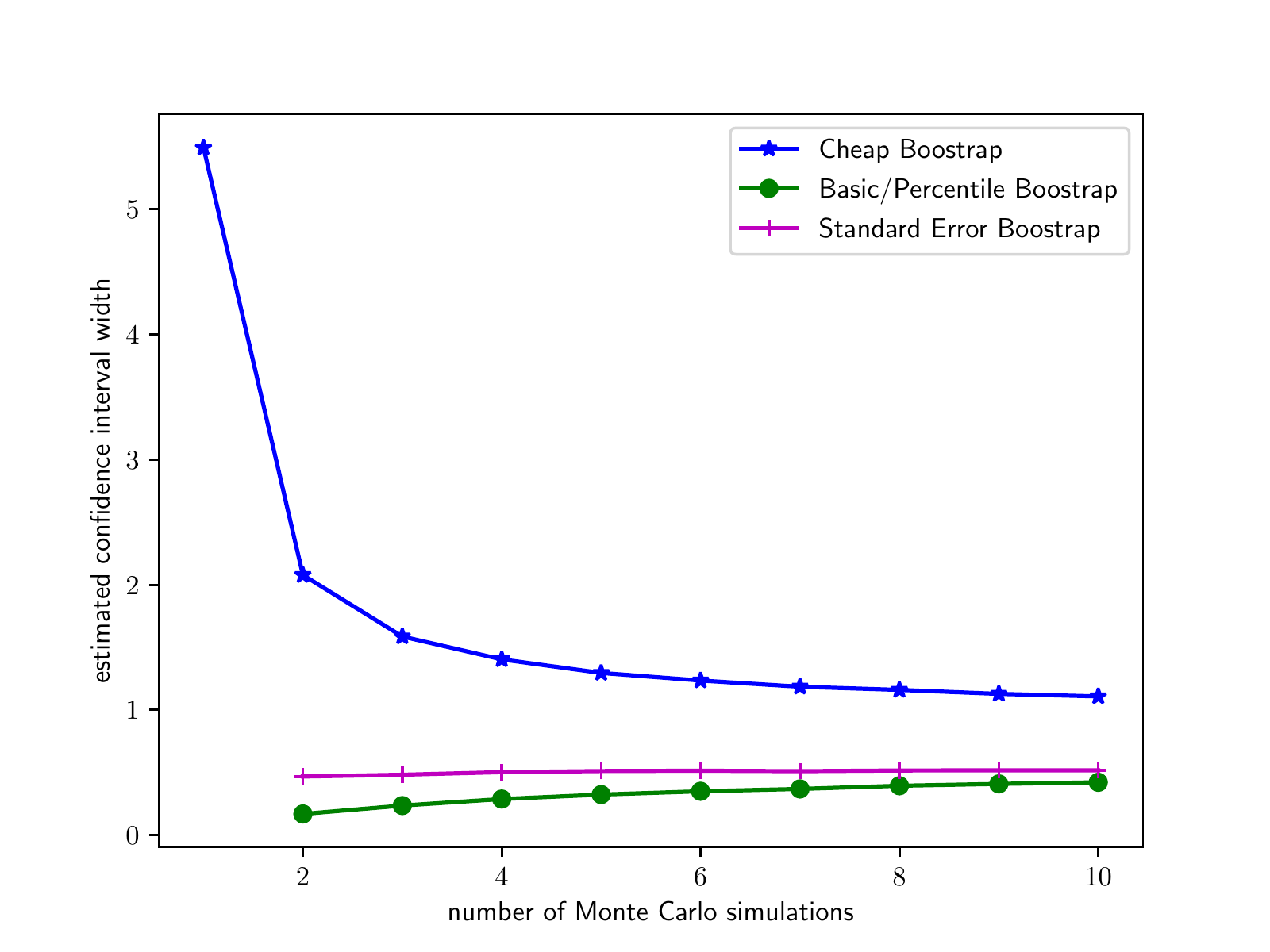}}\\
\subfloat[Coverage probability for $\beta_{601}=0$]{\includegraphics[width=0.45\textwidth]{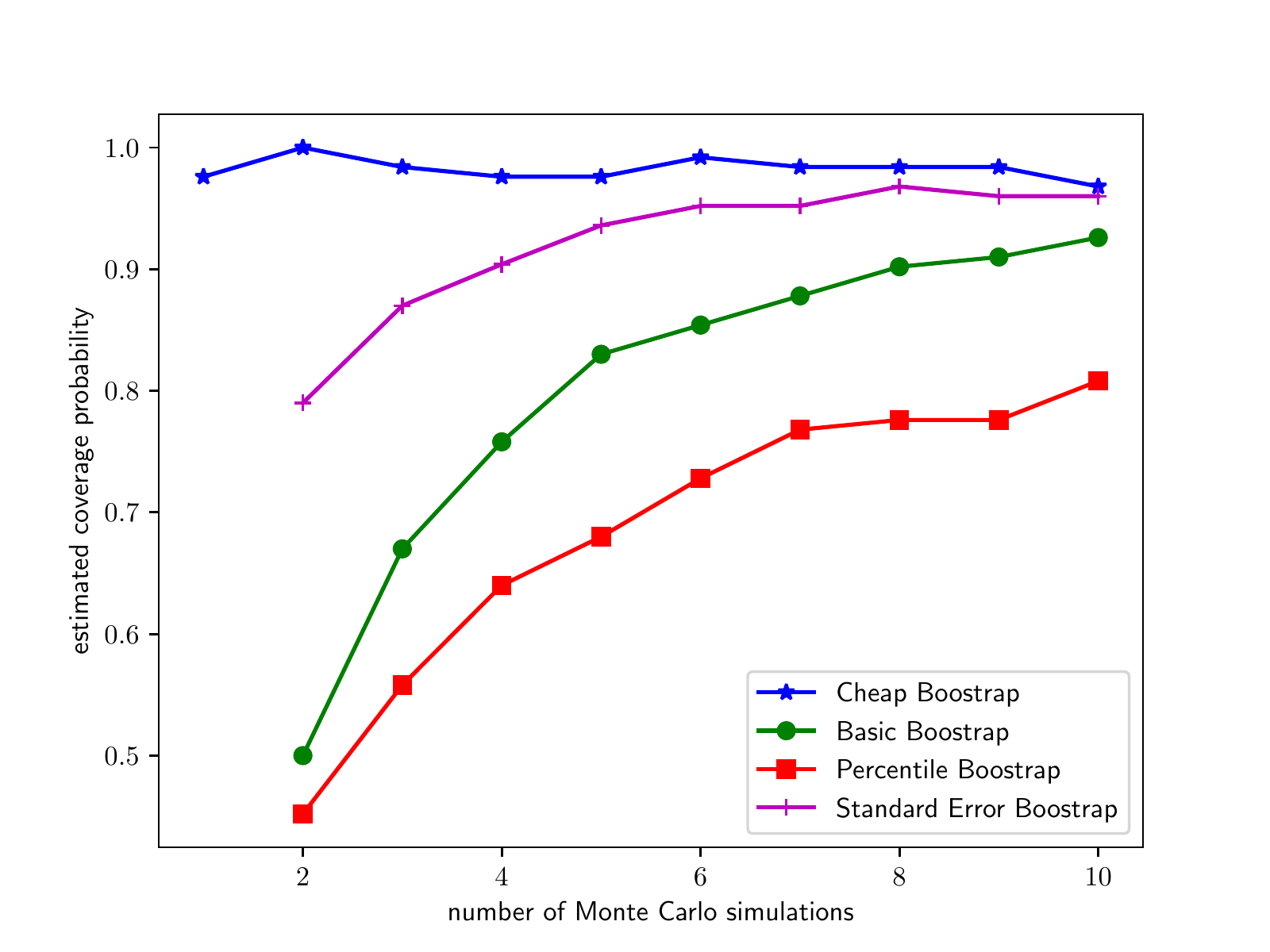}}
\subfloat[Confidence interval width for $\beta_{601}=0$]{\includegraphics[width=0.45\textwidth]{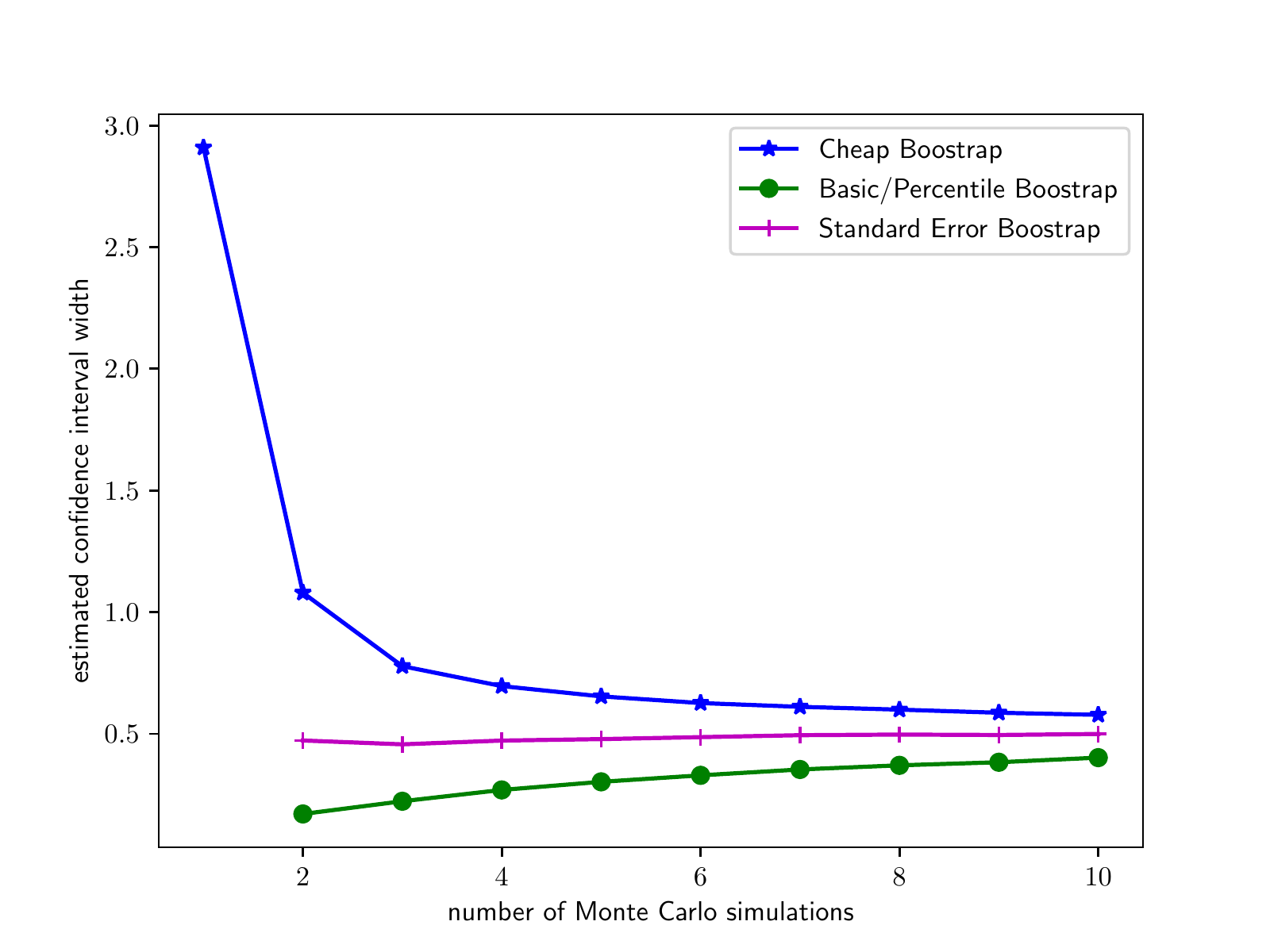}}\\

\caption{Empirical coverage probabilities and confidence interval widths for different numbers of resamples in a logistic regression.}%
\label{figure_logistic_vs_B}

\end{figure}

\begin{figure}[ht]
\centering

\subfloat[Coverage probability $\beta_i\neq0$]{\includegraphics[width=0.45\textwidth]{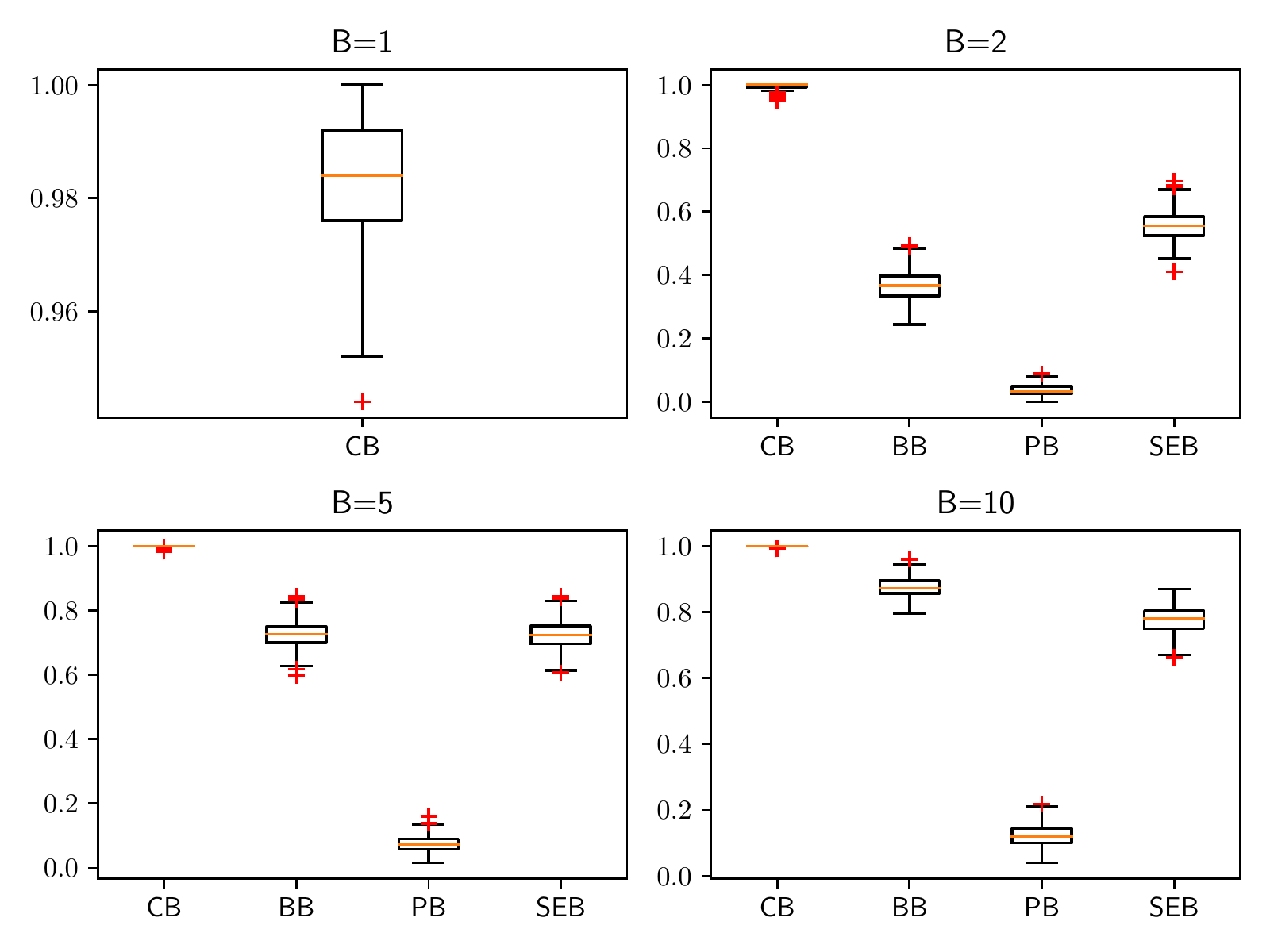}}
\subfloat[Confidence interval width $\beta_i\neq0$]{\includegraphics[width=0.45\textwidth]{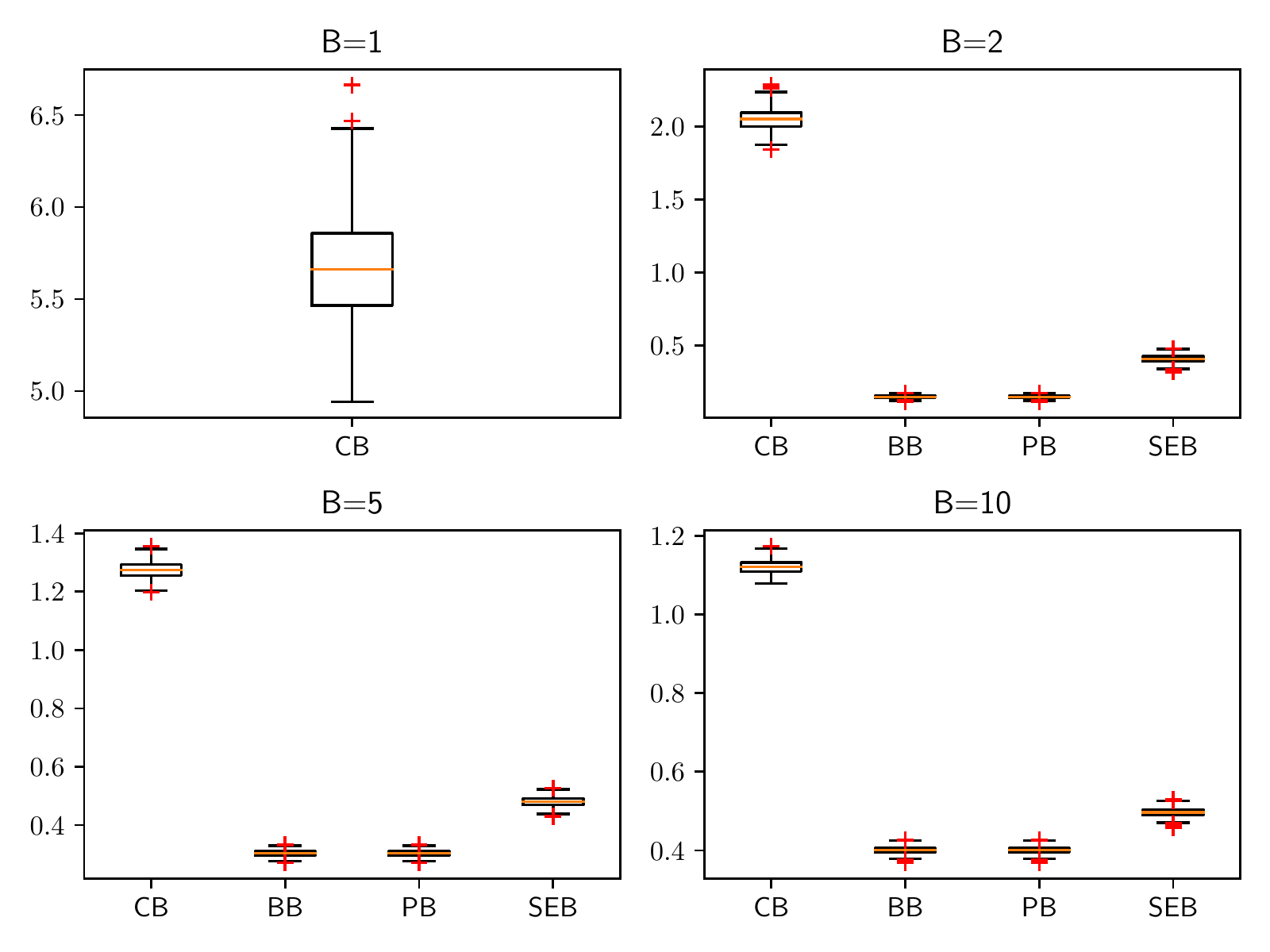}}\\
\subfloat[Coverage probability $\beta_i=0$]{\includegraphics[width=0.45\textwidth]{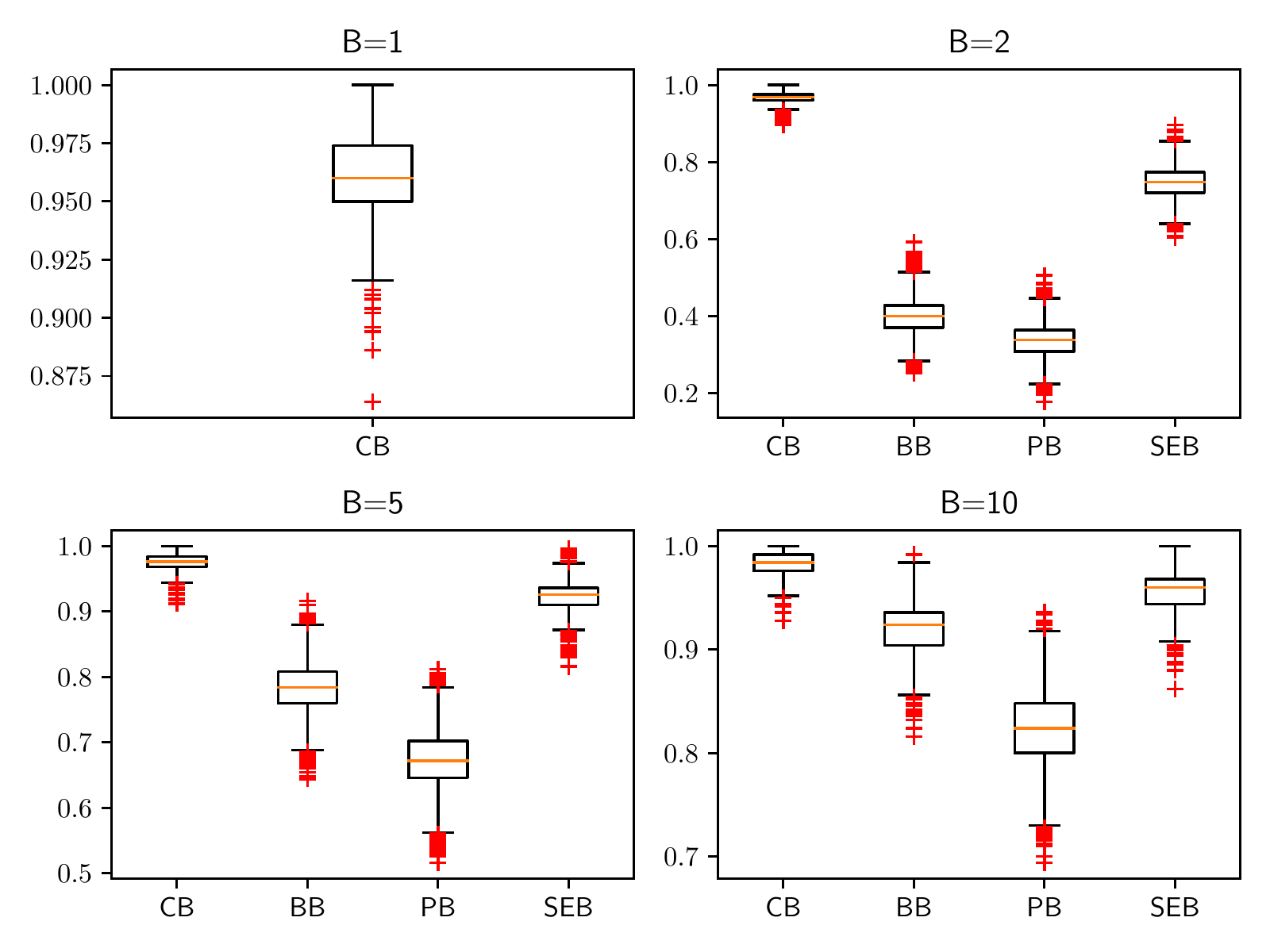}}
\subfloat[Confidence interval width $\beta_i=0$]{\includegraphics[width=0.45\textwidth]{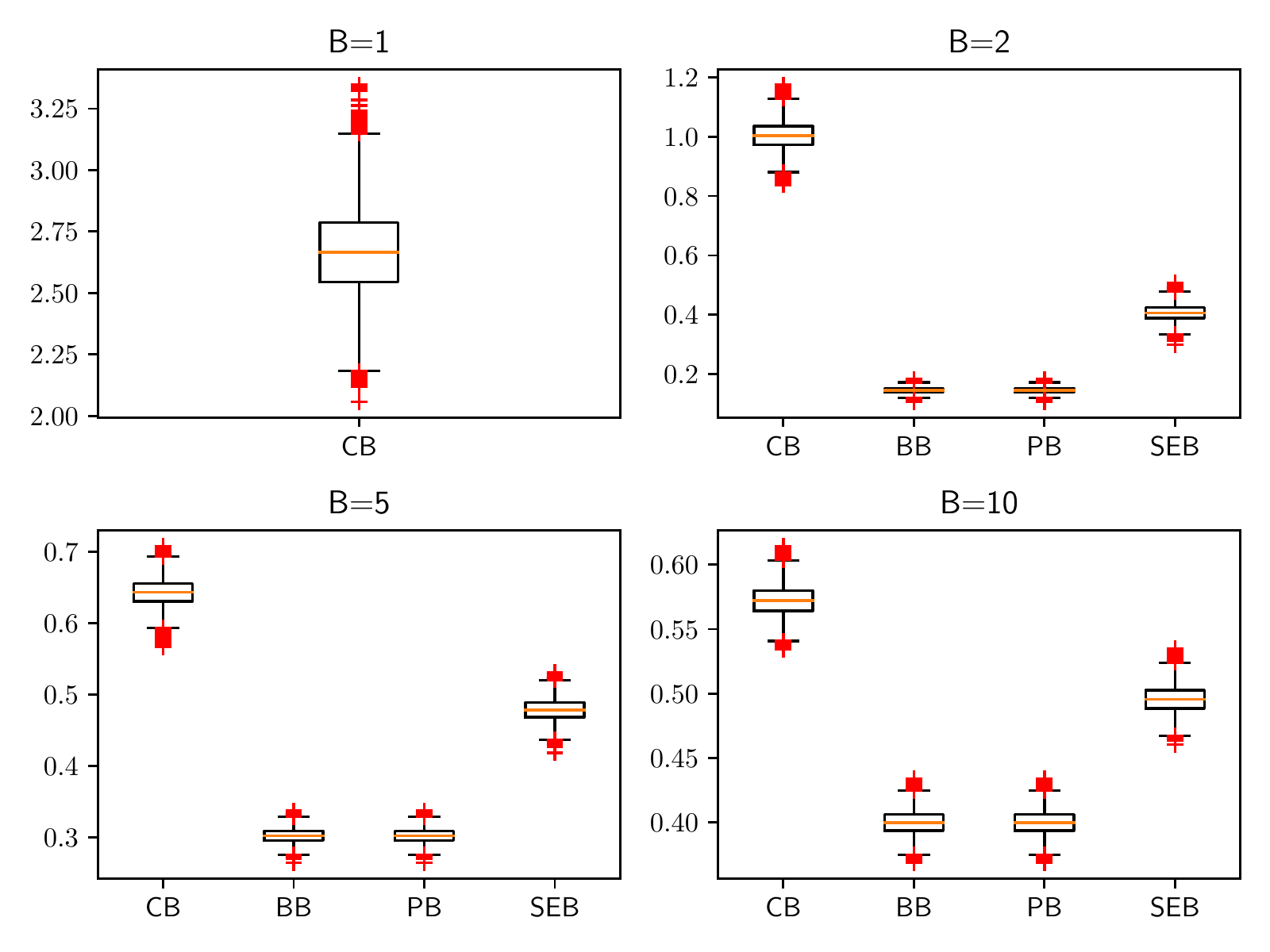}}

\caption{Box plots of empirical coverage probabilities and confidence interval widths of all $\beta_i$'s for different numbers of resamples in a logistic regression.}%
\label{figure_logistic}

\end{figure}

\subsection{Computer Network}\label{sec:computer model}
We detail the specifications of the computer communication network simulation model; similar models have been used in \citet{cheng1997sensitivity,lin2015single,lam2022subsampling}. This network can be represented by an undirected graph in Figure \ref{network}. The four nodes denote message processing units and the four edges are transport channels. For every pair of nodes $i,j$ ($i\neq j$), there are external messages which enter into node $i$ from the external and are to be transmitted to node $j$ through a prescribed path. Their arrival time follows a Poisson process with parameter $\lambda_{i,j}$ showed in Table \ref{network_parameter1}. All the message lengths (unit: bits) are i.i.d. following a common exponential distribution with mean $300$ bits. Suppose each unit spends $0.001$ second to process a message passing it. We assume the node storage is unlimited but the channel storage is restricted to $275000$ bits. Message speed in transport channels is $150000$ miles per second and channel $i$ has length $100i$ miles. Therefore, it takes $l/275000+100i/150000$ seconds for a message with length $l$ bits to pass channel $i$. Suppose the network is empty at the beginning. The performance measure of interest is the steady-state average delay for the messages where delay means the time from the entering node to the destination node. It has approximate true value $7.05\times10^{-3}$. This example has $13$ unknown input distributions, i.e., $12$ inter-arrival time distributions $\mathrm{Exp}(\lambda_{i,j})$ and one message length distribution $\mathrm{Exp}(1/300)$, for which we have data sizes from $3\times10^{4}$ to $6\times10^{4}$. Given input distributions $P_1,\ldots,P_{13}$, the performance measure of this system can be computed accurately by
\[
\psi(P_{1},\ldots,P_{13})=E_{P_{1},\ldots,P_{13}}\left[  \frac{1}{9500}\sum_{k=501}^{10000}D_{k}\right]  ,
\]
where $D_{k}$ is the delay for the $k$-th message. The point estimator of $\psi(P_{1},\ldots,P_{13})$ is taken as $\hat{\psi}=\psi(\hat{P}_{1,n_{1}},\ldots,\hat{P}_{13,n_{13}})$ where each $\hat{P}_{i,n_{i}}$ is the empirical distribution of $n_{i}$ i.i.d. samples $\{X_{i,j},j=1,\ldots,n_{i}\}$ from the $i$-th input distribution $P_{i}$. 

Next we construct the bootstrap estimator $\hat{\psi}^{\ast b}$. For each $b=1,\ldots,B$ and $i=1,\ldots,13$, we sample with replacement the data $\{X_{i,j},j=1,\ldots,n_{i}\}$ to obtain the bootstrap resamples $\{X_{i,j}^{\ast b},j=1,\ldots,n_{i}\}$ and denote the resample empirical distribution by $\hat{P}_{i,n_{i}}^{\ast b}$. The sampling procedure is conducted independently for different $b$ and $i$. The bootstrap estimator is taken as $\hat{\psi}^{\ast b}=\psi(\hat{P}_{1,n_{1}}^{\ast b},\ldots,\hat{P}_{13,n_{13}}^{\ast b})$. The cheap bootstrap confidence interval is still constructed as in (\ref{cheap CI}).

\begin{figure}[htbp]
\centering
\includegraphics[width=0.45\textwidth]{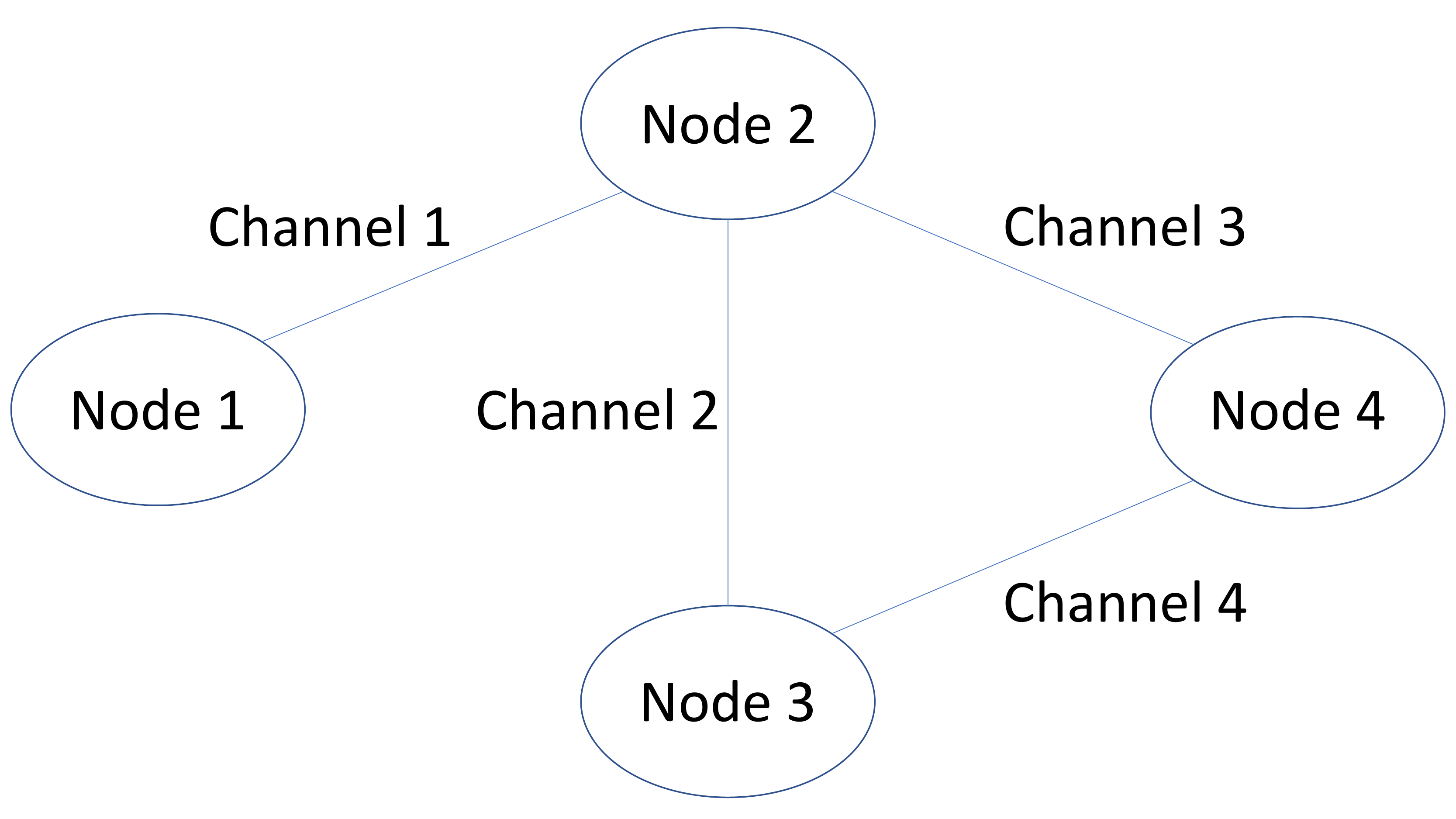}

\caption{A computer network with four nodes and four channels.}
\label{network}
\end{figure}

\begin{table}[ht]
\caption{Arrival rates $\lambda_{i,j}$ of messages to be transmitted from node $i$ to node $j$.}
\medskip
\label{network_parameter1}
\centering
\begin{tabular}{@{}ccccc@{}}
\toprule
         & \multicolumn{4}{c}{Node $j$} \\ \cmidrule(l){2-5}
Node $i$ & 1     & 2     & 3     & 4    \\ 
\midrule
1        & N.A.  & 40    & 30    & 35   \\
2        & 50    & N.A.  & 45    & 15   \\
3        & 60    & 15    & N.A.  & 20   \\
4        & 25    & 30    & 40    & N.A. \\ 
\bottomrule
\end{tabular}
\end{table}





The results for the above configuration can be found in the row ``Stochastic simulation'' of Table \ref{num_table} and the corresponding discussions can be found in Section \ref{sec:num}.

To investigate the robustness of the cheap bootstrap or other methods, we consider an alternative configuration where computer network is the same but the input models are different. More concretely, all 13 input models (12 inter-arrival time distributions and one message length distribution) are changed to Gamma distributions $\mathrm{Gamma}(\alpha,\beta)$ which have densities of the form
\[
f(x)=\frac{\beta^{\alpha}}{\Gamma(\alpha)}x^{\alpha-1}e^{-\beta x},x>0.
\]
The message length distribution follows $\mathrm{Gamma}(2.5,1/200)$ and the parameters for the inter-arrival time distributions $\mathrm{Gamma}(\alpha_{i,j},\beta_{i,j})$ are given in Table \ref{network_parameter2}. Under the new input distributions, the true steady-state mean delay is approximately $0.0109$. Figure \ref{figure_network2} reports the results. The cheap bootstrap coverage probabilities are close to the nominal level 95\% for any $B$ while those of the basic bootstrap and standard error bootstrap are below 60\% and 90\% respectively even for $B=10$. Percentile bootstrap also performs well when $B\ge7$ perhaps because of the skewness of the estimates. But this is not always the case in view of the previous numerical results.

\begin{table}[ht]
\caption{Parameters $(\alpha_{i,j},\beta_{i,j})$ for the inter-arrival time distribution of messages to be transmitted from node $i$ to node $j$.}
\medskip
\label{network_parameter2}
\centering
\begin{tabular}{@{}ccccc@{}}
\toprule
         & \multicolumn{4}{c}{Node $j$}                  \\ \cmidrule(l){2-5} 
Node $i$ & 1         & 2         & 3         & 4         \\ 
\midrule
1        & N.A.      & $(1.5, 60)$ & $(0.7, 40)$ & $(1.3, 50)$ \\
2        & $(2, 80)$   & N.A.      & $(1.5, 65)$ & $(0.6, 20)$ \\
3        & $(3, 100)$  & $(0.5, 25)$ & N.A.      & $(1.2, 30)$ \\
4        & $(0.8, 40)$ & $(1.1, 50)$ & $(0.9, 35)$ & N.A.      \\ 
\bottomrule
\end{tabular}
\end{table}

\begin{figure}[htbp]
\centering

\subfloat[Coverage probability]{\includegraphics[width=0.45\textwidth]{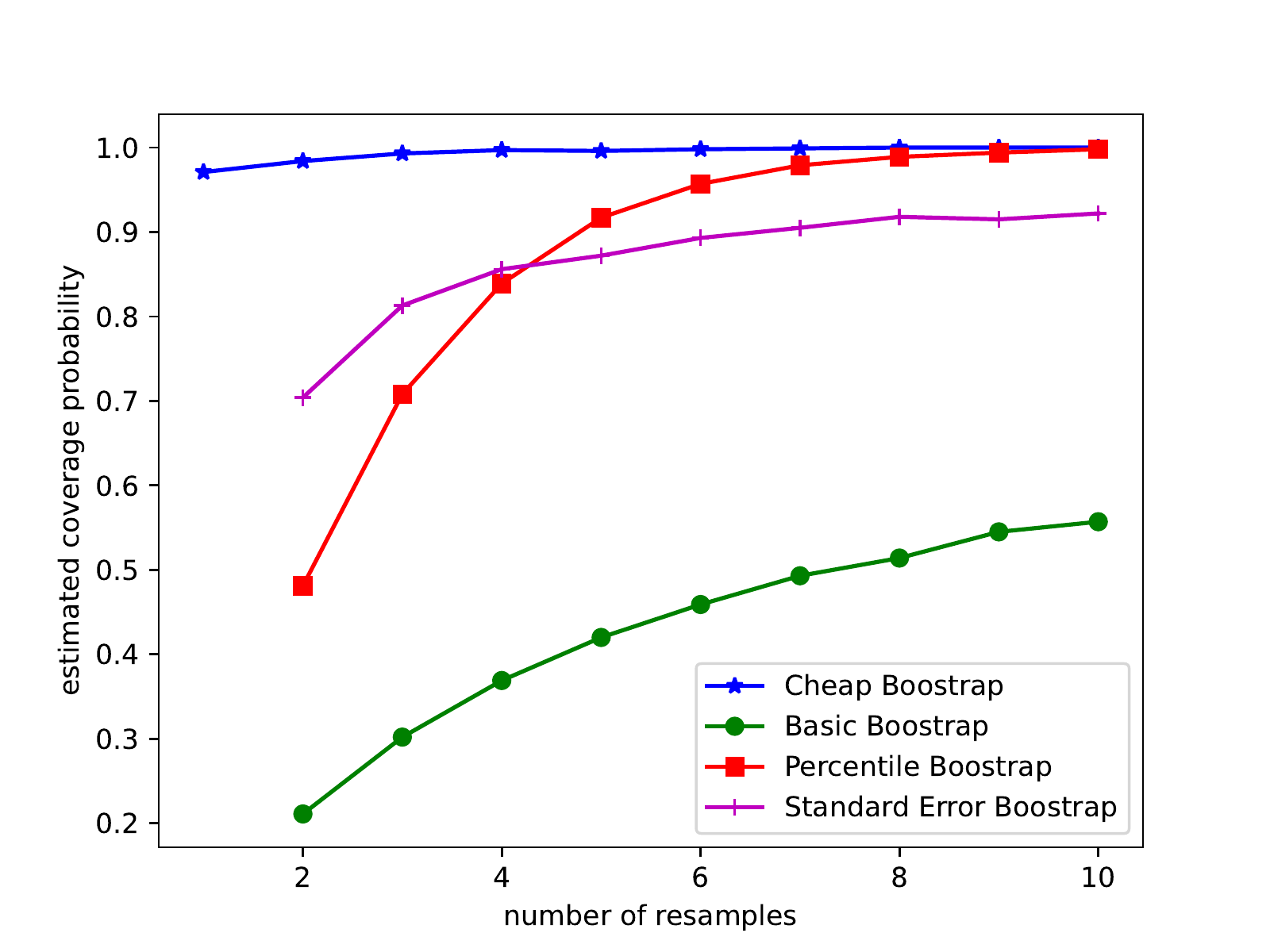}}
\subfloat[Confidence interval width]{\includegraphics[width=0.45\textwidth]{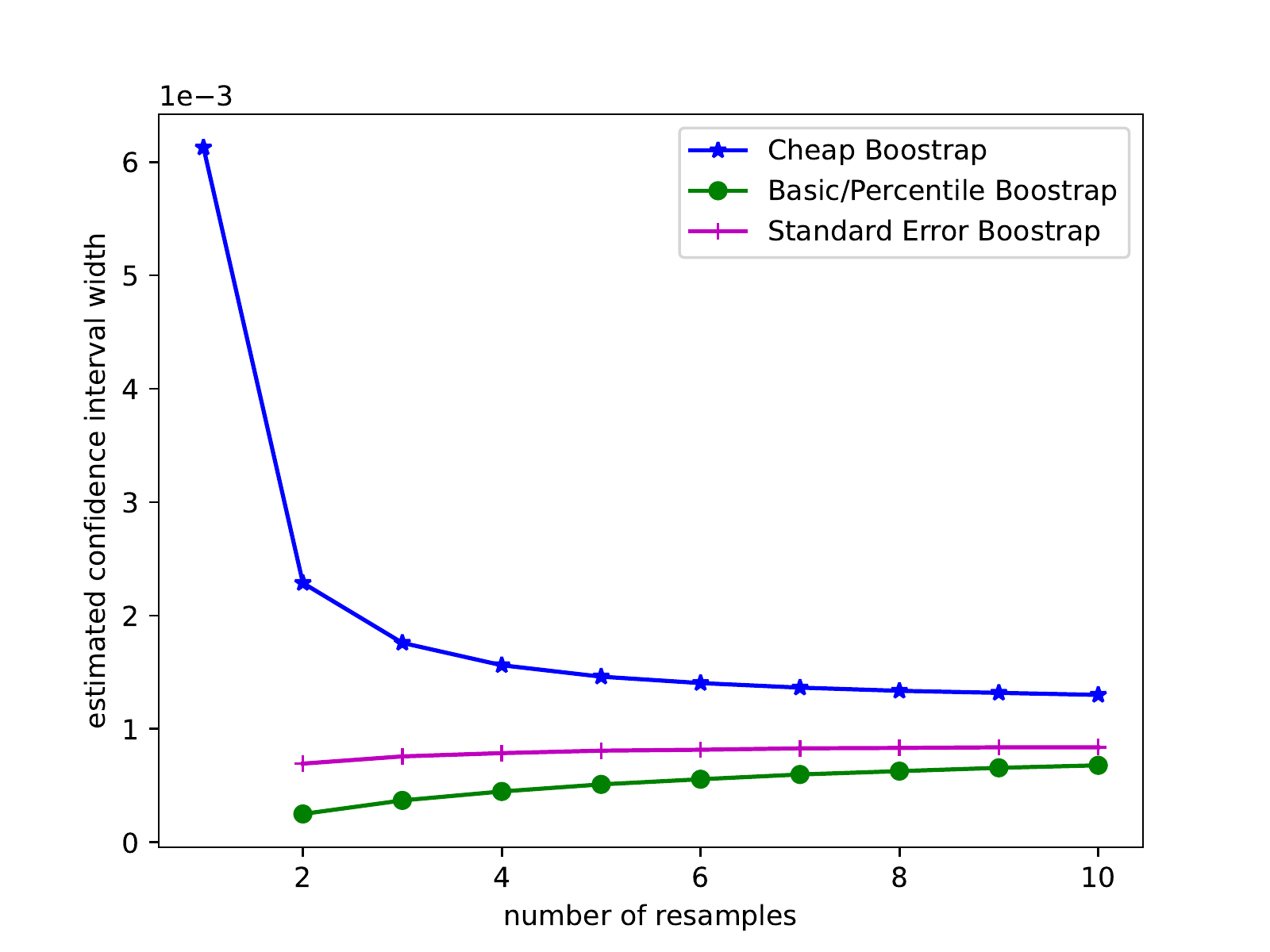}}

\caption{Empirical coverage probabilities and confidence interval widths for different numbers of resamples in a computer communication network.}%
\label{figure_network2}

\end{figure}

\subsection{Real World Problem}\label{sec:real data}
The data we use is the RCV1-v2 data in \citet{lewis2004rcv1}. This dataset contains $n=804414$ manually categorized newswire stories with a total of $p=47236$ features. We compare the confidence interval widths of the logistic regression parameters for the four bootstrap methods, by using all the observations to run the logistic regression and estimate the parameters. That is, the observation matrix is of the size $804414\times47236\approx4\times10^{10}$. There are up to $103$ different categories for all these newswire stories. As in \citet{singh2009parallel} and \citet{balakrishnan2008algorithms}, we only use the ``economics'' (``ECAT'') as the $+1$ label, i.e., the label $Y$ is $1$ if the newswire story is in ``economics'' and $0$ if not, which leads to $119920$ positive labels. Besides, we add $l_2$ regularization to this logistic regression as in \citet{singh2009parallel} and \citet{balakrishnan2008algorithms}. 

To run this logistic regression, we use sklearn.linear\_model.LogisticRegression (a machine learning package in Python) in the virtual machine c2-standard-8 in Google Cloud Platform, which takes about 30-40 minutes to run one bootstrap resample. Therefore, the common bootstrap methods which require $B=50$ or $100$ would be computationlly expensive.

In Section \ref{sec:num}, we report and discuss the average interval widths over all $\beta_i$'s for the four bootstrap methods. Here we display the results for three individual parameters, namely the first three $\beta_i$'s, in Figure \ref{figure_real_l2}. Since we are only able to run one experimental repetition in this real world example, the confidence interval widths contain some noises and thus we cannot observe the monotonicity of the widths when $B$ increases. But we still see the cheap bootstrap confidence interval widths are wider than others, with general trends that resemble the average interval widths in our synthetic examples. This suggests that the cheap bootstrap confidence intervals would have higher and closer-to-nominal coverages than the other methods.


\begin{figure}[htbp]
\centering
\includegraphics[width=0.45\textwidth]{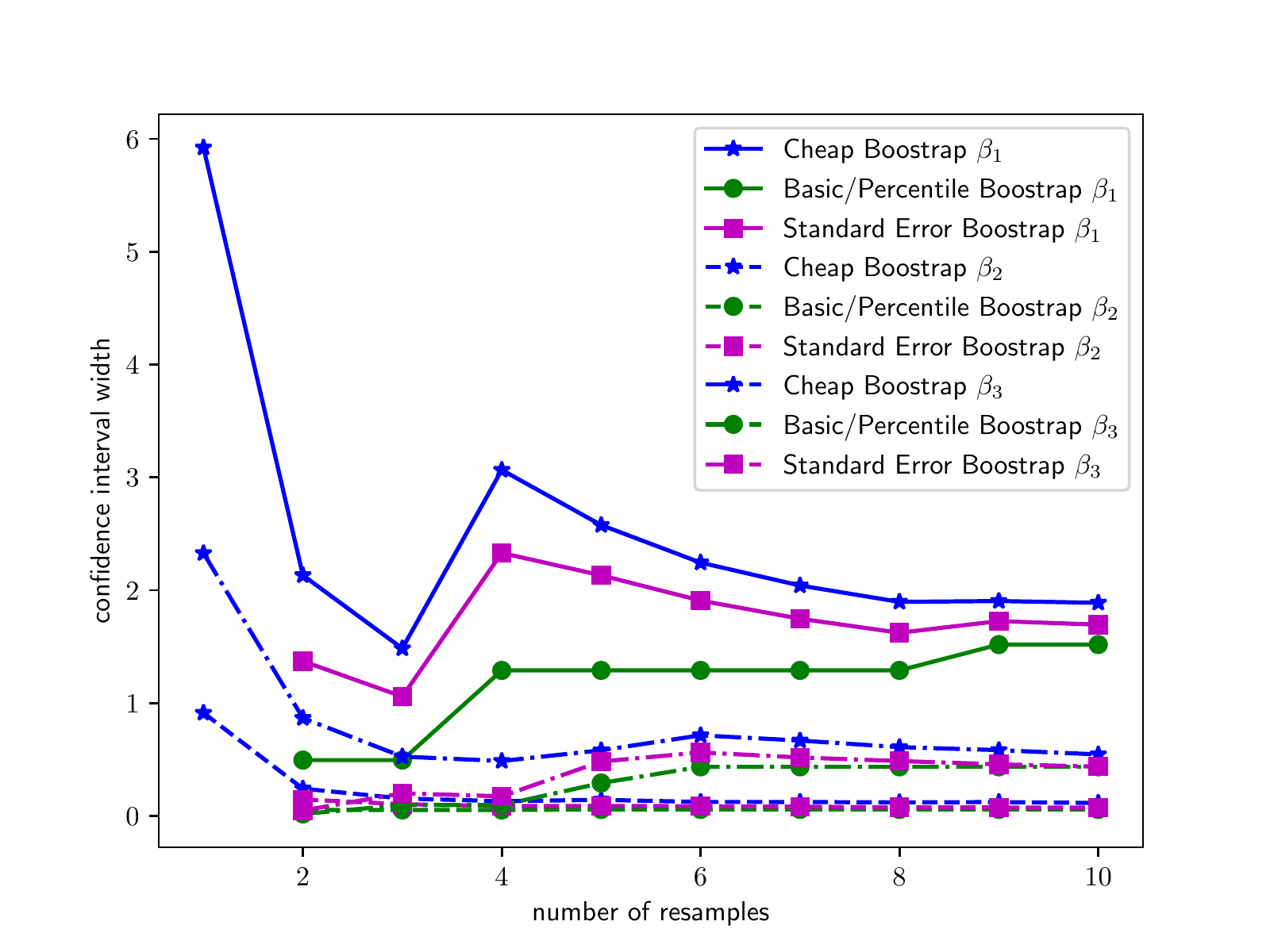}

\caption{Confidence interval widths of the first three $\beta$'s for different numbers of resamples in a real world logistic regression.}
\label{figure_real_l2}

\end{figure}

\subsection{Numerical Experiments with a Lower Nominal Level}\label{sec:lower_level}

In this section, we conduct a simulation study with the nominal level 70\% to further support the validity of the cheap bootstrap. We choose the ellipsoidal estimation and sinusoidal estimation presented in Section \ref{sec:num} as our model. All the settings are the same except that we use a different sample size $n=4\times10^4$ and a different dimension $p=9000$. Table \ref{alpha70_table} presents the empirical coverage and average interval width over $1000$ experimental repetitions. We can observe that the cheap bootstrap coverage probabilities are still close to the nominal level, and are the closest among all methods in all cases except two (sinusoidal $B=5$ and $B=10$) where percentile and standard error bootstraps each outperforms slightly. Regarding these exceptional cases, we note that the outperformance of the percentile bootstrap is likely a coincidence, because as a quantile-based method it cannot construct a symmetric 70\% level confidence interval from only 5 resamples. We have used the minimum and maximum of the 5 resamples to construct this percentile bootstrap confidence interval, whose actual nominal level should be close to 100\%. So it is likely by accident that percentile bootstrap coverage is closest to 70\% and in fact this coverage is far away from its actual nominal level around 100\%.

\begin{table}[ht]
\centering
\caption{Coverage probabilities (Pro.), confidence interval widths (Wid.) and running time (unit: second) of the numerical examples. The closest coverage probability to the nominal 70\% level among all methods in each setting is \textbf{bold}.}
\label{alpha70_table}
\medskip
\resizebox{\columnwidth}{!}{
\begin{tblr}{
  cells = {c},
  cell{1}{1} = {r=2}{},
  cell{1}{2} = {r=2}{},
  cell{1}{3} = {c=2}{},
  cell{1}{5} = {c=2}{},
  cell{1}{7} = {c=2}{},
  cell{1}{9} = {c=2}{},
  cell{1}{11} = {r=2}{},
  hline{1,11} = {-}{0.08em},
  hline{3,7} = {1-10}{0.03em},
  hline{3,7} = {11}{},
}
Example     & $B$ & Cheap Bootstrap &                      & Basic Bootstrap &                    & Percentile Bootstrap &                    & Standard Error Bootstrap &                    & Running Time \\
            &     & Pro.            & Wid.                 & Pro.            & Wid.               & Pro.                 & Wid.               & Pro.                     & Wid.               &              \\
~           & 1   & \textbf{73.0\%} & $9.828\times10^{-3}$   & N.A.            & N.A.               & N.A.                 & N.A.               & N.A.                     & N.A.               & 2            \\
Ellipsoidal & 2   & \textbf{70.0\%} & $7.568\times10^{-3}$   & 30.9\%          & $2.197\times10^{-3}$ & 7.7\%                & $2.197\times10^{-3}$ & 34.3\%                   & $3.220\times10^{-3}$ & 4            \\
estimation  & 5   & \textbf{68.5\%} & $6.639\times10^{-3}$ & 64.7\%          & $4.429\times10^{-3}$ & 16.5\%               & $4.429\times10^{-3}$ & 42.3\%                   & $3.717\times10^{-3}$ & 10           \\
            & 10  & \textbf{66.9\%} & $6.323\times10^{-3}$   & 60.1\%          & $3.756\times10^{-3}$ & 10.8\%               & $3.756\times10^{-3}$ & 41.8\%                   & $3.804\times10^{-3}$ & 20           \\
            & 1   & \textbf{70.3\%} & 0.148                & N.A.            & N.A.               & N.A.                 & N.A.               & N.A.                     & N.A.               & 2            \\
Sinusoidal  & 2   & \textbf{72.0\%} & 0.116                & 34.9\%          & 0.052              & 33.3\%               & 0.052              & 52.4\%                   & 0.077              & 4            \\
estimation  & 5   & 72.2\%          & 0.104                & 67.6\%          & 0.108              & \textbf{68.6\%}      & 0.108              & 65.6\%                   & 0.091              & 10           \\
            & 10  & 72.3\%          & 0.100                & 65.8\%          & 0.093              & 63.7\%               & 0.093              & \textbf{68.8\%}          & 0.094              & 19           
\end{tblr}
}
\end{table}

\subsection{Coverage Error Behavior with Respect to $B$ and $n$}\label{sec:trend}

In this section, we numerically study the cheap bootstrap coverage error behavior with respect to $B$ and $n$ and illustrate how it aligns with our theoretical bounds. We choose the model as the sinusoidal estimation in Section \ref{sec:num}. We fix the dimension $p=9000$ and vary $B$ and $n$. Figure \ref{colormap} displays the colormaps of the absolute values of empirical coverage errors (nominal level 95\%), where the $x$-axis represents $B$ and $y$-axis represents $n$. We cut the results of the basic and percentile bootstraps for the first few $B$'s because their errors are too large. From Figure \ref{colormap} (a), it appears that the cheap bootstrap coverage error does not change much in this regime of $n$. This matches to some extent Theorem \ref{CB_coverage_nonlinear} and Corollary \ref{concise_CB_coverage_nonlinear} that guarantee the coverage error would decrease as $n$ increases with a slow rate $1/\sqrt n$.  Further, when we fix $n$, the cheap bootstrap coverage error seems to lack clear trend and otherwise be quite stable as $B$ changes. On the other hand, the basic and percentile bootstrap coverage errors show a clear decreasing trend as $B$ increases. Their different behaviors are attributed to the different ideas behind them. The basic bootstrap and percentile bootstrap are quantile-based methods. As $B$ increases, the bootstrap quantile estimate is closer and closer to the true quantile, which leads to the improvement on the coverage error. However, the cheap bootstrap method relies on a totally different idea, i.e., it relies on approximate independence of the resamples from the original estimator and thus a $t$-distribution-based (with degree of freedom $B$) confidence interval can be constructed. Different $B$ just means a different pivotal $t$-distribution. There is no evident reason that the pivotal $t$-distribution with a larger degree of freedom $B$ will lead to a smaller coverage error.

\begin{figure}[ht]
\centering

\subfloat[Cheap bootstrap]{\includegraphics[width=0.45\textwidth]{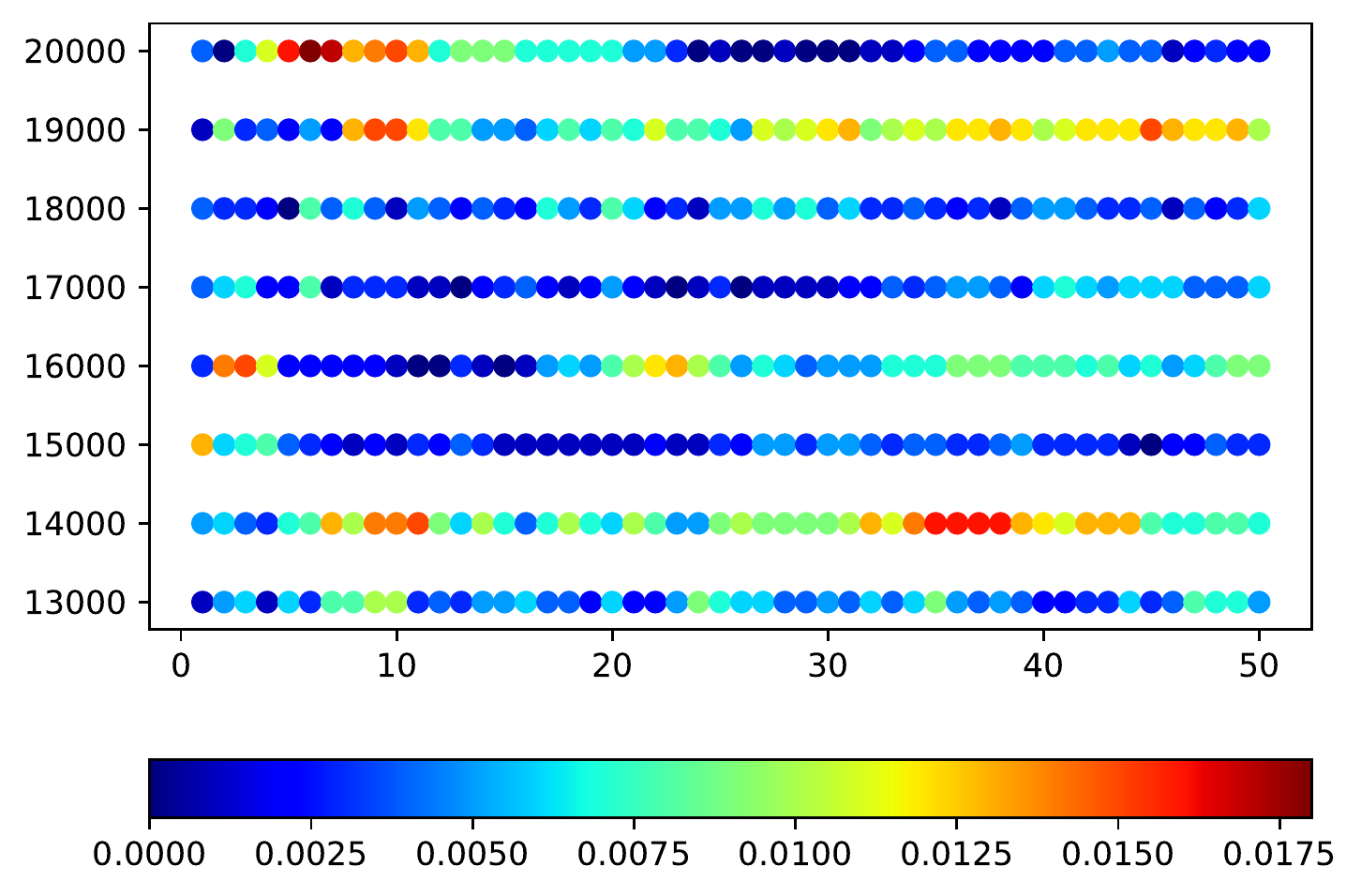}}\\
\subfloat[Basic bootstrap]{\includegraphics[width=0.45\textwidth]{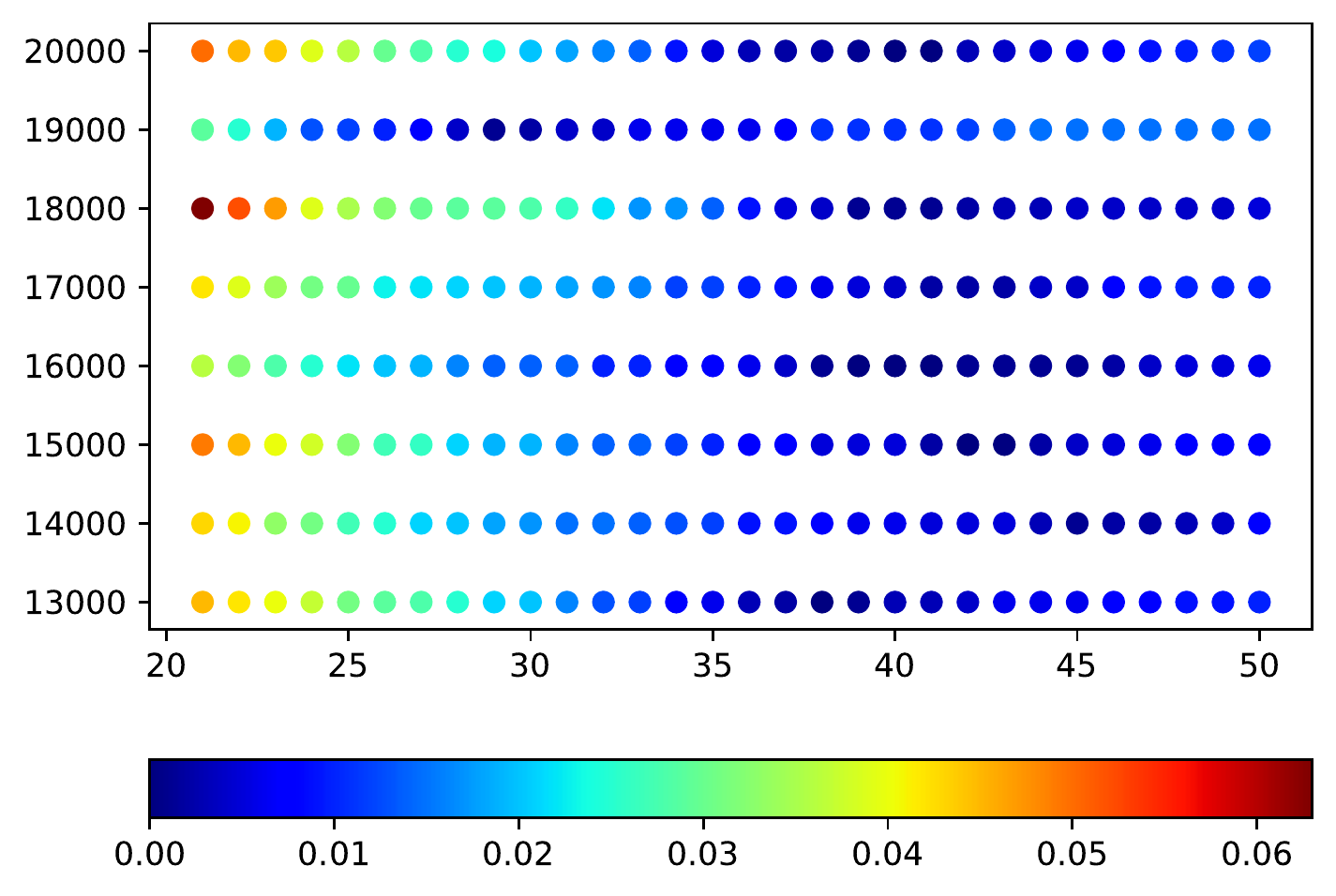}}
\subfloat[Percentile bootstrap]{\includegraphics[width=0.45\textwidth]{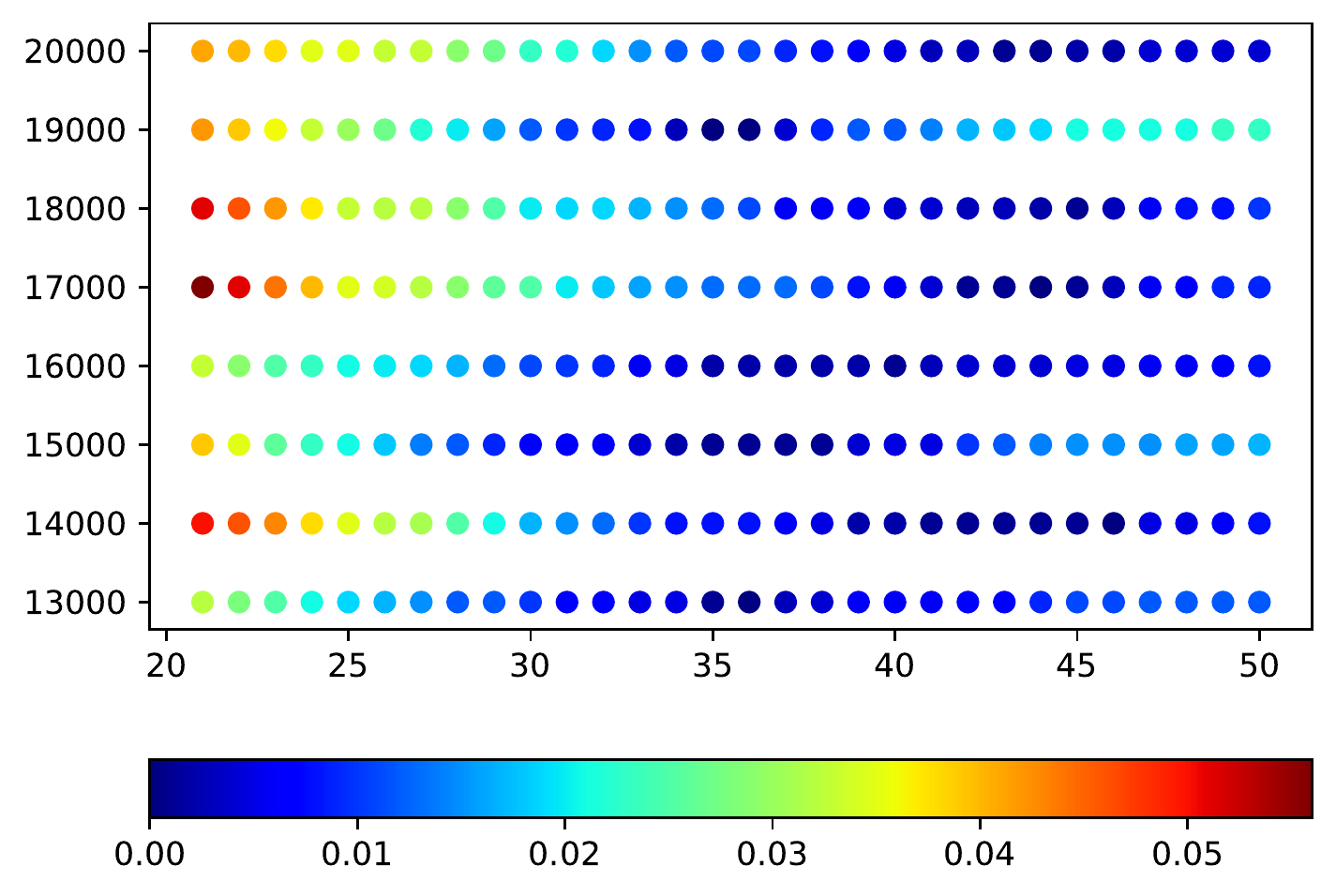}}

\caption{Colormaps of the absolute values of empirical coverage errors (nominal level 95\%) for the sinusoidal estimation.}%
\label{colormap}

\end{figure}

\section{Proofs}\label{sec:proofs}

\begin{proof}[Proof of Theorem \ref{general_CB_coverage}:]
We define $Q^{\ast}$ as the distribution of $\sqrt{n}(\hat{\psi}_{n}^{\ast}-\hat{\psi}_{n})$ conditional on $X_{1},\ldots,X_{n}$. With the repeated bootstrap resampling, we have that $\sqrt{n}(\hat{\psi}_{n}^{\ast b}-\hat{\psi}_{n}),b=1,\ldots,B$ are independent conditional on $X_{1},\ldots,X_{n}$. Then we can write the coverage probability as%
\begin{align*}
& P(\psi\in\lbrack\hat{\psi}_{n}-t_{B,1-\alpha/2}S_{n,B},\hat{\psi}_{n}+t_{B,1-\alpha/2}S_{n,B}])\\
& =P\left(  \left\vert \frac{\sqrt{n}(\hat{\psi}_{n}-\psi)}{\sqrt{\frac{1}{B}\sum_{b=1}^{B}(\sqrt{n}(\hat{\psi}_{n}^{\ast b}-\hat{\psi}_{n}))^{2}}}\right\vert \leq t_{B,1-\alpha/2}\right)  \\
& =E\left[  \int\cdots\int_{|\sqrt{n}(\hat{\psi}_{n}-\psi)|\leq t_{B,1-\alpha/2}\sqrt{\frac{1}{B}\sum_{b=1}^{B}z_{b}^{2}}}dQ^{\ast}(z_{B})\cdots dQ^{\ast}(z_{1})\right]
\end{align*}
where the expectation $E$ is taken with respect to $X_{1},\ldots,X_{n}$. If we write $\mathcal{A}$ as the event that%
\begin{equation}
\sup_{x\in\mathbb{R}}\left\vert Q^{\ast}((-\infty,x])-\Phi\left(  \frac{x}{\sigma}\right)  \right\vert \equiv\sup_{x\in\mathbb{R}}\left\vert P^{\ast}(\sqrt{n}(\hat{\psi}_{n}^{\ast}-\hat{\psi}_{n})\leq x)-\Phi\left(  \frac{x}{\sigma}\right)  \right\vert \leq\mathcal{E}_{2},\label{bootstrap_CLT_bound1}%
\end{equation}
then we know that $P(\mathcal{A}^{c})\leq\beta$. We consider the coverage probability intersected with $\mathcal{A}$, i.e.,
\begin{equation}
E\left[  \int\cdots\int_{|\sqrt{n}(\hat{\psi}_{n}-\psi)|\leq t_{B,1-\alpha/2}\sqrt{\frac{1}{B}\sum_{b=1}^{B}z_{b}^{2}}}dQ^{\ast}(z_{B})\cdots dQ^{\ast}(z_{1});\mathcal{A}\right]  .\label{coverage_probability_underE}%
\end{equation}
Note that conditional on $X_{1},\ldots,X_{n}$ and given $z_{1},\ldots,z_{B-1}$, the integral region for $z_{B}$ can be written as%
\[
\left\{  z_{B}:|\sqrt{n}(\hat{\psi}_{n}-\psi)|\leq t_{B,1-\alpha/2}\sqrt{\frac{1}{B}\sum_{b=1}^{B}z_{b}^{2}}\right\}  =(-\infty,-q]\cup\lbrack q,\infty)
\]
for some $q\geq0$. Therefore, applying (\ref{bootstrap_CLT_bound1}), we have%
\[
\left\vert \int_{|\sqrt{n}(\hat{\psi}_{n}-\psi)|\leq t_{B,1-\alpha/2}\sqrt{\frac{1}{B}\sum_{b=1}^{B}z_{b}^{2}}}dQ^{\ast}(z_{B})-\int_{|\sqrt{n}(\hat{\psi}_{n}-\psi)|\leq t_{B,1-\alpha/2}\sqrt{\frac{1}{B}\sum_{b=1}^{B}z_{b}^{2}}}dP_{0}(z_{B})\right\vert \leq2\mathcal{E}_{2},
\]
where $P_{0}$ is the distribution of $N(0,\sigma^{2})$. Plugging it into (\ref{coverage_probability_underE}), we have%
\begin{align*}
&  E\left[  \int\cdots\int_{|\sqrt{n}(g(\bar{X}_{n})-g(\mu))|\leq t_{B,1-\alpha/2}\sqrt{\frac{1}{B}\sum_{b=1}^{B}z_{b}^{2}}}dQ^{\ast}(z_{B})\cdots dQ^{\ast}(z_{1});\mathcal{A}\right]  \\
&  =E\left[  \int\cdots\int\int_{|\sqrt{n}(\hat{\psi}_{n}-\psi)|\leq t_{B,1-\alpha/2}\sqrt{\frac{1}{B}\sum_{b=1}^{B}z_{b}^{2}}}dP_{0}(z_{B})dQ^{\ast}(z_{B-1})\cdots dQ^{\ast}(z_{1});\mathcal{A}\right]  +R_{B},
\end{align*}
where the error $R_{B}$ satisfies%
\[
|R_{B}|\leq E\left[  \int\cdots\int2\mathcal{E}_{2}dQ^{\ast}(z_{B-1})\cdots dQ^{\ast}(z_{1});\mathcal{A}\right]  \leq2\mathcal{E}_{2}%
\]
By the same argument, we can further replace the remaining $Q^{\ast}(z_{i})$'s by $P_{0}(z_{i})$'s and obtain%
\begin{align*}
&  E\left[  \int\cdots\int_{|\sqrt{n}(\hat{\psi}_{n}-\psi)|\leq t_{B,1-\alpha/2}\sqrt{\frac{1}{B}\sum_{b=1}^{B}z_{b}^{2}}}dQ^{\ast}(z_{B})\cdots dQ^{\ast}(z_{1});\mathcal{A}\right]  \\
&  =E\left[  \int\cdots\int_{|\sqrt{n}(\hat{\psi}_{n}-\psi)|\leq t_{B,1-\alpha/2}\sqrt{\frac{1}{B}\sum_{b=1}^{B}z_{b}^{2}}}dP_{0}(z_{B})\cdots dP_{0}(z_{1});\mathcal{A}\right]  +\sum_{b=1}^{B}R_{b},
\end{align*}
where each error $R_{b}$ satisfies%
\[
|R_{b}|\leq2\mathcal{E}_{2}.
\]
Therefore, the coverage probability satisfies%
\begin{align}
&  E\left[  \int\cdots\int_{|\sqrt{n}(\hat{\psi}_{n}-\psi)|\leq t_{B,1-\alpha/2}\sqrt{\frac{1}{B}\sum_{b=1}^{B}z_{b}^{2}}}dQ^{\ast}(z_{B})\cdots dQ^{\ast}(z_{1})\right]  \nonumber\\
&  =E\left[  \int\cdots\int_{|\sqrt{n}(\hat{\psi}_{n}-\psi)|\leq t_{B,1-\alpha/2}\sqrt{\frac{1}{B}\sum_{b=1}^{B}z_{b}^{2}}}dQ^{\ast}(z_{B})\cdots dQ^{\ast}(z_{1});\mathcal{A}\right]  \nonumber\\
&  +E\left[  \int\cdots\int_{|\sqrt{n}(\hat{\psi}_{n}-\psi)|\leq t_{B,1-\alpha/2}\sqrt{\frac{1}{B}\sum_{b=1}^{B}z_{b}^{2}}}dQ^{\ast}(z_{B})\cdots dQ^{\ast}(z_{1});\mathcal{A}^{c}\right]  \nonumber\\
&  =E\left[  \int\cdots\int_{|\sqrt{n}(\hat{\psi}_{n}-\psi)|\leq t_{B,1-\alpha/2}\sqrt{\frac{1}{B}\sum_{b=1}^{B}z_{b}^{2}}}dP_{0}(z_{B})\cdots dP_{0}(z_{1});\mathcal{A}\right]  +\sum_{b=1}^{B}R_{b}\nonumber\\
&  +E\left[  \int\cdots\int_{|\sqrt{n}(\hat{\psi}_{n}-\psi)|\leq t_{B,1-\alpha/2}\sqrt{\frac{1}{B}\sum_{b=1}^{B}z_{b}^{2}}}dQ^{\ast}(z_{B})\cdots dQ^{\ast}(z_{1});\mathcal{A}^{c}\right]  \nonumber\\
&  =E\left[  \int\cdots\int_{|\sqrt{n}(\hat{\psi}_{n}-\psi)|\leq t_{B,1-\alpha/2}\sqrt{\frac{1}{B}\sum_{b=1}^{B}z_{b}^{2}}}dP_{0}(z_{B})\cdots dP_{0}(z_{1})\right]  +R_{\mathcal{A}^{c}}+\sum_{b=1}^{B}R_{b},\label{coverage_error1}%
\end{align}
where the additional error $R_{\mathcal{A}^{c}}$ is given by%
\begin{align*}
R_{\mathcal{A}^{c}} &  =E\left[  \int\cdots\int_{|\sqrt{n}(\hat{\psi}_{n}-\psi)|\leq t_{B,1-\alpha/2}\sqrt{\frac{1}{B}\sum_{b=1}^{B}z_{b}^{2}}}dQ^{\ast}(z_{B})\cdots dQ^{\ast}(z_{1});\mathcal{A}^{c}\right]  \\
&  -E\left[  \int\cdots\int_{|\sqrt{n}(\hat{\psi}_{n}-\psi)|\leq t_{B,1-\alpha/2}\sqrt{\frac{1}{B}\sum_{b=1}^{B}z_{b}^{2}}}dP_{0}(z_{B})\cdots dP_{0}(z_{1});\mathcal{A}^{c}\right]
\end{align*}
and it satisfies $|R_{\mathcal{A}^{c}}|\leq P(\mathcal{A}^{c})\leq\beta$. Now we will handle the distribution of $\sqrt{n}(\hat{\psi}_{n}-\psi)$ which is denoted by $Q_{0}$. Note that by Fubini's theorem we have
\begin{align*}
&  E\left[  \int\cdots\int_{|\sqrt{n}(\hat{\psi}_{n}-\psi)|\leq t_{B,1-\alpha/2}\sqrt{\frac{1}{B}\sum_{b=1}^{B}z_{b}^{2}}}dP_{0}(z_{B})\cdots dP_{0}(z_{1})\right]  \\
&  =\int\cdots\int_{|z_{0}|\leq t_{B,1-\alpha/2}\sqrt{\frac{1}{B}\sum_{b=1}^{B}z_{b}^{2}}}dQ_{0}(z_{0})dP_{0}(z_{B})\cdots dP_{0}(z_{1}).
\end{align*}
Given $z_{1},\ldots,z_{B}$, consider the innermost integral with respect to $Q_{0}$. By the finite-sample accuracy for $Q_{0}$, i.e.,
\[
\sup_{x\in\mathbb{R}}\left\vert P(\sqrt{n}(\hat{\psi}_{n}-\psi)\leq x)-\Phi\left(  \frac{x}{\sigma}\right)  \right\vert \equiv\sup_{x\in\mathbb{R}}\left\vert Q_{0}((-\infty,x])-P_{0}((-\infty,x])\right\vert\leq\mathcal{E}_{1},
\]
we have
\[
\left\vert \int_{|z_{0}|\leq t_{B,1-\alpha/2}\sqrt{\frac{1}{B}\sum_{b=1}^{B}z_{b}^{2}}}dQ_{0}(z_{0})-\int_{|z_{0}|\leq t_{B,1-\alpha/2}\sqrt{\frac{1}{B}\sum_{b=1}^{B}z_{b}^{2}}}dP_{0}(z_{0})\right\vert \leq2\mathcal{E}_{1}.
\]
Therefore,
\begin{align}
&  E\left[  \int\cdots\int_{|\sqrt{n}(\hat{\psi}_{n}-\psi)|\leq t_{B,1-\alpha/2}\sqrt{\frac{1}{B}\sum_{b=1}^{B}z_{b}^{2}}}dP_{0}(z_{B})\cdots dP_{0}(z_{1})\right]  \nonumber\\
&  =\int\cdots\int_{|z_{0}|\leq t_{B,1-\alpha/2}\sqrt{\frac{1}{B}\sum_{b=1}^{B}z_{b}^{2}}}dP_{0}(z_{0})dP_{0}(z_{B})\cdots dP_{0}(z_{1})+R_{0}\nonumber\\
&  =1-\alpha+R_{0},\label{coverage_error2}%
\end{align}
where the second equality follows from%
\[
\frac{Z_{0}}{\sqrt{\frac{1}{B}\sum_{b=1}^{B}Z_{b}^{2}}}\overset{d}{=}t_{B}%
\]
for i.i.d. $Z_{i}\sim N(0,\sigma^{2}),i=0,\ldots,B$ and the error $R_{0}$ satisfies%
\[
|R_{0}|\leq2\mathcal{E}_{1}.
\]
Plugging (\ref{coverage_error2}) into (\ref{coverage_error1}), we have%
\begin{align*}
&  E\left[  \int\cdots\int_{|\sqrt{n}(\hat{\psi}_{n}-\psi)|\leq t_{B,1-\alpha/2}\sqrt{\frac{1}{B}\sum_{b=1}^{B}z_{b}^{2}}}dQ^{\ast}(z_{B})\cdots dQ^{\ast}(z_{1})\right]  \\
&  =1-\alpha+R_{\mathcal{A}^{c}}+\sum_{b=0}^{B}R_{b}:=1-\alpha+R,
\end{align*}
where the overall error satisfies%
\[
|R|\leq2\mathcal{E}_{1}+2B\mathcal{E}_{2}+\beta.
\]

\end{proof}

\begin{proof}[Proof of Theorem \ref{general_other_coverages}:]
Recall that for a cumulative distribution function $F$ of a random variable, the $q$-th quantile is defined as $F^{-1}(q)=\inf\{x:F(x)\geq q\}$. We first prove a useful result: if the cumulative distribution functions of two random variables $X$ and $Y$ satisfy%
\begin{equation}
\sup_{t\in\mathbb{R}}|F_{X}(t)-F_{Y}(t)|\leq\varepsilon,\label{Fx_Fy_difference}%
\end{equation}
then for any $\alpha\in\lbrack0,1]$,
\begin{equation}
F_{Y}^{-1}(\alpha-\varepsilon)\leq F_{X}^{-1}(\alpha)\leq F_{Y}^{-1}(\alpha+\varepsilon). \label{general_quantile_inequality}%
\end{equation}
To prove it, we note that if $\alpha-\varepsilon<0$ then $-\infty=F_{Y}^{-1}(\alpha-\varepsilon)\leq F_{X}^{-1}(\alpha)$ trivially holds and if $\alpha+\varepsilon>1$ then $F_{X}^{-1}(\alpha)\leq F_{Y}^{-1}(\alpha+\varepsilon)=\infty$ trivially holds. So we assume $0\leq\alpha-\varepsilon\leq\alpha+\varepsilon\leq1$. Now let's prove the first inequality $F_{Y}^{-1}(\alpha-\varepsilon)\leq F_{X}^{-1}(\alpha)$. By the definition of $F_{X}^{-1}$ and right-continuity of $F_X$, we know that $F_{X}(F_{X}^{-1}(\alpha))\geq\alpha$. Therefore, by (\ref{Fx_Fy_difference}), we know that $F_{Y}(F_{X}^{-1}(\alpha))\geq \alpha-\varepsilon$, which implies $F_{Y}^{-1}(\alpha-\varepsilon)\leq F_{X}^{-1}(\alpha)$ by the definition of $F_{Y}^{-1}(\alpha-\varepsilon)$. This proves the first inequality in (\ref{general_quantile_inequality}). Interchanging the role of $X$ and $Y$, we have $F_{X}^{-1}(\alpha-\varepsilon)\leq F_{Y}^{-1}(\alpha)$. Replacing $\alpha$ by $\alpha+\varepsilon$, we obtain the second inequality $F_{X}^{-1}(\alpha)\leq F_{Y}^{-1}(\alpha+\varepsilon)$.

Now we consider the basic bootstrap. We write $\mathcal{A}$ as the event%
\[
\sup_{x\in\mathbb{R}}\left\vert P^{\ast}(\sqrt{n}(\hat{\psi}_{n}^{\ast}-\hat{\psi}_{n})\leq x)-\Phi\left(  \frac{x}{\sigma}\right)  \right\vert\leq\mathcal{E}_{2}.
\]
By our assumption, we have $P(\mathcal{A}^{c})\leq\beta$. Note that if $q_{\alpha/2}$ and $q_{1-\alpha/2}$ are the $\alpha/2$-th and $(1-\alpha/2)$-th quantiles of $\hat{\psi}_{n}^{\ast}-\hat{\psi}_{n}$ given $X_{1},\ldots,X_{n}$, then $\sqrt{n}q_{\alpha/2}$ and $\sqrt{n}q_{1-\alpha/2}$ are the $\alpha/2$-th and $(1-\alpha/2)$-th quantiles of $\sqrt{n}(\hat{\psi}_{n}^{\ast}-\hat{\psi}_{n})$ respectively given $X_{1},\ldots,X_{n}$. By the inequality (\ref{general_quantile_inequality}), when $\mathcal{A}$ happens, we have%
\[
\sigma z_{\alpha/2-\mathcal{E}_{2}}\leq\sqrt{n}q_{\alpha/2}\leq\sigma z_{\alpha/2+\mathcal{E}_{2}},
\]
where $z_{q}$ is the $q$-th quantile of the standard normal. This inequality implies%
\begin{equation}
P(\sqrt{n}(\hat{\psi}_{n}-\psi)<\sigma z_{\alpha/2-\mathcal{E}_{2}};\mathcal{A})\leq P(\sqrt{n}(\hat{\psi}_{n}-\psi)<\sqrt{n}q_{\alpha/2};\mathcal{A})\label{lower_bound}%
\end{equation}
and%
\begin{equation}
P(\sqrt{n}(\hat{\psi}_{n}-\psi)<\sqrt{n}q_{\alpha/2};\mathcal{A})\leq P(\sqrt{n}(\hat{\psi}_{n}-\psi)<\sigma z_{\alpha/2+\mathcal{E}_{2}};\mathcal{A}).\label{upper_bound}%
\end{equation}
Next, we notice that%
\begin{align*}
&  \sup_{x\in\mathbb{R}}\left\vert P(\sqrt{n}(\hat{\psi}_{n}-\psi)\leq x)-\Phi\left(  \frac{x}{\sigma}\right)  \right\vert \leq\mathcal{E}_{1}\\
\Leftrightarrow &  \sup_{x\in\mathbb{R}}\left\vert P(\sqrt{n}(\hat{\psi}_{n}-\psi)<x)-\Phi\left(  \frac{x}{\sigma}\right)  \right\vert \leq\mathcal{E}_{1}.
\end{align*}
Thus, (\ref{lower_bound}) implies that%
\begin{align*}
&  P(\sqrt{n}(\hat{\psi}_{n}-\psi)<\sqrt{n}q_{\alpha/2};\mathcal{A})\\
&  \geq P(\sqrt{n}(\hat{\psi}_{n}-\psi)<\sigma z_{\alpha/2-\mathcal{E}_{2}};\mathcal{A})\\
&  \geq P(\sqrt{n}(\hat{\psi}_{n}-\psi)<\sigma z_{\alpha/2-\mathcal{E}_{2}})-P(\mathcal{A}^{c})\\
&  \geq\Phi\left(  \frac{\sigma z_{\alpha/2-\mathcal{E}_{2}}}{\sigma}\right)-\mathcal{E}_{1}-\beta\\
&  =\frac{\alpha}{2}-\mathcal{E}_{1}-\mathcal{E}_{2}-\beta.
\end{align*}
Similarly, (\ref{upper_bound}) implies that%
\[
P(\sqrt{n}(\hat{\psi}_{n}-\psi)<\sqrt{n}q_{\alpha/2};\mathcal{A})\leq P(\sqrt{n}(\hat{\psi}_{n}-\psi)<\sigma z_{\alpha/2+\mathcal{E}_{2}})\leq\frac{\alpha}{2}+\mathcal{E}_{1}+\mathcal{E}_{2}.
\]
Therefore, we have the following two-sided bound%
\[
\frac{\alpha}{2}-\mathcal{E}_{1}-\mathcal{E}_{2}-\beta\leq P(\sqrt{n}(\hat{\psi}_{n}-\psi)<\sqrt{n}q_{\alpha/2};\mathcal{A})\leq\frac{\alpha}{2}+\mathcal{E}_{1}+\mathcal{E}_{2}.
\]
For the $(1-\alpha/2)$-th quantile, we can also derive a similar bound%
\[
1-\frac{\alpha}{2}-\mathcal{E}_{1}-\mathcal{E}_{2}-\beta\leq P(\sqrt{n}(\hat{\psi}_{n}-\psi)\leq\sqrt{n}q_{1-\alpha/2};\mathcal{A})\leq1-\frac{\alpha}{2}+\mathcal{E}_{1}+\mathcal{E}_{2}.
\]
So we have
\[
|P(\sqrt{n}q_{\alpha/2}\leq\sqrt{n}(\hat{\psi}_{n}-\psi)\leq\sqrt{n}q_{1-\alpha/2};\mathcal{A})-(1-\alpha)|\leq2\mathcal{E}_{1}+2\mathcal{E}_{2}+\beta,
\]
which gives rise to%
\begin{align*}
&  |P(\sqrt{n}q_{\alpha/2}\leq\sqrt{n}(\hat{\psi}_{n}-\psi)\leq\sqrt{n}q_{1-\alpha/2})-(1-\alpha)|\\
&  \leq|P(\sqrt{n}q_{\alpha/2}\leq\sqrt{n}(\hat{\psi}_{n}-\psi)\leq\sqrt{n}q_{1-\alpha/2};\mathcal{A})-(1-\alpha)|\\
&  +P(\sqrt{n}q_{\alpha/2}\leq\sqrt{n}(\hat{\psi}_{n}-\psi)\leq\sqrt{n}q_{1-\alpha/2};\mathcal{A}^{c})\\
&  \leq2\mathcal{E}_{1}+2\mathcal{E}_{2}+\beta+P(\mathcal{A}^{c}\mathcal{)}\\
&  \leq2\mathcal{E}_{1}+2\mathcal{E}_{2}+2\beta,
\end{align*}
or equivalently%
\[
|P(\hat{\psi}_{n}-q_{1-\alpha/2}\leq\psi\leq\hat{\psi}_{n}-q_{\alpha/2})-(1-\alpha)|\leq2\mathcal{E}_{1}+2\mathcal{E}_{2}+2\beta.
\]

The result for the percentile bootstrap follows similarly but we need to use the symmetry of $N(0,\sigma^{2})$. Note that%
\begin{align*}
&  \sup_{x\in\mathbb{R}}\left\vert P(\sqrt{n}(\hat{\psi}_{n}-\psi)\leq x)-\Phi\left(  \frac{x}{\sigma}\right)  \right\vert \leq\mathcal{E}_{1}\\
\Leftrightarrow &  \sup_{x\in\mathbb{R}}\left\vert P(\sqrt{n}(\hat{\psi}_{n}-\psi)<x)-\Phi\left(  \frac{x}{\sigma}\right)  \right\vert \leq\mathcal{E}_{1}%
\end{align*}
and the latter can be rewritten as%
\begin{align*}
&  \sup_{x\in\mathbb{R}}\left\vert P(\sqrt{n}(\hat{\psi}_{n}-\psi)<x)-\Phi\left(  \frac{x}{\sigma}\right)  \right\vert \\
= &  \sup_{x\in\mathbb{R}}\left\vert P(\sqrt{n}(\hat{\psi}_{n}-\psi)<-x)-\Phi\left(  \frac{-x}{\sigma}\right)  \right\vert \\
= &  \sup_{x\in\mathbb{R}}\left\vert P(\sqrt{n}(\psi-\hat{\psi}_{n})>x)-\Phi\left(  \frac{-x}{\sigma}\right)  \right\vert \\
= &  \sup_{x\in\mathbb{R}}\left\vert (1-P(\sqrt{n}(\psi-\hat{\psi}_{n})>x))-\left(  1-\Phi\left(  \frac{-x}{\sigma}\right)  \right)  \right\vert\\
= &  \sup_{x\in\mathbb{R}}\left\vert P(\sqrt{n}(\psi-\hat{\psi}_{n})\leq x)-\Phi\left(  \frac{x}{\sigma}\right)  \right\vert \leq\mathcal{E}_{1},
\end{align*}
where the last equality uses the symmetry of $N(0,\sigma^{2})$, i.e., $1-\Phi(-x/\sigma)=\Phi(x/\sigma)$. Moreover, if $q_{\alpha/2}$ and $q_{1-\alpha/2}$ are the $\alpha/2$-th and $(1-\alpha/2)$-th quantiles of $\hat{\psi}_{n}^{\ast}$ given $X_{1},\ldots,X_{n}$, then $\sqrt{n}(q_{\alpha/2}-\hat{\psi}_{n})$ and $\sqrt{n}(q_{1-\alpha/2}-\hat{\psi}_{n})$ are the $\alpha/2$-th and $(1-\alpha/2)$-th quantiles of $\sqrt{n}(\hat{\psi}_{n}^{\ast}-\hat{\psi}_{n})$ given $X_{1},\ldots,X_{n}$. Therefore, the proof for the basic bootstrap also applies if we replace $\sqrt{n}q_{\alpha/2}$, $\sqrt{n}q_{1-\alpha/2}$ and $\sqrt{n}(\hat{\psi}_{n}-\psi)$ in that proof by $\sqrt{n}(q_{\alpha/2}-\hat{\psi}_{n})$, $\sqrt{n}(q_{1-\alpha/2}-\hat{\psi}_{n})$ and $\sqrt{n}(\psi-\hat{\psi}_{n})$ respectively. In particular, the final result now reads as follows%
\[
|P(\sqrt{n}(q_{\alpha/2}-\hat{\psi}_{n})\leq\sqrt{n}(\psi-\hat{\psi}_{n})\leq\sqrt{n}(q_{1-\alpha/2}-\hat{\psi}_{n}))-(1-\alpha)|\leq2\mathcal{E}_{1}+2\mathcal{E}_{2}+2\beta
\]
or equivalently%
\[
|P(q_{\alpha/2}\leq\psi\leq q_{1-\alpha/2})-(1-\alpha)|\leq2\mathcal{E}_{1}+2\mathcal{E}_{2}+2\beta.
\]
This completes our proof.
\end{proof}

To prove Theorem \ref{CB_coverage_nonlinear}, we first need to prove two lemmas regarding (\ref{general_Berry_Esseen}) and
(\ref{general_bootstrap_CLT}) respectively.

The following lemma is from Theorem 2.11 in \citet{pinelis2016optimal} which establishes the Berry-Esseen theorem in the multivariate delta method in the form of (\ref{general_Berry_Esseen}).

\begin{lemma}
\label{berry_esseen_delta_method}Suppose that $X_{1},\ldots,X_{n}$ are i.i.d. random vectors in $\mathbb{R}^{p}$ satisfying $E[X]=\mu$, $Var(X)=\Sigma$, $m_{31}:=E[|\nabla g(\mu)^{\top}(X-\mu)|^{3}]<\infty$ and $m_{32}:=E[||X-\mu||^{3}]<\infty$. Suppose $g(x)$ satisfies Assumption \ref{smoothness_assumption} and $\sigma^{2}:=\nabla g(\mu)^{\top}\Sigma\nabla g(\mu)>0$. Then there is a universal constant $C>0$ s.t.%
\begin{align*}
&  \sup_{x\in\mathbb{R}}\left\vert P(\sqrt{n}(g(\bar{X}_{n})-g(\mu))\leq x)-\Phi\left(  \frac{x}{\sigma}\right)  \right\vert \\
&  \leq C\left(  \frac{m_{31}}{\sqrt{n}\sigma^{3}}+\frac{C_{H_{g}}m_{31}^{1/3}tr(\Sigma)}{\sqrt{n}\sigma^{2}}+\frac{C_{H_{g}}m_{32}^{2/3}}{n^{5/6}\sigma}+\frac{C_{H_{g}}m_{31}^{1/3}m_{32}^{2/3}}{n\sigma^{2}}\right).
\end{align*}

\end{lemma}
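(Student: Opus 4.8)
The plan is to derive Lemma~\ref{berry_esseen_delta_method} directly from the multivariate delta-method Berry--Esseen bound of \citet{pinelis2016optimal} (their Theorem~2.11), so the proof is essentially a verification of hypotheses followed by a translation of notation and a collection of constants. First I would record their theorem in the form convenient here: for i.i.d.\ mean-$\mu$ vectors with covariance $\Sigma$ and a twice-differentiable $h$ whose gradient is Lipschitz in the operator sense, writing $\sigma_h^2=\nabla h(\mu)^\top\Sigma\nabla h(\mu)>0$, the Kolmogorov distance between the law of $\sqrt n\,(h(\bar X_n)-h(\mu))/\sigma_h$ and $\Phi$ is at most a universal constant times the sum of (i) the classical one-dimensional Berry--Esseen ratio of the scalar variable $\nabla h(\mu)^\top(X-\mu)$, and (ii) several curvature-weighted remainder terms built from moments of $\|X-\mu\|$ and the Lipschitz constant of $\nabla h$.

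Next I would check that our hypotheses match theirs. Assumption~\ref{smoothness_assumption} says exactly that $\sup_x\|H_g(x)\|\le C_{H_g}$, which (since $H_g$ is symmetric) is the operator Lipschitz constant of $\nabla g$; the finiteness of $m_{31}$ and $m_{32}$ supplies the moments their remainder terms require; and $\sigma^2>0$ is assumed. Setting $h\equiv g$ and $\sigma_h\equiv\sigma$, term (i) becomes $m_{31}/(\sqrt n\,\sigma^3)$. For term (ii), substituting $C_{H_g}$ for the Lipschitz constant, using $E\|\bar X_n-\mu\|^2=tr(\Sigma)/n$ for the expected squared fluctuation, and bounding the $E\|\bar X_n-\mu\|^3$-type quantities by the appropriate powers of $m_{32}$ (and, in the mixed term, $m_{31}$) and of $n$, the remainder collapses to $C_{H_g}m_{31}^{1/3}tr(\Sigma)/(\sqrt n\,\sigma^2)$, $C_{H_g}m_{32}^{2/3}/(n^{5/6}\sigma)$ and $C_{H_g}m_{31}^{1/3}m_{32}^{2/3}/(n\sigma^2)$. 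Absorbing all absolute constants into a single $C$ yields the stated inequality.

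The only step requiring genuine care is this last bookkeeping: matching the way \citet{pinelis2016optimal} write the second-order Taylor remainder --- a smoothing argument that trades the bounded density of the linear part against moments of the quadratic part, which is precisely where the fractional powers $1/3$ and $2/3$ arise --- against the normalization by $\sigma$ used here. I expect no genuine obstacle beyond this transcription: the argument is purely mechanical and uses no probabilistic input beyond Theorem~2.11 itself. In particular, the sub-Gaussianity from Assumption~\ref{nonlinear_assumption} plays no role at this stage; it is invoked only later, when $m_{31}$, $m_{32}$ and $\sigma$ are themselves bounded in terms of $p$ and $n$ inside the proof of Theorem~\ref{CB_coverage_nonlinear}.
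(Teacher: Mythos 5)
Your proposal follows essentially the same route as the paper: invoke Theorem~2.11 of \citet{pinelis2016optimal} with the bounded-Hessian condition supplying their second-order Taylor hypothesis (their $M_\epsilon=C_{H_g}$), then translate their $\mathfrak{K}$-constants into the four stated terms. The only bookkeeping detail you gloss over is that absorbing the pure-constant terms into the moment-dependent ones uses H\"older's inequality $m_{31}\geq\sigma^{3}$, which the paper makes explicit.
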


\begin{proof}[Proof of Lemma \ref{berry_esseen_delta_method}:]
Define $f(x)=g(x+\mu)-g(\mu)$ and its linearization $L(x)=\nabla g(\mu)^{\top}x$. Then by the second order Taylor expansion of $f(x)$ and boundedness property of $H_{g}$ in Assumption \ref{smoothness_assumption}, we can see (2.1) in \citet{pinelis2016optimal} holds for $M_{\epsilon}=C_{H_{g}}$ and any $\epsilon>0$. By Theorem 2.11 in \citet{pinelis2016optimal} with $V=X-\mu$, $c_{\ast}=1/2$ and $\epsilon\rightarrow\infty$, we have%
\begin{align}
&  \sup_{x\in\mathbb{R}}\left\vert P(\sqrt{n}(g(\bar{X}_{n})-g(\mu))\leq x)-\Phi\left(  \frac{x}{\sigma}\right)  \right\vert \nonumber\\
&  \leq\frac{\mathfrak{K}_{0}+\mathfrak{K}_{1}m_{31}/\sigma^{3}}{\sqrt{n}}+\frac{\mathfrak{K}_{20}+\mathfrak{K}_{21}m_{31}^{1/3}/\sigma}{\sqrt{n}}tr(\Sigma)+\frac{\mathfrak{K}_{30}+\mathfrak{K}_{31}m_{31}^{1/3}/\sigma}{\sqrt{n}}m_{32}^{2/3}, \label{explicit_berry_esseen}%
\end{align}
where the additional term $\mathfrak{K}_{\epsilon}$ in Theorem 2.11 vanishes as $\epsilon\rightarrow\infty$. By the definition of these $\mathfrak{K}$'s with $c_{\ast}=1/2$ in (2.30) in \citet{pinelis2016optimal}, we can see there is a universal constant $C>0$ s.t.%
\[
\mathfrak{K}_{0}\leq C,\mathfrak{K}_{1}\leq C,\mathfrak{K}_{20}\leq C\frac{C_{H_{g}}}{\sigma},\mathfrak{K}_{21}\leq C\frac{C_{H_{g}}}{\sigma},\mathfrak{K}_{30}\leq C\frac{C_{H_{g}}}{\sigma n^{1/3}},\mathfrak{K}_{31}\leq C\frac{C_{H_{g}}}{\sigma n^{1/2}}.
\]
Moreover, by Holder's inequality, $m_{31}=E[|\nabla g(\mu)^{\top}(X-\mu)|^{3}]\geq E[|\nabla g(\mu)^{\top}(X-\mu)|^{2}]^{3/2}=\sigma^{3}$, which implies that $\mathfrak{K}_{0},\mathfrak{K}_{20}$ can be absorbed into $\mathfrak{K}_{1}m_{31}/\sigma^{3},\mathfrak{K}_{21}m_{31}^{1/3}/\sigma$ respectively by choosing a larger $C$. Therefore, (\ref{explicit_berry_esseen}) can be written as%
\begin{align*}
&  \sup_{x\in\mathbb{R}}\left\vert P(\sqrt{n}(g(\bar{X}_{n})-g(\mu))\leq x)-\Phi\left(  \frac{x}{\sigma}\right)  \right\vert \\
&  \leq C\left(  \frac{m_{31}}{\sqrt{n}\sigma^{3}}+\frac{C_{H_{g}}m_{31}^{1/3}tr(\Sigma)}{\sqrt{n}\sigma^{2}}+\frac{C_{H_{g}}m_{32}^{2/3}}{n^{5/6}\sigma}+\frac{C_{H_{g}}m_{31}^{1/3}m_{32}^{2/3}}{n\sigma^{2}}\right).
\end{align*}
This concludes our proof.
\end{proof}

Next we prove the finite-sample accuracy (\ref{general_bootstrap_CLT}) for the bootstrap estimator by extracting the dependence on problem parameters in Theorem 4.2 in \citet{zhilova2020nonclassical} and combining it with Lemma \ref{berry_esseen_delta_method}.

\begin{lemma}
\label{highdim_CLT_nonlinear} Suppose the conditions in Theorem \ref{CB_coverage_nonlinear} hold. Then with probability at least $1-6/n$ we have
\begin{align*}
&  \sup_{x\in\mathbb{R}}\left\vert P^{\ast}(\sqrt{n}(g(\bar{X}_{n}^{\ast})-g(\bar{X}_{n}))\leq x)-\Phi\left(  \frac{x}{\sigma}\right)  \right\vert \\
&  \leq C\left(  \frac{m_{31}}{\sqrt{n}\sigma^{3}}+\frac{C_{H_{g}}m_{31}^{1/3}tr(\Sigma)}{\sqrt{n}\sigma^{2}}+\frac{C_{H_{g}}m_{32}^{2/3}}{n^{5/6}\sigma}+\frac{C_{H_{g}}m_{31}^{1/3}m_{32}^{2/3}}{n\sigma^{2}}\right.\\
&  +\frac{C_{H_{g}}\tau^{2}}{C_{\nabla g}\sqrt{\lambda_{\Sigma}}}\left(1+\frac{\log n}{p}\right)  \sqrt{\frac{p}{n}}+\frac{||E[(X-\mu)^{3}]||}{\lambda_{\Sigma}^{3/2}}\frac{1}{\sqrt{n}}+\frac{\tau^{3}}{\lambda_{\Sigma}^{3/2}}\left(  1+\frac{\log n}{p}\right)  ^{3/2}\frac{1}{\sqrt{n}}\\
&  \left.  +\frac{\tau^{4}\sqrt{p}}{\lambda_{\Sigma}^{2}n}\left(  1+\frac{\log n}{p}\right)  ^{1/2}+\frac{\tau^{2}\sqrt{p}}{\lambda_{\Sigma}n}\left(1+\frac{\log n}{p}\right)  ^{1/2}+\frac{\tau^{3}\sqrt{p}}{\lambda_{\Sigma}^{3/2}n}\left(  1+\frac{\log n}{p}\right)  \right) \\
&  +C_{1}\left(  \frac{\tau^{4}(\log n)^{3/2}}{\lambda_{\Sigma}^{2}\sqrt{n}}+\frac{\tau^{2}(\log n)^{3/2}}{\lambda_{\Sigma}\sqrt{n}}+\frac{\tau^{3}}{\lambda_{\Sigma}^{3/2}\sqrt{n}}\left(  1+\frac{\log n}{p}\right)^{1/2}(\log n+\log p)\sqrt{\log n}\right)  ,
\end{align*}
where $m_{31},m_{32}$ and $\sigma^{2}$ are defined in Lemma \ref{berry_esseen_delta_method}, $C$ is a universal constant and $C_{1}$ only depends on $C_{X}$.
\end{lemma}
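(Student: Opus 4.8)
\emph{Proof proposal for Lemma \ref{highdim_CLT_nonlinear}.} The plan is to argue conditionally on the data $X_{1},\ldots,X_{n}$, under which the resample mean $\bar{X}_{n}^{\ast}$ is the sample mean of $n$ i.i.d. draws from the empirical distribution $\hat{P}_{X,n}$, with ``true'' mean $\bar{X}_{n}$ and ``true'' covariance $\hat{\Sigma}_{n}:=(1/n)\sum_{i=1}^{n}(X_{i}-\bar{X}_{n})(X_{i}-\bar{X}_{n})^{\top}$. Thus $g(\bar{X}_{n}^{\ast})$ is itself a function-of-mean statistic, and the first step is to invoke the non-classical multivariate Berry--Esseen bound for the bootstrap, Theorem 4.2 in \citet{zhilova2020nonclassical}, conditionally on the data. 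This produces, on a high-probability event over the data, a bound of the form
\[
\sup_{x\in\mathbb{R}}\left\vert P^{\ast}(\sqrt{n}(g(\bar{X}_{n}^{\ast})-g(\bar{X}_{n}))\leq x)-\Phi\left(  \frac{x}{\hat{\sigma}_{n}}\right)  \right\vert \leq\widehat{\mathcal{E}},
\]
where $\hat{\sigma}_{n}^{2}:=\nabla g(\bar{X}_{n})^{\top}\hat{\Sigma}_{n}\nabla g(\bar{X}_{n})$ and the right-hand side $\widehat{\mathcal{E}}$ is expressed through the empirical third moments $\widehat{m}_{31}=(1/n)\sum_{i}|\nabla g(\bar{X}_{n})^{\top}(X_{i}-\bar{X}_{n})|^{3}$, $\widehat{m}_{32}=(1/n)\sum_{i}\|X_{i}-\bar{X}_{n}\|^{3}$, the trace of $\hat{\Sigma}_{n}$, and the anti-concentration/smoothing term of Theorem 4.2 that carries the density bound $C_{X}$. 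Here I would trace the explicit dependence of that theorem's constants on $p$, $n$ and the model parameters, exactly in the spirit of how Lemma \ref{berry_esseen_delta_method} is extracted from \citet{pinelis2016optimal}.

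\medskip
The second step replaces the data-dependent quantities by their population analogues. Since $X$ is sub-Gaussian (Assumption \ref{nonlinear_assumption}), standard concentration yields, each with probability at least $1-C/n$: $\|\bar{X}_{n}-\mu\|\leq C\tau\sqrt{(p+\log n)/n}$; $\|\hat{\Sigma}_{n}-\Sigma\|\leq C\tau^{2}\sqrt{(p+\log n)/n}$ in operator norm; and concentration of $\widehat{m}_{31},\widehat{m}_{32}$ around $m_{31},m_{32}$ up to $O(1)$ factors and lower-order terms, which is what converts the empirical-moment bound of Step~1 into the population-moment terms in the statement. Using Assumption \ref{smoothness_assumption}, $\|\nabla g(\bar{X}_{n})-\nabla g(\mu)\|\leq C_{H_{g}}\|\bar{X}_{n}-\mu\|$, so combining the above controls $|\hat{\sigma}_{n}^{2}-\sigma^{2}|$. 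Dividing by $\sigma^{2}=\nabla g(\mu)^{\top}\Sigma\nabla g(\mu)\geq\lambda_{\Sigma}\|\nabla g(\mu)\|^{2}>\lambda_{\Sigma}C_{\nabla g}^{2}p$ --- this is precisely where the assumption $\|\nabla g(\mu)\|>C_{\nabla g}\sqrt{p}$ enters --- bounds the relative error $|\hat{\sigma}_{n}/\sigma-1|$ by a quantity of order $(C_{H_{g}}\tau^{2}/(C_{\nabla g}\sqrt{\lambda_{\Sigma}}))(1+(\log n)/p)\sqrt{p/n}$ plus lower-order pieces, matching the corresponding terms displayed in the lemma, and keeping $\hat{\sigma}_{n}$ bounded away from $0$. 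The third step then converts this relative variance error into Kolmogorov distance: since $|\Phi(x/\hat{\sigma}_{n})-\Phi(x/\sigma)|\leq(\sup_{t}|t\phi(t)|)\,|\hat{\sigma}_{n}/\sigma-1|$ uniformly in $x$, I can replace $\Phi(x/\hat{\sigma}_{n})$ by $\Phi(x/\sigma)$ at the cost already accounted for. Finally, a union bound over the finitely many bad events from Steps~1--2 (each of probability $O(1/n)$, arranged to total at most $6/n$) gives the claimed inequality, with the residual $C_{1}$ terms carrying extra $(\log n)^{3/2}$, $\log p$ and $C_{X}$ factors originating from the smoothing/anti-concentration term of Theorem 4.2 together with the $\log n,\log p$ overheads in the sub-Gaussian concentration bounds.

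\medskip
The hardest part will be the bookkeeping in Step~1: making the dependence of Theorem 4.2 of \citet{zhilova2020nonclassical} on $p$, $n$, $\tau$, $\lambda_{\Sigma}$, $C_{H_{g}}$ and $C_{X}$ fully explicit in the high-dimensional regime, while simultaneously controlling the high-dimensional third-moment fluctuations $\widehat{m}_{31}-m_{31}$ and $\widehat{m}_{32}-m_{32}$ and ensuring $\hat{\sigma}_{n}$ does not degenerate --- the last of which is exactly what forces the assumption $\|\nabla g(\mu)\|>C_{\nabla g}\sqrt{p}$ --- all within a probability budget of $6/n$. Once this lemma is in hand, together with Lemma \ref{berry_esseen_delta_method} for \eqref{general_Berry_Esseen} it feeds directly into Theorem \ref{general_CB_coverage} with $\beta=6/n$ and $\mathcal{E}_{2}=\widehat{\mathcal{E}}$, yielding Theorem \ref{CB_coverage_nonlinear}.
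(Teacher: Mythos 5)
Your overall decomposition is not the one the paper uses, and as written its central step rests on a statement that the cited theorem does not provide. The paper's proof is more direct: Theorem 4.2 of \citet{zhilova2020nonclassical} is invoked \emph{unconditionally}, giving, with probability at least $1-6e^{-x}$ over the data, a bound on $\sup_{t}|P(\sqrt{n}(g(\bar{X}_{n})-g(\mu))\leq t)-P^{\ast}(\sqrt{n}(g(\bar{X}_{n}^{\ast})-g(\bar{X}_{n}))\leq t)|$ expressed through \emph{population} quantities ($\tau$, $\lambda_{\Sigma}$, $C_{H_g}$, $C_{\nabla g}$, $||E[(X-\mu)^{3}]||$, the density bound $C_X$); one then sets $x=\log n$ (whence the $1-6/n$), and finishes by the triangle inequality with Lemma \ref{berry_esseen_delta_method} to recenter at $\Phi(x/\sigma)$. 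Your Step 1 instead attributes to Theorem 4.2 a conditional Berry--Esseen bound of the bootstrap law against $\Phi(x/\hat{\sigma}_{n})$ with empirical moments $\widehat{m}_{31},\widehat{m}_{32},tr(\hat{\Sigma}_{n})$; Theorem 4.2 is not of that form, so the key inequality you start from is unsupported as cited.

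One could in principle salvage your plan by applying the Pinelis-type bound (Lemma \ref{berry_esseen_delta_method}) conditionally to the empirical distribution and then doing the concentration and scale-perturbation steps you sketch, i.e.\ the chain $P^{\ast}\leftrightarrow\Phi(\cdot/\hat{\sigma}_{n})\leftrightarrow\Phi(\cdot/\sigma)$ instead of the paper's $P^{\ast}\leftrightarrow P\leftrightarrow\Phi(\cdot/\sigma)$. But then the entire high-dimensional bookkeeping that Zhilova's theorem supplies would have to be redone from scratch, and this is precisely the part you defer as ``the hardest part'': uniform control of $\widehat{m}_{31},\widehat{m}_{32},tr(\hat{\Sigma}_{n}),\hat{\sigma}_{n}$ within a $6/n$ probability budget, plus verification that the resulting bound is dominated by the stated right-hand side. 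Moreover, two features of the stated bound signal that your route does not naturally produce it: the term $||E[(X-\mu)^{3}]||\lambda_{\Sigma}^{-3/2}n^{-1/2}$ and the $C_{1}$ terms whose constant depends only on the density bound $C_{X}$ arise from Zhilova's non-classical Gaussian comparison with third-moment matching and her anti-concentration/smoothing argument; a conditional Berry--Esseen plus sub-Gaussian concentration has no step in which $C_{X}$ or the third-moment tensor norm would enter. So while your alternative decomposition is a legitimate idea and might even yield a differently shaped (possibly cleaner) bound, the proposal as it stands neither uses the tool correctly nor carries out the estimates needed to establish the specific inequality of the lemma.
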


\begin{proof}[Proof of Lemma \ref{highdim_CLT_nonlinear}:]
We use Theorem 4.2 in \citet{zhilova2020nonclassical} to prove this lemma. First, we verify the conditions of Theorem 4.2 with $K=3$. For $K=3$, by Remark 2.1, we can take $U_{i}\equiv0,Y_{i}-\mu=Z_{i}-\mu\sim N(0,\Sigma_{z})$ independent of $X_{1},\ldots,X_{n}$ with $\Sigma_{z}=\Sigma$ which satisfies (2.1) and (2.2) for approximating $X_{i}-\mu$ (since $X_{i}$ is not centered in Theorem 4.2, $Y_{i}$ should also be non-centered). In this case, we can see $C_{z}:=||\Sigma_{z}^{-1/2}||=||\Sigma^{-1/2}||=1/\sqrt{\lambda_{\Sigma}}$. Similarly, $C_{X}:=||\Sigma^{-1/2}||=1/\sqrt{\lambda_{\Sigma}}$. Other conditions about $f$ in Theorem 4.2 have already been assumed in the statement of this lemma. Thus, by Theorem 4.2, it holds with probability at least $1-6e^{-x}$ for $x>0$:%
\begin{align}
&  \sup_{t\in\mathbb{R}}|P(\sqrt{n}(g(\bar{X}_{n})-g(\mu))\leq t)-P^{\ast}(\sqrt{n}(g(\bar{X}_{n}^{\ast})-g(\bar{X}_{n}))\leq t)|\nonumber\\
&  \leq2C_{H_{g}}\sqrt{\frac{1}{\lambda_{\Sigma}}}\tau^{2}\left(1+2\sqrt{\frac{x}{p}}+\frac{2x}{p}\right)  \frac{1}{C_{\nabla g}}\sqrt{\frac{p}{n}}+C_{\mathcal{B},\text{i.i.d.}}\frac{1}{\lambda_{\Sigma}^{3/2}}C_{M,3}\frac{1}{\sqrt{n}}\nonumber\\
&  +2C_{\mathcal{B},\text{i.i.d.}}\left(  1+2\sqrt{\frac{x}{p}}+\frac{2x}{p}\right)  ^{3/2}\frac{1}{\lambda_{\Sigma}^{3/2}}\tau^{3}\frac{1}{\sqrt{n}}+R_{1,n,3}\label{accuracy_bootstrap1}%
\end{align}
where $C_{\mathcal{B},\text{i.i.d.}}>0$ is a constant only depending on $K$ and thus is a universal constant for $K=3$, $C_{M,3}$ is defined as%
\[
C_{M,3}=||E[(X-\mu)^{3}]||+||E[(Y_{1}-\mu)^{3}]||
\]
and $R_{1,n,3}$ is defined in (B.14) as%
\begin{equation}
R_{1,n,3}=\frac{\tau^{4}C_{x,2}}{\lambda_{\Sigma}^{2}\sqrt{2n}}+\frac{4\tau^{2}C_{x,2}}{\lambda_{\Sigma}\sqrt{2n}}+\frac{\tilde{C}_{\phi,2}\tau^{3}C_{x,3}}{2\lambda_{\Sigma}^{3/2}\sqrt{n}}.\label{remainder}%
\end{equation}
Since $Y_{1}-\mu\sim N(0,\Sigma_{z})$, the tensor power $(Y_{1}-\mu)^{3}$ has expectation zero and thus $C_{M,3}$ can be simplified as $C_{M,3}=||E[(X-\mu)^{3}]||$. $\tilde{C}_{\phi,2}$ in (\ref{remainder}) is defined in (A.13) and according to Remark A.1, it depends on the choice of $\phi(t)$ which is a $K=3$ times continuously differentiable smooth approximation of the indicator function $1\{t\leq0\}$. Once we fix such $\phi(t)$, $\tilde{C}_{\phi,2}$ is a universal constant. $C_{x,2}$ and $C_{x,3}$ in (\ref{remainder}) are defined in (B.27) of Theorem B.1 as%
\[
C_{x,2}=C_{1}((x+\log n)\vee\sqrt{x+\log n})\sqrt{2x}+2\sqrt{\frac{p}{n}}\left(  1+2\sqrt{\frac{x+\log n}{p}}+\frac{2(x+\log n)}{p}\right)  ^{1/2},
\]%
\begin{align*}
C_{x,3} &  =C_{1}\left(  1+2\sqrt{\frac{x+\log n+\log p}{p}}+\frac{2(x+\log n+\log p)}{p}\right)  ^{1/2}\\
& \times((x+\log n+\log p)\vee\sqrt{x+\log n+\log p})\sqrt{2x}\\
&  +3\sqrt{\frac{p}{n}}\left(  1+2\sqrt{\frac{x+\log n}{p}}+\frac{2(x+\log n)}{p}\right)  ,
\end{align*}
where $C_{1}$ is a constant only depending on the density bound $C_{X}$ based on the proof of Theorem B.1. Besides, since we assume $n\geq3$, we know that $x+\log n\geq\sqrt{x+\log n}$ and $x+\log n+\log p\geq\sqrt{x+\log n+\log p}$ holds for any $x>0$. Therefore, $C_{x,2}$ and $C_{x,3}$ can be simplified as%
\[
C_{x,2}=C_{1}(x+\log n)\sqrt{2x}+2\sqrt{\frac{p}{n}}\left(  1+2\sqrt{\frac{x+\log n}{p}}+\frac{2(x+\log n)}{p}\right)  ^{1/2},
\]%
\begin{align*}
C_{x,3} &  =C_{1}\left(  1+2\sqrt{\frac{x+\log n+\log p}{p}}+\frac{2(x+\log n+\log p)}{p}\right)  ^{1/2}(x+\log n+\log p)\sqrt{2x}\\
&  +3\sqrt{\frac{p}{n}}\left(  1+2\sqrt{\frac{x+\log n}{p}}+\frac{2(x+\log n)}{p}\right)  .
\end{align*}
Moreover, for any $y>0$, we always have $1+y\leq1+2\sqrt{y}+2y\leq4(1+y)$. Therefore, the remainder term (\ref{remainder}) can be bounded in a more compact way up to some constants as follows:%
\begin{align*}
R_{1,n,3} &  \leq C_{1}\left(  \frac{\tau^{4}}{\lambda_{\Sigma}^{2}\sqrt{n}}(x+\log n)\sqrt{x}+\frac{\tau^{2}}{\lambda_{\Sigma}\sqrt{n}}(x+\log n)\sqrt{x}\right.  \\
&  \left.  +\frac{\tau^{3}}{\lambda_{\Sigma}^{3/2}\sqrt{n}}\left(1+\frac{x+\log n+\log p}{p}\right)  ^{1/2}(x+\log n+\log p)\sqrt{x}\right)  \\
&  +C\left(  \frac{\tau^{4}\sqrt{p}}{\lambda_{\Sigma}^{2}n}\left(1+\frac{x+\log n}{p}\right)  ^{1/2}+\frac{\tau^{2}\sqrt{p}}{\lambda_{\Sigma}n}\left(  1+\frac{x+\log n}{p}\right)  ^{1/2}+\frac{\tau^{3}\sqrt{p}}{\lambda_{\Sigma}^{3/2}n}\left(  1+\frac{x+\log n}{p}\right)  \right)  ,
\end{align*}
where $C$ is a universal constant and $C_{1}$ only depends on $C_{X}$. Plugging it back to (\ref{accuracy_bootstrap1}), we can similarly write (\ref{accuracy_bootstrap1}) in a more compact way:
\begin{align*}
&  \sup_{t\in\mathbb{R}}|P(\sqrt{n}(g(\bar{X}_{n})-g(\mu))\leq t)-P^{\ast}(\sqrt{n}(g(\bar{X}_{n}^{\ast})-g(\bar{X}_{n}))\leq t)|\\
&  \leq C\left(  \frac{C_{H_{g}}\tau^{2}}{C_{\nabla g}\sqrt{\lambda_{\Sigma}}}\left(  1+\frac{x}{p}\right)  \sqrt{\frac{p}{n}}+\frac{||E[(X-\mu)^{3}]||}{\lambda_{\Sigma}^{3/2}}\frac{1}{\sqrt{n}}+\frac{\tau^{3}}{\lambda_{\Sigma}^{3/2}}\left(  1+\frac{x}{p}\right)  ^{3/2}\frac{1}{\sqrt{n}}\right.\\
&  \left.  +\frac{\tau^{4}\sqrt{p}}{\lambda_{\Sigma}^{2}n}\left(1+\frac{x+\log n}{p}\right)  ^{1/2}+\frac{\tau^{2}\sqrt{p}}{\lambda_{\Sigma}n}\left(  1+\frac{x+\log n}{p}\right)  ^{1/2}+\frac{\tau^{3}\sqrt{p}}{\lambda_{\Sigma}^{3/2}n}\left(  1+\frac{x+\log n}{p}\right)  \right)  \\
&  +C_{1}\left(  \frac{\tau^{4}}{\lambda_{\Sigma}^{2}\sqrt{n}}(x+\log n)\sqrt{x}+\frac{\tau^{2}}{\lambda_{\Sigma}\sqrt{n}}(x+\log n)\sqrt{x}\right.\\
&  \left.  +\frac{\tau^{3}}{\lambda_{\Sigma}^{3/2}\sqrt{n}}\left(1+\frac{x+\log n+\log p}{p}\right)  ^{1/2}(x+\log n+\log p)\sqrt{x}\right)  ,
\end{align*}
where $C$ is a universal constant and $C_{1}$ only depends on $C_{X}$. Now we choose $x=\log n$. Then with probability at least $1-6/n$, we have%
\begin{align*}
&  \sup_{t\in\mathbb{R}}|P(\sqrt{n}(g(\bar{X}_{n})-g(\mu))\leq t)-P^{\ast}(\sqrt{n}(g(\bar{X}_{n}^{\ast})-g(\bar{X}_{n}))\leq t)|\\
&  \leq C\left(  \frac{C_{H_{g}}\tau^{2}}{C_{\nabla g}\sqrt{\lambda_{\Sigma}}}\left(  1+\frac{\log n}{p}\right)  \sqrt{\frac{p}{n}}+\frac{||E[(X-\mu)^{3}]||}{\lambda_{\Sigma}^{3/2}}\frac{1}{\sqrt{n}}+\frac{\tau^{3}}{\lambda_{\Sigma}^{3/2}}\left(  1+\frac{\log n}{p}\right)  ^{3/2}\frac{1}{\sqrt{n}}\right.  \\
&  \left.  +\frac{\tau^{4}\sqrt{p}}{\lambda_{\Sigma}^{2}n}\left(  1+\frac{\log n}{p}\right)  ^{1/2}+\frac{\tau^{2}\sqrt{p}}{\lambda_{\Sigma}n}\left(1+\frac{\log n}{p}\right)  ^{1/2}+\frac{\tau^{3}\sqrt{p}}{\lambda_{\Sigma}^{3/2}n}\left(  1+\frac{\log n}{p}\right)  \right)  \\
&  +C_{1}\left(  \frac{\tau^{4}(\log n)^{3/2}}{\lambda_{\Sigma}^{2}\sqrt{n}}+\frac{\tau^{2}(\log n)^{3/2}}{\lambda_{\Sigma}\sqrt{n}}+\frac{\tau^{3}}{\lambda_{\Sigma}^{3/2}\sqrt{n}}\left(  1+\frac{\log n}{p}\right)^{1/2}(\log n+\log p)\sqrt{\log n}\right)  ,
\end{align*}
where $C$ is a universal constant, $C_{1}$ only depends on $C_{X}$ and we have absorbed $\log p/p$ into the constant term $1$ due to $\log p/p\leq1$.

Now we combine the above bound with Lemma \ref{berry_esseen_delta_method} in this paper. Since $X$ is sub-Gaussian, the moment conditions in Lemma \ref{berry_esseen_delta_method} hold. Moreover, since $\Sigma$ is positive definite and $||\nabla g(\mu)||>0$, $\sigma^{2}=\nabla g(\mu)^{\top}\Sigma\nabla g(\mu)>0$ is also satisfied. Therefore, by Lemma \ref{berry_esseen_delta_method} and triangular inequality, we obtain the desired bound with probability at least $1-6/n$
\begin{align*}
&  \sup_{x\in\mathbb{R}}\left\vert P^{\ast}(\sqrt{n}(g(\bar{X}_{n}^{\ast})-g(\bar{X}_{n}))\leq x)-\Phi\left(  \frac{x}{\sigma}\right)  \right\vert \\
&  \leq C\left(  \frac{m_{31}}{\sqrt{n}\sigma^{3}}+\frac{C_{H_{g}}m_{31}^{1/3}tr(\Sigma)}{\sqrt{n}\sigma^{2}}+\frac{C_{H_{g}}m_{32}^{2/3}}{n^{5/6}\sigma}+\frac{C_{H_{g}}m_{31}^{1/3}m_{32}^{2/3}}{n\sigma^{2}}\right.\\
&  +\frac{C_{H_{g}}\tau^{2}}{C_{\nabla g}\sqrt{\lambda_{\Sigma}}}\left(1+\frac{\log n}{p}\right)  \sqrt{\frac{p}{n}}+\frac{||E[(X-\mu)^{3}]||}{\lambda_{\Sigma}^{3/2}}\frac{1}{\sqrt{n}}+\frac{\tau^{3}}{\lambda_{\Sigma}^{3/2}}\left(  1+\frac{\log n}{p}\right)  ^{3/2}\frac{1}{\sqrt{n}}\\
&  \left.  +\frac{\tau^{4}\sqrt{p}}{\lambda_{\Sigma}^{2}n}\left(  1+\frac{\log n}{p}\right)  ^{1/2}+\frac{\tau^{2}\sqrt{p}}{\lambda_{\Sigma}n}\left(1+\frac{\log n}{p}\right)  ^{1/2}+\frac{\tau^{3}\sqrt{p}}{\lambda_{\Sigma}^{3/2}n}\left(  1+\frac{\log n}{p}\right)  \right) \\
&  +C_{1}\left(  \frac{\tau^{4}(\log n)^{3/2}}{\lambda_{\Sigma}^{2}\sqrt{n}}+\frac{\tau^{2}(\log n)^{3/2}}{\lambda_{\Sigma}\sqrt{n}}+\frac{\tau^{3}}{\lambda_{\Sigma}^{3/2}\sqrt{n}}\left(  1+\frac{\log n}{p}\right)^{1/2}(\log n+\log p)\sqrt{\log n}\right)  ,
\end{align*}
where $C$ is a universal constant and $C_{1}$ only depends on $C_{X}$.
\end{proof}

\begin{proof}[Proof of Theorem \ref{CB_coverage_nonlinear}:]
Plugging the bounds in Lemma \ref{berry_esseen_delta_method} and Lemma \ref{highdim_CLT_nonlinear} into Theorem \ref{general_CB_coverage}, we obtain the desired finite sample bound for the cheap bootstrap coverage accuracy. Besides, the error $\mathcal{E}_{1}$ can be absorbed in $\mathcal{E}_{2}$.
\end{proof}

We make a remark about the proof of Theorem \ref{CB_coverage_nonlinear}. 
In \citet{zhilova2020nonclassical}, it appears that a generalized version of the Hanson-Wright inequality (equation (B.32)) that allows dependent components is derived as a middle step of the proof of Theorem B.1. If the classical Hanson-Wright inequality (Theorem 1.1 in \citet{rudelson2013hanson}) is used in that proof instead, then our Theorem \ref{CB_coverage_nonlinear} would be changed accordingly to have an additional assumption that $X$ has independent components but no longer needs to be continuously distributed with a bounded density. In this case, $C_{1}$ can be taken as a universal constant.

\begin{proof}[Proof of Corollary \ref{concise_CB_coverage_nonlinear}:]
It suffices to show that if $\tau=O(1)$ and $||\nabla g(\mu)||^{2}=O(p)$, then $tr(\Sigma)=O(p)$, $m_{31}=O(p^{3/2})$ and $m_{32}=O(p^{3/2})$. In fact, if these orders hold, with other orders assumed in Corollary \ref{concise_CB_coverage_nonlinear}, we can easily get the desired order after absorbing the small order terms into large order terms.

Now let us prove $tr(\Sigma)=O(p)$, $m_{31}=O(p^{3/2})$ and $m_{32}=O(p^{3/2})$ provided $\tau=O(1)$ and $||\nabla g(\mu)||^{2}=O(p)$. Recall that $X$ is assumed to be sub-Gaussian, i.e.,
\begin{equation}
E[\exp(a^{\top}(X-\mu))]\leq\exp(||a||^{2}\tau^{2}/2),\forall a\in\mathbb{R}^{p}.\label{sub-Gaussianity}%
\end{equation}
for some $\tau^{2}>0$. Therefore, $X_{i}-\mu_{i},i=1,\ldots,p$ are sub-Gaussian random variables with sub-Gaussian norm $\tau$ up to a universal constant (see \citet{vershynin2018high} Section 2.5). For simplicity, we write $a\lesssim b$ if $a\leq Cb$ for a universal constant $C>0$. By Proposition 2.5.2 (ii) in \citet{vershynin2018high}, $E[|X_{i}-\mu_{i}|^{2}]=\Sigma_{ii}\lesssim\tau^{2}$ and $E[|X_{i}-\mu_{i}|^{4}]\lesssim\tau^{4}$. Therefore, we can see $tr(\Sigma)\lesssim\tau^{2}p=O(p)$. By H\"{o}lder's inequality,%
\begin{align*}
m_{32}  & =E[||X-\mu||^{3}]\leq E[||X-\mu||^{4}]^{3/4}=\left(  \sum_{i,j=1}^{p}E\left[  (X_{i}-\mu_{i})^{2}(X_{j}-\mu_{j})^{2}\right]  \right)^{3/4}\\
& \leq\left(  \sum_{i,j=1}^{p}\sqrt{E\left[  (X_{i}-\mu_{i})^{4}]E[(X_{j}-\mu_{j})^{4}\right]  }\right)  ^{3/4}\lesssim\left(  \sum_{i,j=1}^{p}\tau^{4}\right)  ^{3/4}=\tau^{3}p^{3/2}=O(p^{3/2}).
\end{align*}
Moreover, (\ref{sub-Gaussianity}) also implies that $\nabla g(\mu)^{\top}(X-\mu)$ is sub-Gaussian with sub-Gaussian norm $||\nabla g(\mu)||\tau$ up to a universal constant. By Proposition 2.5.2 (ii) in \citet{vershynin2018high}, we have $m_{31}=E[|\nabla g(\mu)^{\top}(X-\mu)|^{3}]\lesssim||\nabla g(\mu)||^{3}\tau^{3}=O(p^{3/2})$. This concludes our proof.
\end{proof}

\begin{proof}[Proof of Theorem \ref{CB_coverage_subexp}:]
We will apply Theorem \ref{general_CB_coverage} to prove this theorem. The finite-sample bound (\ref{general_Berry_Esseen}) can be obtained by the Berry-Esseen theorem:
\begin{equation}
\sup_{x\in\mathbb{R}}\left\vert P(\sqrt{n}(g_{1}^{\top}\bar{X}_{n}-g_{1}^{\top}\mu)\leq x)-\Phi\left(  \frac{x}{\sigma}\right)  \right\vert \leq\frac{CE[|g_{1}^{\top}(X-\mu)|^{3}]}{\sigma^{3}\sqrt{n}}.\label{1D_Berry_Esseen}%
\end{equation}
Next, we need to find a bound for%
\[
\sup_{x\in\mathbb{R}}\left\vert P^{\ast}(\sqrt{n}(g_{1}^{\top}\bar{X}_{n}^{\ast}-g_{1}^{\top}\bar{X}_{n})\leq x)-\Phi\left(  \frac{x}{\sigma}\right)  \right\vert .
\]
We consider the i.i.d. centered random variables $g_{1}^{\top}(X_{i}-\mu),i=1,\ldots,n$ which have non-degenerate variance $\sigma^{2}=g_{1}^{\top}\Sigma g_{1}>0$. We apply Theorem 2.5 in \citet{lopes2022central} to $g_{1}^{\top}(X_{i}-\mu)$'s by choosing $Y=N(0,g_{1}^{\top}\Sigma g_{1})$ such that $\varrho=1$, $\Delta=0$, $\omega_{1}=||g_{1}^{\top}(X-\mu)/\sigma||_{\psi_{1}}=||g_{1}^{\top}(X-\mu)||_{\psi_{1}}/\sigma$ and obtain that with probability at least $1-C/n$,%
\[
\sup_{x\in\mathbb{R}}|P^{\ast}(\sqrt{n}(g_{1}^{\top}\bar{X}_{n}^{\ast}-g_{1}^{\top}\bar{X}_{n})\leq x)-P(\sqrt{n}g_{1}^{\top}(\bar{X}_{n}-\mu)\leq x)|\leq\frac{C||g_{1}^{\top}(X-\mu)||_{\psi_{1}}^{4}\log^{11}(n)}{\sigma^{4}\sqrt{n}},
\]
where $C>0$ is a universal constant. By the triangle inequality and Berry-Esseen bound (\ref{1D_Berry_Esseen}), the following holds with probability at least $1-C/n$%
\begin{equation}
\sup_{x\in\mathbb{R}}\left\vert P^{\ast}(\sqrt{n}(g_{1}^{\top}\bar{X}_{n}^{\ast}-g_{1}^{\top}\bar{X}_{n})\leq x)-\Phi\left(  \frac{x}{\sigma}\right)  \right\vert \leq\frac{CE[|g_{1}^{\top}(X-\mu)|^{3}]}{\sigma^{3}\sqrt{n}}+\frac{C||g_{1}^{\top}(X-\mu)||_{\psi_{1}}^{4}\log^{11}(n)}{\sigma^{4}\sqrt{n}}, \label{bootstrap_CLT_subexp}%
\end{equation}
where $C>0$ is a universal constant.

By Theorem \ref{general_CB_coverage} and the bounds (\ref{1D_Berry_Esseen}) and (\ref{bootstrap_CLT_subexp}), we finally get%
\begin{align*}
& \left\vert P(g(\mu)\in[g(\bar{X}_{n})-t_{B,1-\alpha/2}S_{n,B},g(\bar{X}_{n})+t_{B,1-\alpha/2}S_{n,B}])  -(1-\alpha)\right\vert\\
&  \leq \frac{C}{n}+BC\left(  \frac{E[|g_{1}^{\top}(X-\mu)|^{3}]}{\sigma^{3}\sqrt{n}}+\frac{||g_{1}^{\top}(X-\mu)||_{\psi_{1}}^{4}\log^{11}(n)}{\sigma^{4}\sqrt{n}}\right)  +\frac{CE[|g_{1}^{\top}(X-\mu)|^{3}]}{\sigma^{3}\sqrt{n}},
\end{align*}
where $C$ is a universal constant. Furthermore, the last term can be absorbed into the second term by using a larger constant $2C$, which completes our proof.
\end{proof}

\begin{proof}[Proof of Theorem \ref{CB_coverage_moments}:]
We use Theorem \ref{general_CB_coverage} to prove this theorem. As in the proof of Theorem \ref{CB_coverage_subexp}, (\ref{general_Berry_Esseen}) is given by the Berry-Esseen theorem in (\ref{1D_Berry_Esseen}) and we only need to bound
\[
\sup_{x\in\mathbb{R}}\left\vert P^{\ast}(\sqrt{n}(g_{1}^{\top}\bar{X}_{n}^{\ast}-g_{1}^{\top}\bar{X}_{n})\leq x)-\Phi\left(  \frac{x}{\sigma}\right)  \right\vert .
\]
In this regard, we will use the results in \citet{chernozhukov2020nearly}. Note that their results only apply for at least three-dimensional random vectors so we consider the following setting. Suppose we have $3n$ i.i.d. random variables $g_{1}^{\top}(X_{ij}-\mu),i=1,\ldots,n,j=1,2,3$ where each $X_{ij}$ has the same distribution as $X_{1}$. Then we can construct $n$ three-dimensional i.i.d. random vectors as $\tilde{X}_{i}:=(g_{1}^{\top}(X_{i1}-\mu),g_{1}^{\top}(X_{i2}-\mu),g_{1}^{\top}(X_{i3}-\mu))^{\top},i=1,\ldots,n$ whose components have common variance $\sigma^{2}=g_{1}^{\top}\Sigma g_{1}$. Then we can see that conditions (E.3) and (M) are satisfied for%
\[
B_{n}=\max\left\{  3E[|g_{1}^{\top}(X-\mu)/\sigma|^{q}]^{1/q},\sqrt{E[|g_{1}^{\top}(X-\mu)/\sigma|^{4}]}\right\}  .
\]
Therefore, by Corollary 3.2 in \citet{chernozhukov2020nearly} ($\sigma_{\ast,W}=1$ since $\tilde{X}_{i}$ has independent components), we have with probability at least $1-1/\sqrt{n}$ that%
\begin{align*}
&  \sup_{A\in\mathcal{R}}|P^{\ast}(\sqrt{n}(\bar{\tilde{X}}_{n}^{\ast}-\bar{\tilde{X}}_{n})\in A)-P(N(0,\sigma^{2}I_{3\times3})\in A)|\\
&  \leq C_{1}B_{n}\left(  \frac{\log3\log n\sqrt{\log(3\sqrt{n})}}{\sqrt{n}}+\frac{\log3\sqrt{\log(3n)}}{n^{1/2-3/(2q)}}\right) \\
&  \leq C_{1}\max\left\{  3E[|g_{1}^{\top}(X-\mu)/\sigma|^{q}]^{1/q},\sqrt{E[|g_{1}^{\top}(X-\mu)/\sigma|^{4}]}\right\}  \left(  \frac{(\log n)^{3/2}}{\sqrt{n}}+\frac{\sqrt{\log n}}{n^{1/2-3/(2q)}}\right)  ,
\end{align*}
where $C_{1}>0$ denotes a constant depending only on $q$ which are different for its two appearances and $\mathcal{R}$ contains all the hyperrectangles in $\mathbb{R}^{3}$. In particular, if we only focus on the first component of $\tilde{X}_{i}$, that is, we choose $A=(-\infty,x]\times\mathbb{R\times R}$, we have with probability at least $1-1/\sqrt{n}$ that
\begin{align}
&  \sup_{x\in\mathbb{R}}\left\vert P^{\ast}(\sqrt{n}(g_{1}^{\top}\bar{X}_{n}^{\ast}-g_{1}^{\top}\bar{X}_{n})\leq x)-\Phi\left(  \frac{x}{\sigma}\right)  \right\vert \nonumber\\
&  \leq C_{1}\max\left\{  3E[|g_{1}^{\top}(X-\mu)/\sigma|^{q}]^{1/q},\sqrt{E[|g_{1}^{\top}(X-\mu)/\sigma|^{4}]}\right\}  \left(  \frac{(\log n)^{3/2}}{\sqrt{n}}+\frac{\sqrt{\log n}}{n^{1/2-3/(2q)}}\right)  .\label{bootstrap_CLT_moments}%
\end{align}

By Theorem \ref{general_CB_coverage} and the bounds (\ref{1D_Berry_Esseen}) and (\ref{bootstrap_CLT_moments}), we then obtain
\begin{align*}
& \left\vert P(g(\mu)\in[g(\bar{X}_{n})-t_{B,1-\alpha/2}S_{n,B},g(\bar{X}_{n})+t_{B,1-\alpha/2}S_{n,B}])  -(1-\alpha)\right\vert\\
&  \leq \frac{2}{\sqrt{n}}+B  C_{1}\max\left\{  E[|g_{1}^{\top}(X-\mu)/\sigma|^{q}]^{1/q},\sqrt{E[|g_{1}^{\top}(X-\mu)/\sigma|^{4}]}\right\}  \\
&    \times\left(  \frac{(\log n)^{3/2}}{\sqrt{n}}+\frac{\sqrt{\log n}}{n^{1/2-3/(2q)}}\right)  +\frac{CE[|g_{1}^{\top}(X-\mu)|^{3}]}{\sigma^{3}\sqrt{n}},
\end{align*}
where $C$ is a universal constant and $C_{1}$ is a constant depending only on $q$. Finally notice that%
\[
\frac{(\log n)^{3/2}}{\sqrt{n}}=o\left(  \frac{\sqrt{\log n}}{n^{1/2-3/(2q)}}\right)  ,
\]
and $E[|g_{1}^{\top}(X-\mu)|^{3}]/\sigma^{3}\geq1$. We can absorb $(\log n)^{3/2}/\sqrt{n}$ into $\sqrt{\log n}/n^{1/2-3/(2q)}$ and absorb $2/\sqrt{n}$ into $CE[|g_{1}^{\top}(X-\mu)|^{3}]/\sigma^{3}\sqrt{n}$ (with larger constants $C_{1}$ and $C$), which leads to%
\begin{align*}
& \left\vert P(g(\mu)\in\lbrack g(\bar{X}_{n})-t_{B,1-\alpha/2}S_{n,B},g(\bar{X}_{n})+t_{B,1-\alpha/2}S_{n,B}])-(1-\alpha)\right\vert \\
& \leq BC_{1}\max\left\{  E[|g_{1}^{\top}(X-\mu)/\sigma|^{q}]^{1/q},\sqrt{E[|g_{1}^{\top}(X-\mu)/\sigma|^{4}]}\right\}  \frac{\sqrt{\log n}}{n^{1/2-3/(2q)}}+\frac{CE[|g_{1}^{\top}(X-\mu)|^{3}]}{\sigma^{3}\sqrt{n}}.
\end{align*}

\end{proof}

\begin{proof}[Proof of Theorem \ref{general_CB_coverage2}:]
In view of Theorem \ref{general_CB_coverage}, it suffices to show
\begin{align*}
&  |P(\psi\in\lbrack\hat{\psi}_{n}-t_{B,1-\alpha/2}S_{n,B},\hat{\psi}_{n}+t_{B,1-\alpha/2}S_{n,B}])-(1-\alpha)|\\
&  \leq2\mathcal{E}_{1}+2\mathcal{E}_{4}+\sqrt{\frac{2}{\pi}}|t_{B,1-\alpha/2}-z_{1-\alpha/2}|+\sqrt{\frac{2}{\pi}}\frac{\mathcal{E}_{3}}{\sigma}t_{B,1-\alpha/2}.
\end{align*}
Then taking the minimum of the two bounds, we can get the desired result. We write $\mathcal{A}$ as the event that%
\begin{align*}
&  \left\vert \sqrt{\frac{1}{B}\sum_{b=1}^{B}(\sqrt{n}(\hat{\psi}_{n}^{\ast b}-\hat{\psi}_{n}))^{2}}-\sigma\right\vert \leq\mathcal{E}_{3}\\
\Leftrightarrow &  ~\sigma-\mathcal{E}_{3}\leq\sqrt{\frac{1}{B}\sum_{b=1}^{B}(\sqrt{n}(\hat{\psi}_{n}^{\ast b}-\hat{\psi}_{n}))^{2}}\leq\sigma+\mathcal{E}_{3}.
\end{align*}
Then we have $P(\mathcal{A}^{c})\leq\mathcal{E}_{4}$. Note that the confidence interval can be written as%
\begin{align*}
& \psi\in\lbrack\hat{\psi}_{n}-t_{B,1-\alpha/2}S_{n,B},\hat{\psi}_{n}+t_{B,1-\alpha/2}S_{n,B}]\\
\Leftrightarrow & \left\{  \psi:\left\vert \frac{\sqrt{n}(\hat{\psi}_{n}-\psi)}{\sqrt{\frac{1}{B}\sum_{b=1}^{B}(\sqrt{n}(\hat{\psi}_{n}^{\ast b}-\hat{\psi}_{n}))^{2}}}\right\vert \leq t_{B,1-\alpha/2}\right\}
\end{align*}
Therefore, we have%
\begin{align*}
&  P\left(  \frac{|\sqrt{n}(\hat{\psi}_{n}-\psi)|}{\sigma-\mathcal{E}_{3}}\leq t_{B,1-\alpha/2};\mathcal{A}\right)  \\
&  \leq P(  \psi\in\lbrack\hat{\psi}_{n}-t_{B,1-\alpha/2}S_{n,B},\hat{\psi}_{n}+t_{B,1-\alpha/2}S_{n,B}];\mathcal{A})  \\
&  \leq P\left(  \frac{|\sqrt{n}(\hat{\psi}_{n}-\psi)|}{\sigma+\mathcal{E}_{3}}\leq t_{B,1-\alpha/2};\mathcal{A}\right)  .
\end{align*}
For the upper bound, we can further bound it as follows%
\begin{align*}
&  P\left(  \frac{|\sqrt{n}(\hat{\psi}_{n}-\psi)|}{\sigma+\mathcal{E}_{3}}\leq t_{B,1-\alpha/2};\mathcal{A}\right)  \\
&  \leq P\left(  \frac{|\sqrt{n}(\hat{\psi}_{n}-\psi)|}{\sigma+\mathcal{E}_{3}}\leq t_{B,1-\alpha/2}\right)  \\
&  =P(-(\sigma+\mathcal{E}_{3})t_{B,1-\alpha/2}\leq\sqrt{n}(\hat{\psi}_{n}-\psi)\leq(\sigma+\mathcal{E}_{3})t_{B,1-\alpha/2}).
\end{align*}
By means of the finite-sample accuracy in (\ref{appendix:general_Berry_Esseen}), we have%
\begin{align*}
&  P(-(\sigma+\mathcal{E}_{3})t_{B,1-\alpha/2}\leq\sqrt{n}(\hat{\psi}_{n}-\psi)\leq(\sigma+\mathcal{E}_{3})t_{B,1-\alpha/2})\\
&  \leq\Phi\left(  \frac{\sigma+\mathcal{E}_{3}}{\sigma}t_{B,1-\alpha/2}\right)  -\Phi\left(  -\frac{\sigma+\mathcal{E}_{3}}{\sigma}t_{B,1-\alpha/2}\right)  +2\mathcal{E}_{1}\\
&  \leq\Phi(z_{1-\alpha/2})-\Phi(-z_{1-\alpha/2})+2\mathcal{E}_{1}+\sqrt{\frac{2}{\pi}}\left\vert \frac{\sigma+\mathcal{E}_{3}}{\sigma}t_{B,1-\alpha/2}-z_{1-\alpha/2}\right\vert \\
&  \leq1-\alpha+2\mathcal{E}_{1}+\sqrt{\frac{2}{\pi}}|t_{B,1-\alpha/2}-z_{1-\alpha/2}|+\sqrt{\frac{2}{\pi}}\frac{\mathcal{E}_{3}}{\sigma}t_{B,1-\alpha/2},
\end{align*}
where $z_{1-\alpha/2}$ is the $(1-\alpha/2)$-th quantile of the standard normal and the second inequality is due to the $1/\sqrt{2\pi}$-Lipschitz property of $\Phi(\cdot)$. For the lower bound, by a similar argument we can obtain%
\begin{align*}
&  P\left(  \frac{|\sqrt{n}(\hat{\psi}_{n}-\psi)|}{\sigma-\mathcal{E}_{3}}\leq t_{B,1-\alpha/2};\mathcal{A}\right)  \\
&  \geq P\left(  \frac{|\sqrt{n}(\hat{\psi}_{n}-\psi)|}{\sigma-\mathcal{E}_{3}}\leq t_{B,1-\alpha/2}\right)  -P(\mathcal{A}^{c})\\
&  \geq1-\alpha-2\mathcal{E}_{1}-\sqrt{\frac{2}{\pi}}|t_{B,1-\alpha/2}-z_{1-\alpha/2}|-\sqrt{\frac{2}{\pi}}\frac{\mathcal{E}_{3}}{\sigma}t_{B,1-\alpha/2}-P(\mathcal{A}^{c})\\
&  \geq1-\alpha-2\mathcal{E}_{1}-\sqrt{\frac{2}{\pi}}|t_{B,1-\alpha/2}-z_{1-\alpha/2}|-\sqrt{\frac{2}{\pi}}\frac{\mathcal{E}_{3}}{\sigma}t_{B,1-\alpha/2}-\mathcal{E}_{4}.
\end{align*}
Thus, by combining the upper and lower bounds, we have the following bound for the coverage error when $\mathcal{A}$ happens%
\begin{align*}
&  |P(\psi\in\lbrack\hat{\psi}_{n}-t_{B,1-\alpha/2}S_{n,B},\hat{\psi}_{n}+t_{B,1-\alpha/2}S_{n,B}];\mathcal{A})-(1-\alpha)|\\
&  \leq2\mathcal{E}_{1}+\mathcal{E}_{4}+\sqrt{\frac{2}{\pi}}|t_{B,1-\alpha/2}-z_{1-\alpha/2}|+\sqrt{\frac{2}{\pi}}\frac{\mathcal{E}_{3}}{\sigma}t_{B,1-\alpha/2}.
\end{align*}
Finally, the overall coverage error can be bounded by%
\begin{align*}
&  |P(\psi\in\lbrack\hat{\psi}_{n}-t_{B,1-\alpha/2}S_{n,B},\hat{\psi}_{n}+t_{B,1-\alpha/2}S_{n,B}])-(1-\alpha)|\\
&  \leq|P(  \psi\in\lbrack\hat{\psi}_{n}-t_{B,1-\alpha/2}S_{n,B},\hat{\psi}_{n}+t_{B,1-\alpha/2}S_{n,B}];\mathcal{A})  -(1-\alpha)|+P(\mathcal{A}^{c})\\
&  \leq2\mathcal{E}_{1}+2\mathcal{E}_{4}+\sqrt{\frac{2}{\pi}}|t_{B,1-\alpha/2}-z_{1-\alpha/2}|+\sqrt{\frac{2}{\pi}}\frac{\mathcal{E}_{3}}{\sigma}t_{B,1-\alpha/2},
\end{align*}
which, combined with Theorem \ref{general_CB_coverage}, gives us the desired bound.
\end{proof}

\begin{proof}[Proof of Theorem \ref{other_coverage_nonlinear}:]
By means of Lemmas \ref{berry_esseen_delta_method} and \ref{highdim_CLT_nonlinear}, this directly follows from Theorem \ref{general_other_coverages}. Besides, the error $\mathcal{E}_{1}$ can be absorbed into $\mathcal{E}_{2}$.
\end{proof}

\begin{proof}[Proof of Theorem \ref{other_coverage_subexp}:]
Plugging the bounds (\ref{1D_Berry_Esseen}) and (\ref{bootstrap_CLT_subexp}) into Theorem \ref{general_other_coverages}, we get the desired result.
\end{proof}

\begin{proof}[Proof of Theorem \ref{other_coverage_moments}:]
Plugging the bounds (\ref{1D_Berry_Esseen}) and (\ref{bootstrap_CLT_moments}) into Theorem \ref{general_other_coverages}, we get the desired result.
\end{proof}

\end{document}